\newcommand{\midarrow}{\tikz \draw[-triangle 90] (0,0) -- +(.1,0);}
\numberwithin{equation}{section}
\newtheorem{thm}{Theorem}[section]
\newtheorem{lem}[thm]{Lemma}
\newtheorem{rem}[thm]{Remark}
\newtheorem{dfn}[thm]{Definition}
\newcommand{\nn}{\nonumber}
\newcommand{\End}{\mathop{\rm End}}
\newcommand{\gq}{\mathfrak{q}}
\newcommand{\gd}{\mathfrak{d}}
\newcommand{\ind}{\varinjlim}
\newcommand{\tens}{\otimes}
\newcommand{\inlim}[1]{\mathop{\lim_{\longrightarrow}}_{#1}}
\newcommand{\cals}[1]{\mathcal{#1}}
\newcommand{\bb}[1]{\mathbb{#1}}
\newcommand{\comp}{\circ}
\begin{document}

	\begin{titlepage}
		
		\bigskip
		\begin{center}
			\Large{ \bf
				MacMahon KZ equation for Ding-Iohara-Miki algebra
			}
		\end{center}
		\bigskip
		\bigskip
		
		\begin{center}
			\large 
			Panupong Cheewaphutthisakun$^{a}$\footnote{panupong.cheewaphutthisakun@gmail.com}
			and
			Hiroaki Kanno$^{a,b,}$\footnote{kanno@math.nagoya-u.ac.jp} \\
			\bigskip
			\bigskip
			$^a${\small {\it Graduate School of Mathematics, Nagoya University,
					Nagoya, 464-8602, Japan}}\\
			$^b${\small {\it KMI, Nagoya University,
					Nagoya, 464-8602, Japan}} \\
		\end{center}
		{\vskip 6cm}
		{\small
			\begin{quote}
				\noindent {\textbf{\textit{Abstract}}.}
				We derive a generalized Knizhnik-Zamolodchikov equation for the correlation function of the intertwiners of
				the vector and the MacMahon representations of Ding-Iohara-Miki algebra. These intertwiners are cousins of 
				the refined topological vertex which is regarded as the intertwining operator of the Fock representation. 
				The shift of the spectral parameter of the intertwiners is generated by the operator
				which is constructed from the universal $R$ matrix. The solutions to the generalized KZ equation are
				factorized into the ratio of two point functions which are identified with generalizations of the Nekrasov
				factor for supersymmetric quiver gauge theories. 
		\end{quote}}

	\end{titlepage}
	
	%
	%
	%
	
	\setcounter{footnote}{0}
	\baselineskip=18pt
	
	
	\section{Introduction}

	Quantum Knizhnik-Zamolodchikov (q-KZ) equation is originally introduced for 
	the correlation function of the intertwiners of the quantum affine algebra \cite{FR} \cite{S} \cite{JM} \cite{EFK}.
	In \cite{qtKZdim} we have shown that one can develop a parallel story for
	Ding-Iohara-Miki (DIM) algebra \cite{DI} \cite{M}, which is regarded as 
	the quantum toroidal algebra of $\mathfrak{gl}_1$.
	The trivalent intertwining operator of the Fock representations of DIM algebra
	agrees with what is called the refined topological vertex \cite{AK2005} \cite{AK2008} \cite{IKV} \cite{AFS}. 
	Consequently, the correlation function of the intertwiners gives a building block of 
	the Nekrasov partition functions of the five dimensional supersymmetric 
	quiver gauge theories in accord with the AGT correspondence.

	Our main interest in the generalized q-KZ equation comes from
	the AGT correspondence and its generalizations \cite{AGT} \cite{MM} \cite{W} \cite{AY}, 
	which tell us that the conformal blocks of two dimensional conformal field theories 
	and the instanton partition functions, and hence the low-energy effective actions 
	of the supersymmetric gauge theories are related. This correspondence allows us 
	to study problems on one side from the perspective of the other side. 
	For example, the modular properties of the six dimensional  Seiberg-Witten theory with adjoint hypermultiplet
	can be explained by using the elliptic Knizhnik-Zamolodchikov equation of DIM algebra \cite{AKMMSZ2018}. 
	
	In the paper \cite{qtKZdim}, the authors gave a general method of deriving the generalized KZ equation 
	for the Fock intertwiners, and wrote out explicit solutions to the equation. 
	The general solutions are written in terms of a product of the propagators of
	the Fock intertwiners (the refined topological vertex). 
	Moreover, as expected from the AGT correspondence, the solution relates to the Nekrasov function
	which gives the instanton partition function. 
	We know from the papers \cite{FFJMM1} \cite{FJMM11}
	that the MacMahon modules are, in a sense, a generalization of the Fock modules. 
	More precisely, the MacMahon modules can be constructed as the inductive limit of an inductive system whose objects are Fock modules. 
	Moreover, we are able to express the MacMahon intertwiner in terms of the Fock intertwiners  \cite{MacMahon}. 
	Thus, it is natural to ask whether we can generalize the story in \cite{qtKZdim} to derive the generalized KZ 
	equation for the MacMahon representation of DIM algebra. In this paper we will show this is indeed the case.

	Let $\Psi_\lambda(u)$ be the intertwiner of the Fock representation of DIM algebra, where $\lambda$ is a Young diagram
	which labels a basis of the Fock representation and $u$ is the spectral parameter. 
	The $(q,t)$-KZ equation derived in \cite{qtKZdim} implies that the correlation function is expressed 
	as a product of two point functions like the Wick theorem for free fields. 
	One can check that the inverse of the two point function agrees with
	the Nekrasov factor which is a building block of five dimensional quiver gauge
	theories with the equivariant parameters $(q_1, q_2)= (e^{\epsilon_1}, e^{\epsilon_2})$, \cite{N} \cite{Nakajima} \cite{NY} \cite{FP}.
	Namely, if we define
	\begin{equation}
		N_{\lambda \mu}(u,v) 
		= \langle \Psi_\lambda(u) \Psi_\mu(v)  \rangle^{-1}, 
	\end{equation}
	then we can show that
	\begin{align}
		N_{\lambda \mu}(u,v) 
		&= \prod_{\genfrac{}{}{0pt}{}{\square \in \lambda}{\blacksquare \in \mu}} 
		\frac {\left(1-q_1 \frac{\chi_\square(u)}{\chi_\blacksquare(v)}\right)
			\left(1- q_2 \frac{\chi_\square(u)}{\chi_\blacksquare(v)}\right)}
		{\left(1-\frac{\chi_\square(u)}{\chi_\blacksquare(v)}\right)\left(1- q_1 q_2 \frac{\chi_\square(u)}{\chi_\blacksquare(v)}\right)}
		\cdot \prod_{\square \in \lambda}\left( 1- q_1 q_2 \frac{\chi_\square(u)}{v} \right) 
		\cdot \prod_{\blacksquare \in \mu}\left( 1- \frac{u}{\chi_\blacksquare(v)} \right) \nonumber \\
		&= \prod_{\square \in \lambda} \left(1- q_1^{-\ell_\mu(\square)} q_2^{a_\lambda(\square)+1}\frac{u}{v}\right)
		\prod_{\blacksquare \in \mu} \left( 1- q_1^{\ell_\lambda(\blacksquare)+1} q_2^{-a_\mu(\blacksquare)}\frac{u}{v} \right),
		\label{1.2}
	\end{align}
	where for $\square=(i,j) \in \lambda$
	\begin{equation}
		\chi_\square(u) := u \cdot q_1^{i-1} q_2^{j-1},
	\end{equation}
	and 
	\begin{equation}
		a_\mu(\square) = \mu_i - j, \qquad \ell_\mu(\square) = \mu^\vee_j -i.
	\end{equation}

	\subsection{Strategy of deriving q-KZ equation}
	
	Let $\Psi_\alpha(z)$ be the component of the DIM intertwiner in general. Namely $\alpha$ stands for 1d, 2d and 3d Young diagrams 
	for the vector, Fock and MacMahon representations, respectively. Our first task is to construct the shift operator 
	which generates the shift of the spectral parameter;
	\begin{equation}
		p^{z \partial_z} \Psi_\alpha(z) = \mathcal{T}_\alpha^{-}(\gq^{-3/2} z)  \cdot \Psi_\alpha(z) \cdot \mathcal{T}_\alpha^{+}(\gq^{-1/2} z).
	\end{equation}
	This is a quantum version of the Sugawara construction $L_{-1} \sim \sum_{n} :J_{n} J_{-n-1}:$ for the classical KZ 
	equation. One of the technical problems in deriving the generalized KZ equation is the construction of such a shift operator.
	In \cite{qtKZdim} and \cite{AKMMSZ2018}, the shift operator is identified with the composition of 
	the intertwiner and the dual intertwiner  by tuning their spectral parameters appropriately. 
	Remarkably the shift parameter is fixed to be $p=\gq^{-2}= q_1 q_2$. It turns out that up to the normalization,
	the same shift operator is obtained in terms of the universal $R$ matrix $\cals{R}_0$ of DIM algebra (see section \ref{universal});
	\begin{equation}
		\mathcal{T}^{+} = [\rho^V \otimes \rho^H] (\cals{R}_0), \qquad
		\mathcal{T}^{-} = [\rho^H \otimes \rho^V] (\cals{R}_0),
	\end{equation}
	where $\rho^V$ and $\rho^H$ are the vertical and the horizontal representations, respectively (see section~\ref{section3}). 
	
	The second ingredient is the commutation relations of the intertwiners and the shift operator;
	\begin{equation}\label{comR}
		\mathcal{T}_\alpha^{\pm}(z) \Psi_{\beta}(w) = \mathcal{R}_{\beta \alpha}\left( \gq^{n} \frac{z}{w} \right) 
		\Psi_{\beta}(w)  \mathcal{T}_\alpha^{\pm}(z),
	\end{equation}
	where $ \mathcal{R}_{\beta \alpha}$ is the (diagonal) $R$ matrix of DIM algebra and 
	the power $n$ of $\gq$ depends on $\mathcal{T}_\alpha^{\pm}$. We also have a similar commutation relation
	for the dual intertwiner $\Psi_{\beta}^{*}(w)$, where we should change the power $n$ appropriately. 
	The relation \eqref{comR} replaces  the commutation (OPE) relations of the vertex operators with the current algebra
	in the classical case. The $R$ matrix appearing in \eqref{comR} can be identified with
	\begin{equation}
		\mathcal{R} = [\rho^{V_1} \otimes \rho^{V_2}] (\cals{R}_0),
	\end{equation}
	which implies that $\mathcal{R}_{\beta \alpha}$ appears in the commutation relation 
	of the intertwiners \cite{anomaly} \cite{MacMahon}. 
	If we define the shift operator by the composition of $\Psi$ and $\Psi^*$,
	\eqref{comR} follows from this property.
	On the other hand, the commutation relation \eqref{comR} also comes from the fact 
	that both the shift operator and the $R$ matrix are derived from the same object; the universal $R$ matrix.

	After all these are prepared, the shift of the spectral parameter of the intertwiners is 
	achieved by an insertion of a pair of the shift operators $\mathcal{T}_\alpha^{\pm}(z)$. Then we can use the commutation 
	relations to move them to the left or the right most position of the strip of the intertwiners, where they act on the vacuum state.
	In this precess a product of $R$ matrices is produced. 
	Since the shift operators $\mathcal{T}_\alpha^{\pm}(z)$ have only the positive or the negative modes of free bosons,
	the vacuum is their eigenstate and consequently we obtain a difference equation for the correlation function of the intertwiners,
	which we identify with a generalized KZ equation for DIM algebra.

	\subsection{Solutions to the generalized KZ equation}
	
	For simplicity, let us assume the correlation function does not involve the dual intetwiners\footnote{For the case with 
		the dual intertwiners, see the main text for detail.}. In this case 
	our generalized  KZ equation for DIM algebra canonically takes the following form
	\begin{equation}\label{DIMKZ}
		\langle \Psi_{\alpha_1}(z_1) \cdots \Psi_{\alpha_k}(pz_k) \cdots \Psi_{\alpha_n}(z_n) \rangle
		= A_k \cdot \langle \Psi_{\alpha_1}(z_1) \cdots \Psi_{\alpha_k}(z_k) \cdots \Psi_{\alpha_n}(z_n) \rangle,
	\end{equation}
	where
	\begin{equation}
		A_k = \prod_{i < k}R_{\alpha_i \alpha_k} (p^{-1} z_i/z_k)^{-1} \cdot \prod_{k<j} R_{\alpha_k \alpha_j} (z_k/z_j).
	\end{equation}
	As mentioned before the shift parameter is $p = \gq^{-2} = q_1q_2$, which is fixed by the choice of 
	the (horizontal) Fock space. This is schematically the same form as the q-KZ equation for the quantum affine algebra
	\cite{FR} \cite{S} \cite{JM} \cite{EFK}.
	Note that, contrary to the case of the quantum affine algebra,  the $R$ matrices appearing in Eq.\eqref{DIMKZ} 
	is diagonal with respect to the label $\alpha$ of basis of the vertical (or evaluation) representation
	and their ordering does not matter. Due to such an abelian nature of the $R$ matrix, the solutions to our KZ equation
	are factorized into the ratio of fundamental building blocks which are given by two point functions.
	We find the two point functions of MacMahon KZ equation can be regarded as a generalization of the Nakrasov factor
	\eqref{1.2}.

	\subsection{Organization of material}
	
	The paper is organized as follows; Sections 2 -- 4 are preliminaries. 
	We start by reviewing the definition of DIM algebra and some of its properties in section \ref{section2}. 
	We introduce the universal $R$ matrix of DIM algebra. 
	Then, in section \ref{section3}, we concern with the representations of DIM algebra. 
	The coproduct of DIM algebra is crucial for defining the representation by the tensor product. 
	We show that we can construct the vertical Fock representation from the vector representations 
	which are the simplest vertical representation of DIM algebra. By applying a similar method, we can construct 
	the so-called MacMahon representation from the Fock representation. We also introduce the horizontal Fock representation 
	by using the deformed Heisenberg algebra. 
	In section \ref{intertwiners} we give explicit expressions of the trivalent intertwiners,
	where vertical representations are vector, Fock, and MacMahon representations, respectively. 
	We also discuss the dual intertwiners. 
	
	Before embarking on the task of derivation of the MacMahon KZ equation, 
	we discuss the vector KZ equation in section \ref{section6},
	since basic ideas are well-illustrated in a simplified setting. 
	Following the method of the paper \cite{qtKZdim}, we construct the shift operator $ \cals{T} $ 
	which plays an important role in the derivation. We also show 
	an alternative way to construct the shift operator $ \cals{T} $ from the universal $ R $-matrix,
	which is useful in our construction of the shift operator in MacMahon case. 
	
	Sections \ref{section8} and \ref{section9} are the main part of the paper. 
	In section \ref{section8} we use the method discussed in section \ref{section6} to construct the shift operator 
	and provide a derivation of the MacMahon KZ equation. 
	Finally, in section \ref{section9}, we solve the MacMahon KZ equation and show that the solution can be regarded as a generalized Nekrasov function. 
	Some of the techinical computation are presented in Appendix.

	\subsection{Definitions and useful formulas}
	
	Before ending the introduction, let us introduce the theta function and also a well-known lemma. 
	
	The theta function $ \theta_p(z) $ is defined by 
	\begin{align}\label{theta}
		\theta_p(z) :=  (p;p)_{\infty}(z;p)_{\infty} (pz^{-1};p)_{\infty}
		= (1-z) \prod_{k=1}^{\infty}(1-p^{k})(1-p^k z)(1-p^k z^{-1}),  
	\end{align}
	where we use the infinite product;
	\begin{equation}\label{q-factorial}
		\big(	x ; p	\big)_{\infty} := \prod\limits_{j\geq 0}(1-p^jx)  
		= \exp \left(- \sum_{k=1}^{\infty} \frac{x^k}{k(1-p^k)} \right).
	\end{equation}
	It is easy to check
	\begin{equation}\label{p-shift}
		\theta_p(p^n z) = (-z)^n p^{-\frac{1}{2} n(n-1)} \theta_p(z).
	\end{equation}
	
	Finally, we state a particular case of the Campbell-Baker-Hausdorff (CBH) formula.  
	\begin{lem}\label{L1}
		If $ [A,B] $ is central, i.e. $ [[A,B],A] = [[A,B],B] = 0 $, then 
		\begin{equation}\label{CBH}
			e^Ae^B = e^Be^Ae^{[A,B]}.
		\end{equation}
	\end{lem}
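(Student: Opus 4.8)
The plan is to reduce the identity to computing $e^{A}e^{B}$ and $e^{B}e^{A}$ each as a single exponential, which becomes possible precisely because the centrality hypothesis forces the Baker--Campbell--Hausdorff series to terminate at second order. I would introduce the one-parameter family $F(t) = e^{tA}e^{tB}$ and study the linear differential equation it satisfies; since $t$ may be regarded as a formal parameter, no question of analytic convergence arises.

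First I would differentiate, obtaining
\begin{equation*}
F'(t) = A\,e^{tA}e^{tB} + e^{tA}B\,e^{tB} = \bigl(A + e^{tA}B\,e^{-tA}\bigr)F(t).
\end{equation*}
The key step is to evaluate the conjugation $e^{tA}B\,e^{-tA}$. Expanding in the adjoint action gives $e^{tA}B\,e^{-tA} = \sum_{n\geq 0}\frac{t^{n}}{n!}(\mathrm{ad}_{A})^{n}B$, where $\mathrm{ad}_{A}(\,\cdot\,) = [A,\,\cdot\,]$. The hypothesis $[[A,B],A]=0$ kills $(\mathrm{ad}_{A})^{2}B = [A,[A,B]]$ and hence every term with $n\geq 2$, leaving the exact truncation $e^{tA}B\,e^{-tA} = B + t[A,B]$. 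Thus $F$ obeys $F'(t) = \bigl(A+B+t[A,B]\bigr)F(t)$ with $F(0)=\one$.

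Next I would integrate this equation. Because $[A,B]$ is central, the coefficient $A+B+t[A,B]$ commutes with itself evaluated at any two different values of $t$, so the time-ordering is trivial and the unique solution is the ordinary exponential of the time integral,
\begin{equation*}
F(t) = \exp\!\Bigl(t(A+B) + \tfrac{1}{2}t^{2}[A,B]\Bigr).
\end{equation*}
Setting $t=1$ gives $e^{A}e^{B} = \exp\!\bigl(A+B+\tfrac{1}{2}[A,B]\bigr)$. Interchanging the roles of $A$ and $B$ and using $[B,A]=-[A,B]$ yields $e^{B}e^{A} = \exp\!\bigl(A+B-\tfrac{1}{2}[A,B]\bigr)$. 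Since $A+B$ commutes with the central element $[A,B]$, the two exponentials combine as $e^{A}e^{B} = e^{[A,B]}\,e^{B}e^{A}$, and as $e^{[A,B]}$ is central it may be moved past $e^{B}e^{A}$ to give the asserted form $e^{A}e^{B}=e^{B}e^{A}e^{[A,B]}$.

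The only genuine subtlety, and the single place where the hypothesis is used, is the termination of the adjoint expansion in the second paragraph, together with the commutativity that lets me solve the differential equation without a path-ordered exponential in the third; both are immediate consequences of $[A,B]$ being central, so apart from these the argument is entirely routine.
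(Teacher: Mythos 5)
Your proof is correct. Note that the paper states Lemma \ref{L1} without proof, citing it only as a well-known special case of the Campbell--Baker--Hausdorff formula, so there is no argument in the text to compare against; your derivation via the one-parameter family $F(t)=e^{tA}e^{tB}$ is the standard one and all the key steps are sound: the truncation $e^{tA}Be^{-tA}=B+t[A,B]$ uses exactly the hypothesis $[[A,B],A]=0$, the commutativity $[A+B+t[A,B],\,A+B+s[A,B]]=0$ legitimately removes the time-ordering, and the final assembly of $e^Ae^B=\exp\bigl(A+B+\tfrac12[A,B]\bigr)$ with $e^Be^A=\exp\bigl(A+B-\tfrac12[A,B]\bigr)$ into \eqref{CBH} uses centrality of $[A,B]$ correctly.
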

	This formula is repeatedly used for computing the commutation relations of intertwining operators or the vertex operators.
	
	
	\section{Ding-Iohara-Miki algebra}
	\label{section2}
	In this section we provide a quick review of the definition of Ding-Iohara-Miki (DIM) algebra,
	which is the quantum toroidal algebra of $ \mathfrak{gl}_1$.
	The material in this and the next section is based on the papers \cite{FFJMM1}  \cite{MacMahon}. 
	
	\subsection{Definition of DIM algebra}
	The intriguing triality of DIM algebra becomes manifest by using the parameters $(q_1, q_2, q_3)$ with $q_1q_2q_3=1$.
	We assume they are generic in the sense that for any $ a, b, c \in \mathbb{Z} $,
	\begin{gather}
		q^a_1q^b_2q^c_3 = 1 \implies a = b =c .
	\end{gather}
	We may parametrize $(q_1, q_2, q_3)$ by $ \gq, \gd \in \mathbb{C} $ as follows;
	\begin{align}
		q_1 = \frac{\gd}{\gq},
		\quad
		q_2 = \frac{1}{\gd\gq},
		\quad 
		q_3 = \gq^2.
	\end{align}
	But note that this parametrization breaks the triality of DIM algebra. 
	We define the DIM algebra $U_{\gq,\gd}(\widehat{\widehat{\mathfrak{gl}}}_1)$ 
	to be the associative algebra with the generators 
	$ E_k, F_k, K_0^{\pm}, H_r \,\, ( k \in \mathbb{Z}, r \in \mathbb{Z}\backslash\{	0	\})$ 
	and $ C $. Introducing the generating functions (currents)\footnote{
		Compared with \cite{MacMahon}, the normalization of $H_{\pm r}$ is changed by the factor $(\gq - \gq^{-1})$.}; 
	\begin{align}
		E(z) = \sum_{k\in\mathbb{Z}} E_k z^{-k}, \quad
		F(z) = \sum_{k\in\mathbb{Z}} F_k z^{-k}, \quad
		K^{\pm}(z) = K_0^{\pm}\exp\left(
		\pm \sum_{r=1}^{\infty} H_{\pm r} z^{\mp r}
		\right), 
		\label{2.12}
	\end{align}
	and the structure function
	\begin{align}
		g(z,w) &:= (z-q_1w)(z-q_2w)(z-q_3w),
	\end{align}
	we can write the following defining relations : 
	\begin{align}
		C &\text{ is central}, \\
		K_0^{+}K_0^{-} &= 1 = K_0^{-}K_0^{+}, \\
		K^{\pm}(z) K^{\pm}(w) &= K^{\pm}(w) K^{\pm}(z), \label{KK1} \\
		\frac{g(C^{-1}z,w)}{g(Cz,w)} ~K^{-}(z) K^{+}(w) &= \frac{g(w,C^{-1}z)}{g(w,Cz)} ~K^{+}(w) K^{-}(z) \label{KK2}, \\
		g(z,w) ~K^{\pm} (C^{(1\mp 1)/2} z) E(w) &+ g(w,z) ~E(w) K^{\pm} (C^{(1\mp 1)/2} z) = 0, \\
		g(w,z) ~K^{\pm} (C^{(1\pm 1)/2} z) F(w) &+ g(z,w) ~F(w) K^{\pm} (C^{(1\pm 1)/2} z) = 0, \\
		\left[ E(z), F(w) \right] = \tilde{g} &\left( \delta (C \frac{w}{z}) ~K^{+}(z)  -  \delta (C \frac{z}{w}) ~K^{-}(w) \right), \label{EFcoef} \\
		g(z,w) ~E(z) E(w) &+ g(w,z) ~E(w) E(z) = 0, \\
		g(w,z) ~F(z) F(w) &+ g(z,w) ~F(w) F(z) =0,
	\end{align}
	where the multiplicative delta function is defined by 
	\begin{equation}
		\delta(z) = \sum_{n \in \mathbb{Z}} z^{n}.
	\end{equation}
	It is convenient to employ the notation
	\begin{align}
		\kappa_n &:= \prod_{i=1}^3 (q_i^{\frac{n}{2}} - q_i^{-\frac{n}{2}}) = 
		\prod_{i=1}^3 (q_i^n -1) = \prod_{i=1}^3 (1- q_i^{-n}) = \sum_{i=1}^3 (q_i^n - q_i^{-n}),
	\end{align}
	which satisfies $\kappa_{-n} = - \kappa_n$. 
	We choose the normalization of \eqref{EFcoef} as $\tilde{g} = \kappa_1^{-1}$.

	\begin{rem}
		One should bear in mind that actually there are the Serre's relations in the defining relations. 
		Since we do not use the Serre's relations in this paper, we do not write it out. 
	\end{rem}
	
	By investigating the defining relations we see that $ K_0^{\pm} $ is central\footnote{
		In \cite{FJMM} $ K_0^{-} $ is denoted by $C^\perp$.}. 
	Thus we conclude that the DIM algebra has two-dimensional center $(C,  K_0^{\pm})$. 
	Note that $K_0^{+}$ is the inverse of $ K_0^{-} $. We also see that \eqref{KK2} implies 
	\begin{align}
		[H_r, H_s]
		&= \delta_{r+s,0} \frac{\kappa_r}{r}(C^{r}-C^{-r}).
		\label{1.16} 
	\end{align}
	
	\subsection{Coproduct}
	
	The DIM algebra has a \say{coproduct} 
	$ \Delta : 	U_{\gq,\gd}(\widehat{\widehat{\mathfrak{gl}}}_1) \rightarrow
	U_{\gq,\gd}(\widehat{\widehat{\mathfrak{gl}}}_1) \otimes U_{\gq,\gd}(\widehat{\widehat{\mathfrak{gl}}}_1)	 $ defined by 
	\begin{align}
		\Delta(E(z)) &= E(z) \otimes 1 + K^{-} (C_1z) \otimes E(C_1z),  \label{cpE} \\
		\Delta(F(z)) &=  F(C_2 z) \otimes K^{+}(C_2z)  +  1 \otimes F(z), \label{cpF} \\
		\Delta(K^{+} (z)) &= K^{+}(z) \otimes K^{+}(C_1^{-1} z), \label{cpK+} \\
		\Delta(K^{-} (z)) &= K^{-}(C_2^{-1} z) \otimes K^{-}(z), \label{cpK-} \\
		\Delta(C) &= C \otimes C,
	\end{align}
	where $C_1 = C \otimes 1$ and $C_2 = 1 \otimes C$.
	
	We will see that the coproduct structure plays an important role in defining the tensor product representation of DIM. 
	However, we would like to give a remark here that $ \Delta $ is not a coproduct in the strict sense. 
	The reason is that it contains an infinite summation of elements which is not defined in general. 
	Hence, when we use $ \Delta $ for defining the action of tensor product representation, 
	we have to check every time that the action is well-defined. 
	
	\subsection{Grading operators}
	
	The DIM algebra has a bi-grading defined by 
	two grading operators $ d_1, d_2 $ which satisfy
	\begin{gather}
		\label{eq:22}
		[d_1 , E(z)] = -E(z),\qquad [d_1, F(z)] = F(z),\qquad [d_1, H(z)] = 0,\\
		[d_2, E(z)] = z \partial_z E(z),\qquad   [d_2, F(z)] = z \partial_z F(z),
		\qquad   [d_2, H(z)] = z \partial_z H(z).
	\end{gather}
	In \cite{FJMM} the degrees with respect to $d_1$ and $d_2$ are called principal degree and
	homogeneous degree, respectively. 
	Hence, the degree of the vertical spectral parameter $z$ counts the $d_2$-grading and
	the generators $E_k, F_k$ and $H_r$ have gradings $(-1,k), (1,k)$ and $(0,r)$, respectively.
	The generators with higher $d_1$-grading are given by multiple commutators of $E_k$ and $F_k$.
	Later we will introduce the horizontal spectral parameter $u$ which counts the $d_1$-grading.
	The $SL(2, \mathbb{Z})$ automorphism of the DIM algebra acts on this bi-grading \cite{M}.
	The grading operators are important in the expression of the universal $ R $-matrix 
	which is the main tool for deriving the generalized Knizhnik-Zamolodchikov equation for MacMahon intertwiner. 
	
	\subsection{Universal $R$ matrix}
	\label{universal}
	
	The quantum toroidal algebra allows a quantum (Drinfeld) double construction \cite{BS}. 
	Consequently it has a quasi-triangular structure, which implies the existence of a universal $R$ matrix. 
	According to \cite{FJMM} (see also \cite{Negut}), the universal $ R $-matrix $ \cals{R}$ of DIM algebra factorizes as follows;
	\begin{equation}
		\cals{R} = \gq^{c^\perp \otimes d^\perp + d^\perp \otimes c^\perp} \cals{R}_{+} \cals{R}_{0} \cals{R}_{-},
	\end{equation} 
	where $\gq^{c^\perp}= K_0^{-}$ and $d^\perp= d_1$ (the grading operator for the principal degree).
	What is most relevant in the present paper is the Cartan factor of $ \cals{R}$\footnote{The definition of $\kappa_n$ in this paper
		is $-\kappa_n$ in  \cite{FJMM}.};
	\begin{align}\label{universalR}
		\gq^{c^\perp \otimes d^\perp + d^\perp \otimes c^\perp} \cdot \cals{R}_0 
		= (K_0^{-} \otimes \gq^ {d_1})(\gq^{d_1} \otimes K_0^{-})\exp\big\{						
		\sum_{n=1}^{\infty}n \kappa_n h_{-n} \otimes h_{n}
		\big\},
	\end{align} 
	where $ h_{\pm n} $ is defined via $\kappa_n h_{\pm n} = \pm H_{\pm n}$.
	Note that the universal $R$ matrix we use in the present paper is $P\cals{R}$ with $P(a \otimes b) = b \otimes a$ in \cite{FJMM}.

	It is known that DIM algebra acts on the equivariant $ K $-theory ;
	$ \displaystyle{\oplus_{n=0}^\infty} K_G(\mathrm{Hilb}^n (\mathbb{C}^2))$ 
	of the Hilbert schemes of $ n $ points on $ \mathbb{C}^2 $.
	Hence, we can also define the $ R $-matrix by using the ideas coming from geometry \cite{MO}. 
	It is interesting to find that the $ R $-matrix $ [\rho^{V_1} \otimes \rho^{V_2}] (\mathcal{R}_0) $ featured in our generalized KZ equation \eqref{DIMKZ}
	coincides with the infinite slope $ R $-matrix $ R_\infty $ which is ubiquitous 
	in the Khoroshkin-Tolstoy factorization of the slope $ s $ $ R $-matrix introduced in \cite{OS}.
	As noticed in \cite{OS} $R_\infty$ corresponds to multiplication by a class of normal
	bundles in $K$-theory and is diagonal in the fixed point basis of the torus action.
	It is an intriguing challenge to work out a possible link of the quantum difference 
	equation \eqref{DIMKZ} to those in \cite{OS}.
	
	
	\section{Representations of DIM algebra}
	\label{section3}
	
	In this section we concern with the representation of DIM algebra. Note that in this paper we use the word representation and module interchangeably. 
	We start this section with the definition of level.
	\begin{dfn}[Level]
		Let $ V $ be a representation of DIM algebra $U_{\gq,\gd}(\widehat{\widehat{\mathfrak{gl}}}_1)$. 
		We say that the representation $ V $ is of level $ (\gamma_1,\gamma_2) \in \mathbb{C}^2 $ if $ C $ 
		and $ K_0^- $ act as constant multiplications by $ \gamma_1 $ and $ \gamma_2 $, respectively. 
	\end{dfn}
	We will call a representation with $ \gamma_1 = 1 $ \textbf{\textit{vertical representation}}.
	Note that the condition $ \gamma_1 = 1 $ is kept intact under taking the tensor product.
	By \eqref{1.16} $H_r$ are mutually commuting for the vertical representation. Hence the vertical representation 
	allows a basis which simultaneously diagonalizes the Cartan modes $H_r$. 
	There are three natural vertical representations of DIM algebra; vector, Fock and MacMahon representations.
	A basis which diagonalizes $H_r$ is labeled by 1d, 2d and 3d Young diagrams.

	\subsection{Vector representations}
	\label{subsecvectorrep}
	We start with the vector representation which is considered as the simplest vertical representation.
	Though DIM algebra is completely symmetric in parameters $(q_1, q_2, q_3)$, the symmetry is broken at the level of
	representation in general. In order to define the vector representation we have to choose one of three parameters as \say{prefered}. 
	Accordingly there are three kinds of vector representations $ \rho^{V^{(k)}}~(k=1,2,3)$ \cite{Z18}.
	For a parameter $ v \in \mathbb{C} $ we consider a vector space $ V (v) $ over $ \mathbb{C} $ with a basis
	$ \{	[v]_i \big| \, i \in \mathbb{Z} \}$.	
	The vector representation $V^{(k)}$ is defined as follows : 
	\begin{align}
		K^+(z) [v]_i &= \tilde{\psi}_k(q_k^i v/z) [v]_i, \label{vecKp} \\
		K^-(z) [v]_i &= \tilde{\psi}_k(q_k^{-i-1} z/v) [v]_i, \label{vecKm} \\
		E(z) [v]_i &= (1-q_k)^{-1} ~\delta(q_k^{i+1} v/z) [v]_{i+1}, \label{vecE} \\
		F(z) [v]_{i+1} &= (1-q_k^{-1})^{-1} ~\delta(q_k^{i+1} v/z) [v]_{i}. \label{vecF}
	\end{align}
	Here 
	\begin{align}
		\tilde{\psi}_1(z) = \frac{(1-q_2^{-1}z)(1-q_3^{-1}z)}{(1-z)(1-q_1 z)},
		\label{2.5}
	\end{align}
	and $\tilde{\psi}_2(z)$ and $\tilde{\psi}_3(z)$ are defined by the cyclic permutation of $(q_1, q_2, q_3)$.
	The original vector representation in \cite{FFJMM1} is $  \rho^{V^{(1)}}$. 
	In the following we will choose the same one and simply denote it by $  \rho^{V}$. 
	If we introduce $ \displaystyle \psi(z) = \gq\frac{1-q^{-1}_{3}z}{1-z} $, then we can express the $ \tilde{\psi}_1(z) $ as\footnote{
		Note that we can exchange $q_2$ and $q_3$.}
	\begin{gather}
		\tilde{\psi}_1(z) = \psi(z)\psi(q^{-1}_{2}z)^{-1}. 
	\end{gather}
	It is $ \psi(z) $ rather than $ \tilde{\psi}_1(z) $ which plays a main role later, for example see
	Eqs.\eqref{2.22} and \eqref{2.23} in section \ref{subsectionFock}. 
	It is straightforward to show that $ V(v) $ with the above action really forms an irreducible representation
	with level $ (1,1) $. 
	\begin{rem}
		We give two remarks here. 
		\begin{enumerate}
			\item In the equations \eqref{vecKp} -- \eqref{vecF}, $ \tilde{\psi} $ and $ \delta $ are formal power series. 
			\item The vector representation is not a highest-weight representation. 
		\end{enumerate}
	\end{rem}

	\subsection{Tensor product representation of the vector representation}
	
	Now perform the tensor product 
	\begin{gather}
		V(v_1) \otimes V(v_2) \otimes \cdots \otimes V(v_n).
	\end{gather}
	So we obtain a representation of $ \underbrace{U_{\gq,\gd}(\widehat{\widehat{\mathfrak{gl}}}_1) \otimes \cdots \otimes U_{\gq,\gd}(\widehat{\widehat{\mathfrak{gl}}}_1)}_{\text{ n times}} $. By using $ \Delta^{n-1} $, 
	\begin{align}
		\Delta^{n-1} (K^{\pm} (z)) &= \overbrace{K^{\pm} (z) \otimes \cdots \otimes K^{\pm}(z)}^{n}, \label{DNK}
		\\
		\Delta^{n-1} (E(z)) &= \sum_{k=1}^n \overbrace{K^{-}(z) \otimes \cdots K^{-}(z)}^{k-1}
		\otimes E(z) \otimes \overbrace{1 \otimes \cdots \otimes 1}^{n-k}, \label{2.8} \\
		\Delta^{n-1} (F(z)) &= \sum_{k=1}^n \overbrace{1 \otimes \cdots \otimes 1}^{k-1}
		\otimes F(z) \otimes \overbrace{K^{+}(z) \otimes \cdots \otimes K^{+}(z)}^{n-k} \label{DNF}.
	\end{align}
	we obtain a representation of $ U_{\gq,\gd}(\widehat{\widehat{\mathfrak{gl}}}_1) $.
	
	However, there are two issues here. First, we see from Eq.\eqref{2.8} that the action of $ E(z) $ results 
	in a product of the formal power series $ \tilde{\psi} $ and $ \delta $ which is not defined in general,
	since it contains an infinite summation of elements in we obtain a representation of $ U_{\gq,\gd}(\widehat{\widehat{\mathfrak{gl}}}_1) $. 
	The solution to this issue is to perform a regularization. More precisely, in the tensor product representation 
	we treat $ \tilde{\psi} $ to be a function instead of formal power series. 
	
	Second, after regularization, $ \tilde{\psi} $ is a function. From Eq.\eqref{2.5}, we see that it contains poles. 
	If we choose $ v_1, v_2, \dots, v_n  $ not carefully, then it might hit the poles generated from $ \tilde{\psi} $. 
	This means that the parameters $ v_1, v_2, \dots, v_n  $ can not be chosen arbitrarily. 
	Then, it is natural to ask what is a condition for $ v_1, v_2, \dots, v_n  $ to assure that the action defined by $ \Delta^{n-1} $ on
	$ V(v_1) \otimes V(v_2) \otimes \cdots \otimes V(v_n) $ does not hit the poles.
	An answer is given in the following lemma which we take from the paper \cite{FFJMM1}. 
	
	\begin{lem}
		If the parameters $ v_1, v_2, \dots, v_n  \in \mathbb{C} $ satisfy the condition that for any
		$ 1 \leq i < j \leq n $, 
		\begin{gather*}
			\frac{v_j}{v_i} \neq q^k_1 \quad \forall k \in \mathbb{Z}, 
		\end{gather*}
		then $ V(v_1) \otimes V(v_2) \otimes \cdots \otimes V(v_n) $ is a well-defined representation of DIM algebra. 
		\label{lemma2.3}
	\end{lem}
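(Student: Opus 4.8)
The plan is first to pin down what \say{well-defined} requires. After the regularization in which $\tilde{\psi}_1$ is read as a rational function rather than a formal series, the action of every generating current on a basis vector $[v_1]_{i_1} \otimes \cdots \otimes [v_n]_{i_n}$ must have a finite coefficient. The diagonal currents $\Delta^{n-1}(K^{\pm}(z))$ act through an honest product of rational functions of $z$, with no delta function involved, so they cause no obstruction. Hence I would reduce the statement to showing that the mixed terms of $\Delta^{n-1}(E(z))$ and $\Delta^{n-1}(F(z))$, in which a product of $\tilde{\psi}_1$ factors multiplies a delta function, are pole-free.

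Next I would treat the $k$-th summand of \eqref{2.8}. Using \eqref{vecKm} and \eqref{vecE}, its action on $[v_1]_{i_1} \otimes \cdots \otimes [v_n]_{i_n}$ carries the coefficient
$$\left( \prod_{j<k} \tilde{\psi}_1\!\left( q_1^{-i_j-1} z/v_j \right) \right) (1-q_1)^{-1}\, \delta\!\left( q_1^{i_k+1} v_k/z \right),$$
multiplying the vector in which $[v_k]_{i_k}$ is replaced by $[v_k]_{i_k+1}$. The delta function localizes $z = q_1^{i_k+1} v_k$, and by $f(z)\delta(a/z) = f(a)\delta(a/z)$ for a rational $f$ regular at $z=a$, finiteness of the coefficient amounts to finiteness of each factor $\tilde{\psi}_1$ there. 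Its argument evaluates to $q_1^{\,i_k - i_j} v_k/v_j$; since the poles of $\tilde{\psi}_1$ in \eqref{2.5} sit only at argument $1$ and $q_1^{-1}$ (the numerator contributes zeros at $q_2, q_3$, which are harmless), a pole can occur only if $v_k/v_j = q_1^{\,i_j - i_k}$ or $v_k/v_j = q_1^{\,i_j - i_k - 1}$. Both give $v_k/v_j \in q_1^{\mathbb{Z}}$ with $j<k$, exactly what the hypothesis $v_j/v_i \neq q_1^{k}$ forbids.

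Finally I would run the same bookkeeping for $\Delta^{n-1}(F(z))$ in \eqref{DNF}. Here \eqref{vecF} localizes $z = q_1^{i_k} v_k$, while the dangerous factors are the $K^{+}(z)$ acting on the indices $j>k$ to the \emph{right} of $F$; their arguments evaluate to $q_1^{\,i_j - i_k} v_j/v_k$, so a pole forces $v_j/v_k \in q_1^{\mathbb{Z}}$ with $k<j$, again excluded by the hypothesis. Thus in both currents the forbidden ratios are precisely those of the higher-index parameter over the lower-index one lying in $q_1^{\mathbb{Z}}$, so all matrix coefficients are finite and the defining relations are then inherited from $\Delta$ being an algebra map. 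The main obstacle is not any individual computation but the uniform pole bookkeeping: one must verify that across all summands $k$, all index choices $i_j$, and both $E$ and $F$, the only ratios that can produce a pole are powers of $q_1$ alone, so that the single hypothesis on $v_j/v_i$ suffices to rule out every case at once.
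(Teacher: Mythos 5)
Your proof is correct, but there is nothing in the paper to compare it against: the paper states this lemma without proof, quoting it directly from \cite{FFJMM1}. Your argument --- reducing well-definedness to the mixed terms of $\Delta^{n-1}(E(z))$ and $\Delta^{n-1}(F(z))$ in \eqref{2.8} and \eqref{DNF}, localizing $z$ with the delta function from \eqref{vecE} resp.\ \eqref{vecF}, and checking that the $\tilde{\psi}_1$ factors, whose poles sit only at arguments $1$ and $q_1^{-1}$, are regular at the localized point --- is exactly the standard argument underlying the cited result, and your identification of the dangerous ratios ($q_1^{i_k-i_j}v_k/v_j$ for the $K^-$ factors accompanying $E$, and $q_1^{i_j-i_k}v_j/v_k$ for the $K^+$ factors accompanying $F$, both forced into $q_1^{\mathbb{Z}}$) is accurate and covers all summands and index choices. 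The only point stated a little quickly is that the defining relations survive the regularization, but since the paper frames the issue purely as absence of poles, this is a fair level of detail.
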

	
	As a consequence of Lemma \ref{lemma2.3}, we get that the representation $ V^n(v) $ defined by
	\begin{gather}
		V^n(v) = V(v) \otimes V(q_2v) \otimes \cdots \otimes V(q^{n-1}_2v)
		\label{2.11}
	\end{gather}
	is well-defined.
	For each $ \lambda = (\lambda_1, \dots, \lambda_n) \in \mathbb{Z}^n $, 
	we define the state
	\begin{gather}
		|\lambda\rangle := [v]_{\lambda_1-1} \otimes [q_2v]_{\lambda_2 -1 } \otimes \cdots \otimes
		[q^{n-1}_{2}v]_{\lambda_n - 1}. 
	\end{gather}
	It is easy to see that $ \{	|\lambda\rangle	\big|\,	\lambda = (\lambda_1, \dots, \lambda_n) \in \mathbb{Z}^n		\} $ forms a basis of $ V^n(v) $. 
	
	\medskip 
	
	Note that the representation $ V^n(v) $ is reducible since it contains a nonzero proper DIM-submodule $ W^n(v) $ which is defined by 
	\begin{gather}
		W^n(v) = \operatorname{span}_{\mathbb{C}}\{	|\lambda\rangle 
		\big|\, \lambda \in \mathcal{P}^n
		\}.
	\end{gather}
	Here
	\begin{gather}
		\mathcal{P}^n = \{	\lambda = (\lambda_1,\cdots,\lambda_n) \in \mathbb{Z}^N \big| \lambda_1 \geq \cdots \geq \lambda_n							\}. 
	\end{gather}

	\subsection{Fock representation}
	\label{subsectionFock}
	
	Once we have constructed the $ n $-th tensor product representation $ V^n(v) $ for any
	$ n \in \mathbb{Z}^{\geq 1} $, it is natural to think about the representation of DIM algebra which collects 
	all of the $ V^n(v) \,\, (n \in \mathbb{Z}^{\geq 1}) $ together under an identification
	\begin{align}
		|(\lambda_1, \dots, \lambda_n) \rangle \sim |(\lambda_1, \dots, \lambda_n, 0) \rangle \sim
		|(\lambda_1, \dots, \lambda_n,0,0) \rangle \sim \cdots
		\label{2.14}
	\end{align}
	and so on. This recalls us the notion of inductive limit. Of course, to perform the inductive limit we need to have the inductive system in our hand first.

	Here we construct the inductive system. First we define $ W^{n,+}(v) $ to be the subspace
	\begin{gather}
		\operatorname{span}_{\mathbb{C}}\{	|\lambda\rangle 
		\big|\, \lambda \in \mathcal{P}^{n,+}
		\} \subseteq V^n(v)
		\label{2.16}
	\end{gather}
	where
	$ \mathcal{P}^{n,+} = \{	\lambda = (\lambda_1,\cdots,\lambda_n) \in \mathbb{Z}^N \big| \lambda_1 \geq \cdots \geq \lambda_n \geq 0							\} $. 
	Then we construct the inductive system of vector spaces. 
	\begin{gather}
		W^{1,+}(v) \xrightarrow{\tau_1} W^{2,+}(v) \xrightarrow{\tau_2} W^{3,+}(v) \cdots 
		\label{inductivesystem}
	\end{gather}
	where $ \tau_n : W^{n,+}(v) \rightarrow W^{n+1,+}(v) $ sends
	$ |(\lambda_1,\dots,\lambda_n)\rangle $ to $ |(\lambda_1,\dots,\lambda_n,0)\rangle $. Then, we take the inductive limit of the above inductive system. 
	\begin{gather}
		\mathcal{F}(v) = \inlim{} W^{n,+}(v).
		\label{inductivelimit}
	\end{gather}
	Being an inductive limit, the vector space $ \mathcal{F}(v) $ is spanned by
	$ \{ |\lambda\rangle 	\big|\, \lambda \in \mathcal{P}^+			\} $ where
	\begin{gather}
		\mathcal{P}^+ =
		\{ \lambda = (\lambda_1,\lambda_2,\dots)
		\big|\, \lambda_i \geq \lambda_{i+1}, \lambda_i \in \mathbb{Z},
		\lambda_i = 0 \text{ for sufficiently large } i
		\}. 
	\end{gather}

	The next task is to endow the structure of  $U_{\gq,\gd}(\widehat{\widehat{\mathfrak{gl}}}_1)$-module on $ \mathcal{F}(v) $.
	From Eq.\eqref{inductivelimit}, we know that $ \cals{F}(v) $ is the disjoint union of $ W^{n,+}(v) \,\, (n \in \mathbb{Z}^{\geq 1}) $ modulo the identification $ \sim $ in \eqref{2.14}. Accordingly, it is natural to use the $U_{\gq,\gd}(\widehat{\widehat{\mathfrak{gl}}}_1)$-module structure on each $ W^{n,+} $
	to construct the $U_{\gq,\gd}(\widehat{\widehat{\mathfrak{gl}}}_1)$-module structure on $ \cals{F}(v) $. More precisely, if a partition $ \lambda = (\lambda_1,\lambda_2,\dots,\lambda_l,0,0,\dots) $, then we regard $ |\lambda\rangle $ as an element of $ V^l(v) $ and use the $U_{\gq,\gd}(\widehat{\widehat{\mathfrak{gl}}}_1)$-module structure of $ V^l(v) $. 
	
	Unfortunately, the above action 
	is \textbf{not compatible} with the inductive system \eqref{inductivesystem}. That is, if we denote the representation $ V^l $ by 
	$ \rho^l : U_{\gq,\gd}(\widehat{\widehat{\mathfrak{gl}}}_1) \rightarrow \End V^l $ and 
	if $ \lambda = (\lambda_1,\lambda_2,\dots,\lambda_l,0,0,\dots) $, then we find that 
	\begin{gather}
		\rho^l\big(	 K^{\pm}(z)				\big)|(\lambda_1,\lambda_2,\dots,\lambda_l)\rangle
		\neq \rho^{l+1}\big(	 K^{\pm}(z)				\big)|(\lambda_1,\lambda_2,\dots,\lambda_l,0)\rangle,
	\end{gather}
	and 
	\begin{gather}
		\rho^l\big(	 F(z)				\big)|(\lambda_1,\lambda_2,\dots,\lambda_l)\rangle
		\neq \rho^{l+1}\big(	 F(z)				\big)|(\lambda_1,\lambda_2,\dots,\lambda_l,0)\rangle. 
	\end{gather}

	To make it compatible with the inductive system, we need to define the action
	$ \bar{\rho} : U_{\gq,\gd}(\widehat{\widehat{\mathfrak{gl}}}_1)
	\rightarrow \operatorname{End}\big(	\cals{F}(v)		\big)  $ as follows : for each partition $ \lambda = (\lambda_1,\lambda_2,\dots,\lambda_l,0,0,\dots) $ we determine  
	\begin{gather}
		\bar{\rho}\big(		K^{\pm}(z)		\big)|\lambda\rangle
		= \beta^{\pm}_{l+1}( (v/z)^{\pm}		)\cdot\rho^{l+1}\big(		K^{\pm}(z)		\big)|\lambda\rangle,
		\notag \\
		\bar{\rho}\big(	E(z)			\big)|\lambda\rangle = \rho^{l+1}\big(	E(z)		\big)|\lambda\rangle,
		\notag \\
		\bar{\rho}\big(		F(z)	\big)|\lambda\rangle = \beta^{+}_{l+1}(v/z)\rho^{l+1}\big(
		F(z)
		\big)|\lambda\rangle, 
		\label{2.22}
	\end{gather}
	where
	\begin{gather}
		\beta^+_l(v/z) = \psi(	q^{-1}_{1}q^{l-1}_2v/z		)^{-1}
		\quad \quad
		\beta^-_l(z/v) = \psi(	q^{-l}_{2}z/v			). 
		\label{2.23}
	\end{gather}
	
	This action is well-defined and compatible with the structure of inductive system. This is assured by the following theorem. 
	\begin{thm}
		Let $ \lambda = (\lambda_1,\lambda_2,\dots,\lambda_l,0,0,\dots) $ be a partition. Then, for any $ k \in \mathbb{Z}^{\geq 1} $
		\begin{gather}
			\bar{\rho}\big(		K^{\pm}(z)		\big)|\lambda\rangle
			= \beta^{\pm}_{l+k}( (v/z)^{\pm}		)\cdot\rho^{l+k}\big(		K^{\pm}(z)		\big)|\lambda\rangle,
			\notag \\
			\bar{\rho}\big(	E(z)			\big)|\lambda\rangle = \rho^{l+k}\big(	E(z)		\big)|\lambda\rangle,
			\notag \\
			\bar{\rho}\big(		F(z)	\big)|\lambda\rangle = \beta^{+}_{l+k}(v/z)\rho^{l+k}\big(
			F(z)
			\big)|\lambda\rangle.
		\end{gather}
	\end{thm}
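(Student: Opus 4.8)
The plan is to prove the three identities simultaneously by induction on $k$, the case $k=1$ being precisely the definition \eqref{2.22} of $\bar\rho$. For the inductive step it suffices to show that the level-$(l+k)$ and the level-$(l+k+1)$ expressions on the right-hand sides agree; equivalently, I would analyze how the action of each current changes when the state $|\lambda\rangle$, regarded inside $V^{l+k}(v)$, is pushed one step further along the inductive system \eqref{inductivesystem} into $V^{l+k+1}(v)$. Concretely this amounts to appending one extra empty tensor factor $[q_2^{l+k}v]_{-1}$ at the tail and reading off, from the iterated coproducts \eqref{DNK}, \eqref{2.8}, \eqref{DNF} together with the vector action \eqref{vecKp}--\eqref{vecF}, the contribution of this new slot. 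The decisive structural fact is that for $k\ge1$ all tail slots of $|\lambda\rangle$ from position $l+1$ onward are empty (index $-1$), so $E(z)$ or $F(z)$ acting near the tail always meet an empty neighbour; this is what drives the cancellations below.

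For the Cartan currents $K^{\pm}(z)$, which act diagonally, appending the empty slot simply multiplies the eigenvalue by the $K^{\pm}(z)$-eigenvalue on $[q_2^{l+k}v]_{-1}$, namely $\tilde\psi_1(q_1^{-1}q_2^{l+k}v/z)$ for $K^{+}$ and $\tilde\psi_1(z/(q_2^{l+k}v))$ for $K^{-}$. I would then check that this extra factor is exactly absorbed by the shift $\beta^{\pm}_{l+k}\to\beta^{\pm}_{l+k+1}$. This is a short telescoping computation based on the factorization $\tilde\psi_1(z)=\psi(z)\psi(q_2^{-1}z)^{-1}$ and the definitions \eqref{2.23}; for instance for $K^{+}$ one gets $\beta^{+}_{l+k+1}(v/z)\,\tilde\psi_1(q_1^{-1}q_2^{l+k}v/z)=\beta^{+}_{l+k}(v/z)$, and symmetrically for $K^{-}$.

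For $E(z)$ the coproduct \eqref{2.8} places $K^{-}(z)$ to the left of $E(z)$ and identity operators to the right, so every term already present at level $l+k$ acquires only an extra identity on the new slot and is therefore unchanged. The single genuinely new term is the one in which $E(z)$ acts on the new empty slot; there the delta function $\delta(q_2^{l+k}v/z)$ pins $z=q_2^{l+k}v$, while the accompanying $K^{-}(z)$ acting on the adjacent empty slot at position $l+k$ contributes the factor $\tilde\psi_1(z/(q_2^{l+k-1}v))$, which equals $\tilde\psi_1(q_2)=0$ at that value of $z$ because of the $(1-q_2^{-1}z)$ factor in \eqref{2.5}. Hence the new term vanishes, the two levels agree with no extra scalar, and the stated formula for $\bar\rho(E(z))$ follows.

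The current $F(z)$ is the delicate case and I expect it to be the main obstacle, since \eqref{DNF} places $K^{+}(z)$ to the right of $F(z)$, so every surviving old term $j\le l$ now carries an extra factor $\tilde\psi_1(q_1^{-1}q_2^{l+k}v/z)$ from $K^{+}(z)$ on the new empty slot. Here three distinct cancellations must be combined. First, the bulk terms in which $F(z)$ acts on an interior empty slot ($l<j\le l+k$) vanish, exactly as in the $E$ case, because the adjacent right $K^{+}(z)$ on the empty slot $j+1$ hits $\tilde\psi_1(q_2)=0$. Second, the two boundary terms in which $F(z)$ lowers the last empty slot to index $-2$ — the $j=l+k$ term at level $l+k$ and the $j=l+k+1$ term at level $l+k+1$ — are each annihilated by the overall prefactor $\beta^{+}$: since $\psi(z)^{-1}$ is proportional to $(1-z)$, the zero of $\beta^{+}_{l+k}(v/z)$ (resp. $\beta^{+}_{l+k+1}(v/z)$) sits precisely at the delta-function support $z=q_1^{-1}q_2^{l+k-1}v$ (resp. $z=q_1^{-1}q_2^{l+k}v$) of that term. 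Once these vanish, the only survivors are the good terms $j\le l$, and matching the two levels reduces once more to the telescoping identity $\beta^{+}_{l+k}(v/z)=\beta^{+}_{l+k+1}(v/z)\,\tilde\psi_1(q_1^{-1}q_2^{l+k}v/z)$ already used for the Cartan currents. The real work is the bookkeeping: I must verify that each delta-function support lines up with the relevant zero of $\tilde\psi_1$ or of $\beta^{+}$, so that these several cancellations occur simultaneously, and it is precisely the regularization turning $\tilde\psi_1$ into an honest function that makes these pointwise evaluations legitimate.
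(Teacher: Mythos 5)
The paper states this theorem without proof --- it is imported from the construction of \cite{FFJMM1} --- so there is no in-paper argument to compare against. Judged on its own terms, your proposal is correct: the induction on $k$, with the base case being the definition \eqref{2.22}, is the standard verification, and every cancellation you invoke actually occurs.

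Concretely, the three mechanisms you identify all check out. The telescoping identity
\begin{equation*}
\beta^{+}_{m+1}(v/z)\,\tilde\psi_1(q_1^{-1}q_2^{m}v/z)
=\psi(q_1^{-1}q_2^{m}v/z)^{-1}\,\psi(q_1^{-1}q_2^{m}v/z)\,\psi(q_1^{-1}q_2^{m-1}v/z)^{-1}
=\beta^{+}_{m}(v/z)
\end{equation*}
follows from $\tilde\psi_1(z)=\psi(z)\psi(q_2^{-1}z)^{-1}$ together with $q_1=(q_2q_3)^{-1}$, and its mirror image handles $\beta^{-}$. For $E(z)$, the only genuinely new term at level $l+k+1$ carries $\delta(q_2^{l+k}v/z)$ times the $K^{-}(z)$ eigenvalue $\tilde\psi_1(z/(q_2^{l+k-1}v))$ on the adjacent empty slot, which the delta pins to $\tilde\psi_1(q_2)=0$; since $\tilde\psi_1$ is regular at $q_2$ for generic parameters, this pointwise evaluation is legitimate after the regularization, as you note. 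For $F(z)$, the boundary term at level $m$ has support $z=q_1^{-1}q_2^{m-1}v$, which is exactly the zero of $\beta^{+}_{m}(v/z)=\psi(q_1^{-1}q_2^{m-1}v/z)^{-1}$, so the formal-series identity $(1-x)\delta(x)=0$ kills it; your proof makes transparent that this coincidence is precisely the reason the prefactor $\beta^{+}$ appears in the definition \eqref{2.22} of $\bar\rho(F(z))$. One bookkeeping remark: the range of \say{bulk} empty slots killed by $\tilde\psi_1(q_2)=0$ differs by one between the two levels ($l<j\le l+k-1$ at level $l+k$, versus $l<j\le l+k$ at level $l+k+1$), the last empty slot at each level being instead the boundary term; your wording slightly conflates the two ranges, but since you treat the boundary terms separately by the $\beta^{+}$ mechanism, every term is accounted for exactly once and the argument is sound.
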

	Thus, we have equipped a $U_{\gq,\gd}(\widehat{\widehat{\mathfrak{gl}}}_1)$-module structure to $ \cals{F}(v) $. We call $ \big(	\bar{\rho} : U_{\gq,\gd}(\widehat{\widehat{\mathfrak{gl}}}_1) \rightarrow \cals{F}(v)	, \cals{F}(v)		\big) $ the Fock representation of DIM algebra. 
	The Fock representation is irreducible representation of level $ (1,\gq) $ \cite{FFJMM1} \cite{MacMahon}.

	\subsection{MacMahon representation}
	Up to now we have constructed the Fock representation from the vector representations. 
	Now we construct the MacMahon representation from the Fock representations by a similar process 
	used in section \ref{subsectionFock}.
	
	Analogous to Eq.\eqref{2.11}, we define 
	\begin{gather}
		\cals{F}^n(v) = \cals{F}(v) \tens \cals{F}(q_3v) \tens \cdots \tens \cals{F}(q_3^{n-1}v). 
	\end{gather}
	It is clear that the subset 
	\begin{gather}
		\{ |\Lambda\rangle := |\Lambda^{(1)}\rangle \tens \cdots \tens |\Lambda^{(n)}\rangle
		\big|\, \Lambda^{(1)},\dots,\Lambda^{(n)} \in \cals{P}^+
		\}
	\end{gather}
	forms a basis of $ \cals{F}^n(v) $. Note that sometimes we write $ |\Lambda^{(1)},\dots,\Lambda^{(n)}\rangle $ for $ |\Lambda^{(1)}\rangle \tens \cdots \tens |\Lambda^{(n)}\rangle  $.

	Next we construct the subspace $ \cals{M}^n(v) $ of $ \cals{F}^n(v) $ spanned by plane partitions, i.e. the
	$|\Lambda^{(1)},\dots,\Lambda^{(n)}\rangle $ which satisfy the condition 
	\begin{align}
		\Lambda^{(k)}_i \geq \Lambda^{(k+1)}_i \quad \quad \forall i, k.
	\end{align}
	This step is analogous to Eq.\eqref{2.16}. Then, we collect these subspaces $ \cals{M}^n(v) $ 
	together by running $ n $ over $ \mathbb{Z}^{\geq 1} $, and then form an inductive system as \eqref{inductivesystem}.

	Now we would like to endow a structure of $U_{\gq,\gd}(\widehat{\widehat{\mathfrak{gl}}}_1)$-module to the vector space $ \ind_n \cals{M}^n(v)  $ 
	which is the inductive limit of the above-mentioned inductive system. 
	As usual we first try to use the action as in the representation $ \cals{F}^n(v) $. 
	Again the problem arises : the action is not compatible with the structure of the inductive system. So we need a modification. 
	
	To make it compatible with the inductive system, we need to define the action $ \bar{\varrho} : U_{\gq,\gd}(\widehat{\widehat{\mathfrak{gl}}}_1)
	\rightarrow \operatorname{End}\big(	\ind_n \cals{M}^n(v) 	\big) $ as follows :
	for each 3d partition $ \Lambda = \big(
	\Lambda^{(1)},\dots,\Lambda^{(l)}, 0, 0, \dots 
	\big)  $
	we determine 
	\begin{gather}
		\bar{\varrho}\big(	K^{\pm}(z)				\big)|\Lambda\rangle = 
		\gamma^{\pm}_{l+1}(	(v/z)^{\pm}		)\cdot \varrho^{l+1}\big(	K^{\pm}(z)				\big)|\Lambda\rangle,
		\notag \\ 
		\bar{\varrho}\big(	E(z)					\big)|\Lambda\rangle = \varrho^{l+1}\big(
		E(z)
		\big)|\Lambda\rangle,
		\notag \\ 
		\bar{\varrho}\big(	F(z)					\big)|\Lambda\rangle = \gamma^+_{l+1}(v/z)\cdot 
		\varrho^{l+1}\big(		F(z)		\big)|\Lambda\rangle,
	\end{gather}
	where
	\begin{gather}
		\gamma^+_{l}(v/z) =
		\frac{	K^{-1/2}(1-Kv/z)			}{\gq^{-l}(1- q_3^lv/z)},
		\quad \quad
		\gamma^-_{l}(z/v) = 
		\frac{	K^{1/2}	\big( 1 - \frac{z}{Kv}				\big)				}{\gq^l \big(	1 - \frac{z}{q^l_3v}			\big)}. 
	\end{gather}
	Here $ K $ is an arbitrary parameter which arises by the prescription of making the inductive system consistent \cite{FFJMM1}.
	The appearance of this continuous parameter $K$ is one of the most intriguing aspects of the MacMahon representation.

	It is straightforward to check that the action $ \bar{\varrho} $ is well-defined and compatible with the structure of the inductive system. 
	Thus we have equipped the $U_{\gq,\gd}(\widehat{\widehat{\mathfrak{gl}}}_1)$-module structure to $ \ind_n \cals{M}^n(v) $. We call
	$ \big(	\bar{\varrho}	: 	U_{\gq,\gd}(\widehat{\widehat{\mathfrak{gl}}}_1) \rightarrow \End(\ind_n \cals{M}^n(v) ), 
	\ind_n \cals{M}^n(v) 			\big) $ the MacMahon module.
	It is easy to see that the MacMahon module has level $ (1,K^{1/2}) $. Because of this, from now on we denote it by $ \cals{M}(K;v) $. 
	As was shown in \cite{FJMM11} \cite{BFM}, when $K=q_1^a q_2^b q_3^c,~a,b,c \in \mathbb{Z}_{\geq 0}$, 
	the MacMahon representation is reducible. We can reduce the representation space to that 
	spanned by plane partitons with a \say{pit} at $(a+1, b+1, c+1)$.  
	In particular for $K=q_3$ the \say{pit} is at $(1,1,2)$ and the plane partition has only the first layer. 
	Thus the representation is reduced to the Fock representation.

	In the cases of the vector and the Fock representations, one of the parameters $(q_1, q_2, q_3)$ of the DIM algebra plays 
	a distinguished role. Consequently there are three kinds of the vector and the Fock representations\footnote{The conventional
		choice, which we follow in this paper, is to choose $q_1$ for the vertical representation and $q_3$ for the Fock representation.
		For the existence of the intertwiner we have to choose the parameters of the vector and the Fock representations differently.}.
	On the other hand in the MacMahon representation three parameters are treated as an equal footing and the triality of 
	DIM algebra is manifest.

	\subsection{Horizontal Fock representation}
	\label{subsechorizontal}
	Up to now we only discussed about vertical representations. In this subsection we construct a horizontal representation of DIM algebra
	with $ C \neq 1$. As in the case of the vector and the vertical Fock representations, there are three kinds of horizontal Fock representations,
	for which $C = q_k^{\frac{1}{2}}$. In the following we fix $k$ and write $\gq=q_k^{\frac{1}{2}}$. The conventional horizontal representation
	corresponds to the choice $k=3$. When $C =  \gq$, from Eq.\eqref{1.16} we obtain the Heisenberg algebra
	\begin{align}\label{Heisenberg}
		[H_r, H_s]
		&= \delta_{r+s,0} \frac{\kappa_r}{r} (\gq^r -  \gq^{-r}) .
	\end{align}
	
	There is a well-known representation of Heisenberg algebra whose representation space $ \cals{F} $ is the Fock space 
	of a (deformed) free boson with a creation operators $ a_{-r} \,\, ( r > 0) $ acting on the vacuum state $ |0\rangle $. 
	The vacuum state is annihilated by the annihilation operator $ a_r \,\, (r>0) $. 
	Namely, we define the horizontal Fock representation by\footnote{Since the only difference of $H_r$ and $a_r$ is the normalization,
		we will use them interchangeably in this paper.}
	\begin{align}
		\rho_H^{(\gq,1)} (H_r) := 
		\frac{\kappa_r}{r} a_r , \qquad
		[ a_r, a_s ] = \delta_{r+s,0}\frac{r}{\kappa_r} (\gq^{r} - \gq^{-r}).
	\end{align}

	Now we try to endow $ \cals{F} $
	with the $U_{\gq,\gd}(\widehat{\widehat{\mathfrak{gl}}}_1)$-module structure. 
	We construct the action $ \rho_H^{(\gq,1)} : U_{\gq,\gd}(\widehat{\widehat{\mathfrak{gl}}}_1) 
	\rightarrow \End\cals{F}
	$ by 
	\begin{gather}\label{Vertex}
		\rho_H^{(\gq,1)}\big(		E(z)		\big) = V^{-}(\gq^{-1/2}z)V^+(\gq^{1/2}z),
		\notag \\ 
		\rho_H^{(\gq,1)}\big(	F(z)			\big) = V^{-}(\gq^{1/2}z)^{-1}V^+(	\gq^{-1/2}z		)^{-1},
		\notag \\ 
		\rho_H^{(\gq,1)}\big(	K^{\pm}(\gq^{1/2}z)				\big) = V^{\pm}(\gq^{\pm 1}z)V^{\pm}(\gq^{\mp 1}z)^{-1},
	\end{gather}
	where
	\begin{gather}
		V^{\pm}(z) = \exp\bigg(
		\mp\sum_{r=1}^{\infty}\frac{1}{r} \frac{\kappa_r}{\gq^{r} - \gq^{-r}}  a_{\pm r} z^{\mp r}
		\bigg).
	\end{gather}
	It is straightforward to show that
	$ \big(	\rho_H^{(\gq,1)}, \cals{F}	\big) $ is a representation of DIM algebra of level $ (\gq,1) $.

	Now for a given $ \gamma_2 \in \mathbb{C}\backslash\{0\} $ we try to construct 
	a horizontal representation of DIM algebra of level $ (\gq,\gamma_2) $ by generalizing the above formula.
	While the representation space is the same as above, say $ \cals{F} $, the action is modified to be 
	$ \rho_H^{(\gq,\gamma_2)} : U_{\gq,\gd}(\widehat{\widehat{\mathfrak{gl}}}_1) 
	\rightarrow \End\cals{F}
	$ by multiplying the zero mode factors $ \textbf{e}(z), \textbf{f}(z), \textbf{k}^{\pm}(z) $;
	\begin{gather}
		\rho_H^{(\gq,\gamma_2)}\big(		E(z)		\big) = \rho_H^{(\gq,1)}\big(		E(z)		\big) \textbf{e}(z),
		\notag \\ 
		\rho_H^{(\gq,\gamma_2)}\big(	F(z)			\big) = \rho_H^{(\gq,1)}\big(	F(z)			\big)\textbf{f}(z),
		\notag \\ 
		\rho_H^{(\gq,\gamma_2)}\big(	K^{\pm}(\gq^{1/2}z)				\big) 
		= \rho_H^{(\gq, 1)}\big(	K^{\pm}(\gq^{1/2}z)				\big) \textbf{k}^{\pm}(z)
	\end{gather}
	where $ \textbf{e}(z), \textbf{f}(z), \textbf{k}^{\pm}(z) $ should satisfy the condition 
	\begin{align}
		\textbf{e}(z)\textbf{f}(\gq^{\mp 1}z) = \textbf{k}^{\pm}(\gq^{\mp 1/2}z),
		\label{3.34}
	\end{align}
	and 
	\begin{align}
		\textbf{k}^{\pm}(z) = \textbf{k}^{\pm}(0) = \gamma_2^{\mp 1}. 
		\label{3.35}
	\end{align}
	It is straightforward to check that
	$ \big( \rho_H^{(\gq,\gamma_2)}, \cals{F} \big) $ is a representation of DIM algebra of level $ (\gq,\gamma_2) $.

	When $ \gamma_2 = \gq^N $, by introducing the spectral parameter $ u \in \mathbb{C} $ of 
	the horizontal representation, we can solve the conditions \eqref{3.34} and \eqref{3.35} as follows;
	\begin{gather}\label{levelN}
		\textbf{e}(z) = \bigg(	\frac{\gq}{z}		\bigg)^Nu, \quad \quad
		\textbf{f}(z) = \bigg(		\frac{\gq}{z}			\bigg)^{-N}u^{-1}, \quad \quad 
		\textbf{k}^{\pm}(z) = \gq^{\mp N},
	\end{gather}
	which was originally employed in \cite{AFS}.
	We denote the representation with this choice of $ \textbf{e}(z), \textbf{f}(z)$ and $\textbf{k}^{\pm}(z) $ 
	by $ \cals{F}^{(\gq,\gq^N)}_{u} $, which is a representation of DIM algebra of level $ (\gq,\gq^N) $. 
	Recall that the MacMahon representation has a continuous parameter $K$ for the second level $\gamma_2$.
	Hence we can no longer use \eqref{levelN} when the MacMahon representation is involved. This is the reason
	why we need mode general zero mode algebra of $ \textbf{e}(z), \textbf{f}(z), \textbf{k}^{\pm}(z) $.

	
	\section{Trivalent intertwiners}
	\label{intertwiners}
	
	In this section, we give a quick review on the trivalent intertwiners and the dual intertwiners. A trivalent intertwiner 
	$ \Psi : \cals{V} \tens \cals{H} \rightarrow \cals{H}^{\prime} $, where $ \cals{V} $ is a vertical representation 
	and $ \cals{H},\cals{H}^{\prime} $ are horizontal representations, is determined by the intertwining relation 
	\begin{gather}
		a\Psi = \Psi\Delta(a) \quad \quad \forall a \in U_{\gq,\gd}(\widehat{\widehat{\mathfrak{gl}}}_1).
		\label{intertwiningrelation}
	\end{gather}

	Taking a basis $ \{	\alpha		\} $ of $ \cals{V} $ which diagonalizes $H_r$,
	we define the $ \alpha $-component of the intertwiner $ \Psi_{\alpha}(\bullet) $ by 
	\begin{gather}
		\Psi_{\alpha}(\bullet) = \Psi(		\alpha \tens \bullet						) : \cals{H} \rightarrow \cals{H}^{\prime}.
	\end{gather}
	By using $ \Psi_{\alpha}(\bullet) $ we can express the intertwining relation \eqref{intertwiningrelation} as 
	\begin{align}
		K^{+}(z) \Psi_\alpha &= (\alpha \vert  K^{+}(z) \vert \alpha) ~\Psi_\alpha K^{+}(z), 
		\label{intK+}
		\\
		K^{-}(\gq z) \Psi_\alpha &= (\alpha \vert  K^{-}(z) \vert \alpha) ~\Psi_\alpha K^{-}(\gq z),
		\label{intK-}
		\\
		E(z) \Psi_\alpha &= \sum_{\beta} (\beta \vert E(z) \vert \alpha) ~\Psi_\beta + (\alpha \vert K^{-}(z) \vert \alpha) ~\Psi_\alpha E(z), 
		\label{4.5}
		\\
		F(z) \Psi_\alpha &= \sum_{\beta} (\beta \vert F(\gq z) \vert \alpha) ~\Psi_\beta K^{+} (\gq z) + \Psi_\alpha F(z).
		\label{4.6}
	\end{align}

	When $ \cals{V} $ is the vertical vector/Fock/MacMahon modules, we call the intertwiner 
	$ \Psi : \cals{V} \tens \cals{H} \rightarrow \cals{H}^{\prime} $ vector/Fock/MacMahon intertwiner, respectively. 
	In the following we will summarize explicit expressions of these intertwiners which were derived in \cite{MacMahon}.
	We also determine the dual intertwiners (see \eqref{dintK+} -- \eqref{dintF} for the definition) which were not given in \cite{MacMahon}.

	\subsection{Vector Intertwiner}
	\label{vecint}
	In terms of the basis  $ \{	[v]_n		\big|	n \in \mathbb{Z}			\} $ of $ V(v) $ (see section \ref{subsecvectorrep}), 
	we define the components of the intertwiner by 
	\begin{align}\label{vcomponent}
		\mathbb{I}_n(v)(\bullet) = \mathbb{I}([v]_{n-1} \otimes \bullet) \colon \mathcal{H} \to \mathcal{H}'.
	\end{align}
	A solution to the intertwining relations is 
	\begin{align}\label{vintertwin}
		\mathbb{I}_n(v)
		&= z_n \tilde{\mathbb{I}}_n(v), \quad
		\tilde{\mathbb{I}}_n(v) = \tilde{\mathbb{I}}_0(q_1^n v), \quad
		n \in \mathbb{Z}, \\
		\tilde{\mathbb{I}}_0(v)
		&= \exp\left( -\sum_{r=1}^{\infty}\frac{H_{-r}}{\gq^r - \gq^{-r}} \frac{\gq^{-r/2}}{1-q_1^r} v^r \right)
		\exp\left( \sum_{r=1}^{\infty}\frac{H_{r}}{\gq^r - \gq^{-r}} \frac{\gq^{-r/2}}{1-q_1^{-r}} v^{-r} \right).
	\end{align}
	Here 
	\begin{align}
		z_0(v) = 1, \quad
		z_n(v) = q_2^{-n} \prod_{j=1}^{n} \mathbf{e}(q_1^{j-1}v) \quad (n>0), \quad
		z_n(v) = q_2^{-n} \prod_{j=n}^{-1} \mathbf{e}(q_1^{j}v)^{-1} \quad (n<0).
	\end{align}
	For the existence of the intertwiner 
	the zero modes $ \textbf{e}(z), \textbf{f}(z), \textbf{k}^{\pm}(z) $ of $ \cals{H} $ and 
	$ \textbf{e}^{\prime}(z), \textbf{f}^{\prime}(z), \textbf{k}^{\pm\prime}(z) $ of $ \cals{H}^{\prime} $ have to be
	related by\footnote{Recall that
		$ \textbf{k}^{\pm\prime}(z) = \gamma^{\mp 1}_2 $. 
	} \cite{FHHSY};
	\begin{gather}
		\gamma^{\prime}_2 = \gamma_2, \quad \quad
		\textbf{e}^{\prime}(z) = q^{-1}_{2}\textbf{e}(z), \quad \quad 
		\textbf{f}^{\prime}(z) = q_2\textbf{f}(z).
		\label{vectorshift}
	\end{gather}
	From now on we are going to write $ \gamma $ for $ \gamma_2 $ for the sake of convenience.

	\subsection{Fock intertwiner}
	Similarly for the set of partitions $ \{	|\lambda\rangle	 				\} $ which forms a basis of $ \cals{F}(v) $, the $ \lambda $-component of the Fock intertwiner is defined by 
	\begin{gather}
		\Phi_{\lambda}(v) = \Phi\big(	|\lambda\rangle \tens \bullet				\big) : \cals{H} \rightarrow \cals{H}^{\prime}. 
	\end{gather}
	A solution to the intertwining relations is
	\begin{gather}
		\Phi_{\lambda}(v) = z_{\lambda}\cals{G}_{\lambda}^{-1}
		\exp\left( \sum_{r=1}^{\infty} \frac{H_{-r}}{\gq^r - \gq^{-r}}\gq^{-r/2}v^r\left( \sum_{(i,j)\in\lambda}x_{i,j}^r - \frac{1}{(1-q_1^r)(1-q_2^r)} \right) \right) 
		\notag
		\\
		\times \exp\left( -\sum_{r=1}^{\infty} \frac{H_{r}}{\gq^r - \gq^{-r}}\gq^{-r/2}v^{-r}
		\left( \sum_{(i,j)\in\lambda}x_{i,j}^{-r} - \frac{1}{(1-q_1^{-r})(1-q_2^{-r})} \right) \right),
	\end{gather}
	where $x_{i,j} = q_1^{j-1} q_2^{i-1}$.
	The zero mode factor is
	\begin{align}
		z_\lambda(v)
		= \prod_{i=1}^{\ell(\lambda)} \prod_{j=1}^{\lambda_i} \left( -\gq q_2^{i-1} x_{i,j}^{-1} \right) \mathbf{e}(x_{i,j}v)
		= q_2^{n(\lambda)} (-\gq)^{|\lambda|} \prod_{(i,j)\in\lambda} x_{i,j}^{-1} \mathbf{e}(x_{i,j}v), 
	\end{align}
	and the normalization factor is
	\begin{align}
		\mathcal{G}_\lambda
		&= \prod_{\square \in \lambda} \left( 1-q_1^{-a_\lambda(\square)}q_2^{l_\lambda(\square)+1} \right).
	\end{align}

	The Fock intertwiner exists, if and only if the following relations between the zero modes of $ \cals{H} $ and $ \cals{H}^{\prime} $
	are satisfied \cite{AFS};
	\begin{align}\label{Fockzeromodes}
		\gamma' = \gq \gamma, \quad \quad
		\mathbf{e}'(z) = (-\gq v/z) \mathbf{e}(z), \quad \quad 
		\mathbf{f}'(z) = (-\gq v/z)^{-1} \mathbf{f}(z).
	\end{align}
	Compared with \eqref{levelN}, this means that the level and the spectral parameter are shifted 
	by $\gq^N \to \gq^{N+1}$ and $u \to -uv$.

	\subsection{MacMahon intertwiner}
	Now let's consider the MacMahon intertwiner whose vertical representation is $ \cals{M}(K;v) $. 
	We know that the set of plane partitions $ \{	|\Lambda\rangle			\} $ forms a basis of
	$ \cals{M}(K;v) $. We define the $ \Lambda $-component of the MacMahon intertwiner by 
	\begin{gather}
		\Xi_{\Lambda}(K;v) = \Xi\big(
		|\Lambda\rangle \tens \bullet
		\big) : \cals{H} \rightarrow \cals{H}^{\prime}. 
	\end{gather}
	A solution to the intertwining relations is 
	\begin{align}
		\Xi_{\Lambda}(K;v) = z_{\Lambda}(K;v)\mathcal{M}^{[n]}(K)\tilde{\Phi}^{[n]}_{\Lambda}(v)\Gamma_n(K;v), \qquad n > h(\Lambda),
		\label{macmahonansatttz}
	\end{align}
	where
	\begin{gather}
		\Gamma_n(K;v)
		= \exp\left( \sum_{r=1}^{\infty}\frac{H_{-r}}{\gq^r - \gq^{-r}} \frac{q_3^{nr}-K^r}{\kappa_r} \gq^{-r/2} v^r \right)
		\exp\left( \sum_{r=1}^{\infty}\frac{H_{r}}{\gq^r - \gq^{-r}} \frac{q_3^{-nr}-K^{-r}}{\kappa_r} \gq^{-r/2} v^{-r} \right)
		\label{4.3}
	\end{gather}
	is the vacuum contribution which survives for $\Lambda = \varnothing$.
	The zero mode factor is 
	\begin{gather}
		\label{Maczeromode}
		z_\Lambda(K;v)
		= \prod_{k=1}^{h(\Lambda)} \prod_{(i,j)\in\Lambda^{(k)}}
		\frac{K^{1/2}}{\gq^{k-1}} \frac{\theta_{q_3}(q_3^{k-1}/x_{ijk})}{\theta_{q_3}(K/x_{ijk})} \mathbf{e}(x_{ijk}v),
	\end{gather}
	where $x_{ijk} = q_1^i q_2^j q_3^k$.
	Note that for later convenience we have exchanged $q_1$ and $q_2$ in the original formula in \cite{MacMahon}.
	This is equivalent to taking the transpose of the Young diagram $\lambda$ which labels the basis of the Fock representation.
	Or we may exchange the role of $q_1$ and $q_2$ in the construction of the Fock representation in section \ref{subsectionFock}.
	Namely we begin with the vector representation $\rho^{V^{(2)}}$ and choose $q_1$ as the shift parameter in the tensor product. 
	Furthermore, 
	\begin{align}
		\label{Fockcomposition}
		\tilde{\Phi}^{[n]}_\Lambda(v)
		&= \tilde{\Phi}_{\Lambda^{(1)}}(v) \circ \cdots \circ \tilde{\Phi}_{\Lambda^{(n)}}(q_3^{n-1}v)
	\end{align}
	is a composition of the Fock intertwiners $\tilde{\Phi}$ without the zero mode factor.
	Finally the factor $ \mathcal{M}^{[n]}(K) $ is defined via 
	\begin{align}
		:\tilde{\Phi}^{[n]}_{\varnothing}(v) \Gamma_n(K;v): = \mathcal{M}^{[n]}(K)\tilde{\Phi}^{[n]}_{\varnothing}(v) \Gamma_n(K;v). 
	\end{align}
	The notation $ n > h(\Lambda) $ in Eq.\eqref{macmahonansatttz} means 
	that we can choose any integer $ n $ which is greater than $ h(\Lambda) $. All of them give the same $ \Xi_{\Lambda}(K;v) $. 
	It can be shown that 
	\begin{align}
		\Xi_{\Lambda}(K;v) = z_{\Lambda}\cals{G}_{\Lambda}^{-1}
		&\exp\left( \sum_{r=1}^{\infty} \frac{H_{-r}}{\gq^r - \gq^{-r}}\gq^{-r/2}v^r\left( \sum_{(i,j,k)\in\Lambda}x_{ijk}^r + \frac{1-K^r}{\kappa_r} \right) \right) \notag \\ 
		& \quad \times \exp\left( -\sum_{r=1}^{\infty} \frac{H_{r}}{\gq^r - \gq^{-r}}\gq^{-r/2}v^{-r}\left( \sum_{(i,j,k)\in\Lambda}x_{ijk}^{-r} - \frac{1-K^{-r}}{\kappa_r} \right) \right),
	\end{align}
	where the factor $ \cals{G}_{\Lambda} $ is defined by 
	\begin{align}
		:\tilde{\Phi}^{[n]}_{\Lambda}(v) \Gamma_n(K;v): = \cals{G}_{\Lambda}\mathcal{M}^{[n]}(K)\tilde{\Phi}^{[n]}_{\Lambda}(v) \Gamma_n(K;v). 
	\end{align}
	
	As motivated from the Fock intertwiners, we can construct the MacMahon intertwiner from the Fock intertwiners by composing them together with the correction intertwiner \cite{MacMahon}. From this method of construction, we obtain the explicit form of $ \cals{G}_{\Lambda} $. Nevertheless, in this paper, 
	the explicit form is not relevant and we do not state it here. 
	We obtain the following relations between the zero modes of $ \cals{H} $ and $ \cals{H}^{\prime} $ 
	\begin{align}\label{Maczeromodes}
		\gamma' = K^{1/2} \gamma, \quad
		\mathbf{e}'(z) = K^{1/2}\frac{\theta_{q_3}(v/z)}{\theta_{q_3}(Kv/z)} \mathbf{e}(z), \quad
		\mathbf{f}'(z) = \frac{\theta_{q_3}(\gq Kv/z)}{\theta_{q_3}(\gq v/z)} \mathbf{f}(z),
	\end{align}
	Note that when $K=q_3$, \eqref{Maczeromodes} reduces to \eqref{Fockzeromodes}.

	\subsection{Dual intertwiners}
	\label{dualintertwiner}

	Now we turn to the construction of the dual intertwiner $ \Psi^* : \cals{H}^{\prime} \rightarrow \cals{H} \tens \cals{V} $ ,
	which is determined from the intertwining relation : 
	\begin{gather}
		\Delta(a)\Psi^* = \Psi^*a, \quad \quad \forall a \in U_{\gq,\gd}(\widehat{\widehat{\mathfrak{gl}}}_1).
		\label{dualinter}
	\end{gather}
	Taking the same basis $ \{	\alpha		\} $ of $ \cals{V} $
	used to define the components of $ \Psi$, we can define the $ \alpha$-component 
	of the dual intertwiner $ \Psi^*_{\alpha} : \cals{H}^{\prime} \rightarrow \cals{H} $ by 
	\begin{gather}
		\Psi^*(\bullet) = \sum_{\alpha} \Psi^*_{\alpha}(\bullet) \tens |\alpha\rangle,
		\quad \quad \bullet \in \cals{H}^{\prime}. 
		\label{dualcomponent}
	\end{gather}
	We can express the intertwining relation \eqref{dualinter} in terms of the $ \alpha$-component as 
	\begin{align}
		\label{dintK+}
		\Psi^*_{\alpha}K^+(\gq z) &= ( \alpha | K^+(z) |\alpha) K^+(\gq z)\Psi^*_{\alpha}, 
		\\
		\label{dintK-}
		\Psi^*_{\alpha}K^-(z) &= ( \alpha | K^-(z) |\alpha) K^-(z)\Psi^*_{\alpha},
		\\
		\label{dintE}
		\Psi^*_{\alpha}E(z) &= K^-(\gq z)\sum_\beta(\alpha | E(\gq z)| \beta )\Psi^*_{\beta}
		+ E(z)\Psi^*_{\alpha},
		\\
		\label{dintF}
		\Psi^*_{\alpha}F(z) &= \sum_{\beta} (\alpha|F(z)|\beta)\Psi^*_{\beta} + 
		(\alpha|K^+(z)|\alpha)F(z)\Psi^*_{\alpha}. 
	\end{align}
	When $ \cals{V} $ is the vertical vector/Fock/MacMahon modules, we call the dual intertwiner 
	$ \Psi^* : \cals{H}^{\prime} \rightarrow \cals{H} \tens \cals{V} $ the vector/Fock/MacMahon dual intertwiner, respectively. 
	In the following we give their explicit expressions in each case.

	We know that $ \big\{	[v]_n		\big|~	 n \in \mathbb{Z}	\big\} $ forms a basis of $ V(v) $. Then, according to \eqref{dualcomponent}, 
	the $ n$-component $ \bb{I}^*_n(v) : \cals{H}^{\prime} \rightarrow \cals{H}  $ of the dual intertwiner satisfies 
	\begin{gather}
		\bb{I}^*(v) = \sum_{n \in \bb{Z}}\bb{I}^*_n(v) \tens [v]_{n-1}. 
	\end{gather}
	Thus, the intertwining relations become 
	\begin{align}
		\bb{I}^*_n(v)K^+(\gq z) &= \tilde{\psi}(q^{n-1}_{1}v/z)K^+(\gq z)\bb{I}^*_n(v), 
		\\
		\bb{I}^*_n(v)K^-(z) &= \tilde{\psi}(q^{-n}_{1}z/v)K^-(z)\bb{I}^*_n(v), 
		\\
		\bb{I}^*_n(v)E(z) &= (1-q_1)^{-1}\delta(\gq^{-1}q^{n-1}_{1}v/z)K^-(\gq z)\bb{I}^*_{n-1}(v)
		+ E(z)\bb{I}^*_n(v),
		\\
		\bb{I}^*_n(v)F(z) &= (1-q^{-1}_{1})^{-1}\delta(	q^n_1v/z		)\bb{I}^*_{n+1}(v)
		+ \tilde{\psi}(q^{n-1}_{1}v/z)F(z)\bb{I}^*_n(v). 
	\end{align}
	A solution to the intertwining relations is 
	\begin{gather}
		\bb{I}^*_n(v) = z^*_n\exp\bigg(
		\sum_{r=1}^{\infty}\frac{H_{-r}}{\gq^r - \gq^{-r}}\frac{\gq^{r/2}(q^n_1v	)^r}{1-q^r_1}
		\bigg)\exp\bigg(
		- \sum_{r=1}^{\infty}\frac{H_r}{\gq^r - \gq^{-r}}\frac{\gq^{r/2}(q^{n}_{1}v	)^{-r}}{1-q^{-r}_1}
		\bigg),
	\end{gather}
	where 
	\begin{gather}
		z^*_n =
		\begin{cases}
			q^n_2\prod_{j=1}^{n}\textbf{f}(q^{j-1}_{1}v) \quad ; ~  n > 0,
			\\
			q^n_2\prod_{j = n}^{-1}\textbf{f}(q^j_1v)^{-1} \quad ; ~ n < 0.
		\end{cases}
	\end{gather}
	As the condition for the existence of the dual intertwiner, we get the following relations \cite{FHHSY};
	\begin{gather}
		\gamma^{\prime} = \gamma, \quad \quad
		\textbf{e}^{\prime}(z) = q^{-1}_{2}\textbf{e}(z), \quad \quad
		\textbf{f}^{\prime}(z) = q_2\textbf{f}(z). 
	\end{gather}

	We define the $ \lambda$-component of the dual Fock intertwiner $ \Phi^*_{\lambda}(v) $ by 
	\begin{gather}
		\Phi^*(v) = \sum_{\lambda}\Phi^*_{\lambda}(v) \tens |\lambda\rangle.
	\end{gather}
	A solution to the dual intertwining relations is
	\begin{align}
		\Phi^{*}_{\lambda}(v) = z^*_{\lambda}(v)\cals{G}^{[n]*}B^*_n(v)\tilde{\bb{I}}^{[n]*}_{\lambda}(v), ~~ n > \ell(\lambda),
	\end{align}
	where
	\begin{align}
		B^*_n(v) = \exp\bigg(
		\sum_{r=1}^{\infty}\frac{1}{k_{r}}(	\gq^{3/2}q^n_2v	)^rH_{-r}
		\bigg)\exp\bigg(
		-\sum_{r=1}^{\infty}\frac{1}{\kappa_r}(	\gq^{1/2}q^n_2v	)^{-r}H_r
		\bigg),
	\end{align}
	$ \tilde{\bb{I}}^{[n]*}_{\lambda} := \tilde{\bb{I}}^{*}_{\lambda_n}(q^{n-1}_{2}v) \comp \cdots \comp \tilde{\bb{I}}^{*}_{\lambda_1}(v) $, 
	and the normalization factor $ \cals{G}^{[n]*} $ is defined by 
	$ :	B^*_n(v)\tilde{\bb{I}}^{[n]*}_{\emptyset}(v): = 	\cals{G}^{[n]*}B^*_n(v)\tilde{\bb{I}}^{[n]*}_{\emptyset}(v) 				$. 
	The factor $ z^*_{\lambda}(v) $ is 
	\begin{gather}
		z^*_{\lambda}(v) = \gq^{|\lambda|}\prod_{i=1}^{l(\lambda)}\prod_{j=1}^{\lambda_i}
		\big(	-\gq q^{i-1}_{2}x^{-1}_{i,j}				\big)^{-1}\textbf{f}(	x_{ij}v		).
	\end{gather}
	It can also be shown that 
	\begin{gather}
		\Phi^*_{\lambda}(v) = z^*_{\lambda}\cals{G}^{*\,-1}_{\lambda}
		\exp\bigg[
		\sum_{r=1}^{\infty}\frac{H_{-r}}{\gq^r - \gq^{-r}}\gq^{r/2} v ^r
		\Big(		\frac{q^{lr}_{2}}{(1- q^r_1 )(1- q^r_2 )} + \sum_{k=1}^{l} \frac{(q^{\lambda_k}_{1}q^{k-1}_{2})^r}{1-q^r_1}								\Big)
		\bigg]
		\notag \\
		\times 
		\exp\bigg[
		\sum_{r=1}^{\infty}\frac{H_r}{\gq^r - \gq^{-r}}\gq^{r/2}v^{-r}
		\Big(
		\sum_{k=1}^{l}\frac{	(	q^{\lambda_k - 1}_{1}q^{k-1}_{2}		)^{-r}			}{1-q^r_1}
		- \frac{q^{-rl}_{2}}{(1- q^{-r}_1 )(1 - q^{-r}_2 )}
		\Big)
		\bigg].
		\label{5.23}
	\end{gather} 
	Here $ \cals{G}^{*}_{\lambda} $ is defined via the equation
	\begin{gather}
		:	B^*_n(v)\tilde{\bb{I}}^{[n]*}_{\lambda}(v): = 	\cals{G}^{*}_{\lambda} \cals{G}^{[n]*}B^*_n(v)\tilde{\bb{I}}^{[n]*}_{\lambda}(v). 
	\end{gather}

	Similarly we can define the $ \Lambda$-component of the MacMahon intertwiner $ \Xi^*_{\Lambda}(K, v) $ by 
	\begin{gather}
		\Xi^*(K;v) = \sum_{\Lambda} \Xi^*_{\Lambda}(K;v) \tens |\Lambda\rangle. 
	\end{gather}
	A solution to the intertwining relations is 
	\begin{align}
		\label{dualMacint}
		\Xi_{\Lambda}^{*}(K;v) = z_{\Lambda}^*(K;v)\cals{M}^{[n]*}(K)\tilde{\Phi}^{[n]*}_{\Lambda}(v)\Gamma^{*}_{n}(K;v),
	\end{align}
	where 
	\begin{gather}
		\Gamma^{*}_{n}(K;v) = 
		\exp\bigg(
		-\sum_{r=1}^{\infty}\frac{H_{-r}}{\gq^r - \gq^{-r}}\frac{q^{nr}_3 - K^r}{\kappa_r}\gq^{r/2}v^r
		\bigg)
		\exp\bigg(
		-\sum_{r=1}^{\infty}\frac{H_r}{\gq^r - \gq^{-r}}\frac{q^{-nr}_3 - K^{-r}}{\kappa_r}\gq^{r/2}v^{-r}
		\bigg),
		\label{4.36}
		\\
		\label{dualMaczeromode}
		z^*_{\Lambda}(K;v) = \prod_{k=1}^{h(\Lambda)}\prod_{(i,j) \in \Lambda^{(k)}}
		\Bigg(
		\frac{K^{1/2}}{\gq^k}\frac{	 \theta_{q_3}(q^k_3x^{-1}_{ijk})			}{\theta_{q_3}(Kx^{-1}_{ijk})}
		\Bigg)^{-1}\textbf{f}(x_{ijk}v),
	\end{gather}
	and
	\begin{gather}
		\label{dualFockcomposition}
		\tilde{\Phi}^{[n]*}_{\Lambda}(v) =
		\tilde{\Phi}^{*}_{\Lambda^{(1)}}(v) \comp \cdots \comp \tilde{\Phi}^{*}_{\Lambda^{(n)}}(q^{n-1}_{3}v). 
	\end{gather}
	The factor $ \cals{M}^{[n]*}(K) $ is defined by the equation
	\begin{gather}
		: \tilde{\Phi}^{[n]*}_{\emptyset}(v)\Gamma^{*}_{n}(K;v)			: = \cals{M}^{[n]*}(K)\tilde{\Phi}^{[n]*}_{\emptyset}(v)\Gamma^{*}_{n}(K;v).  
	\end{gather}
	As in the case of the dual Fock intertwiner, we can determine 
	an explicit form of $ \Xi^*_{\Lambda}(K;v) $ like Eq.\eqref{5.23}, which we will compute in section 6.
	
	
	\section{Generalized KZ equation for the vector intertwiners}
	\label{section6}
	
	In this section we derive the generalized KZ equation for
	the vector intertwiners as a warm-up exercise before attacking the more complicated case of the MacMahon intertwiners. 
	The first step is to construct the shift operator $ \cals{T} $.

	\subsection{Construction of the shift operator}
	
	In \cite{qtKZdim} the shift operator is constructed as the composition of the intertwiner and the dual intertwiner with appropriate
	spectral parameter. We first recall the relation \eqref{vectorshift} on the level of the vector intertwiner discussed in the previous section.
	Comparing the choice of the zero modes \eqref{levelN} with \eqref{vectorshift} saying 
	there is no change of the level of $ \mathcal{H} $ and $ \mathcal{H}^{\prime} $,
	we see $ u^{\prime} = q^{-1}_{2}u $.
	We define the operator $ \mathcal{T}^{m}_{n}$ by the composition of the intertwiner and the dual intertwiner;
	\begin{gather}
		\mathcal{T}^{m}_{n}(N,u|v,w) := \mathbb{I}^{*}_n(w)\mathbb{I}_m(v) \quad 
		:  \quad \mathcal{F}_u^{(\gq, \gq^N)} \longrightarrow  \mathcal{F}_u^{(\gq, \gq^N)}.
	\end{gather}
	Pictorially, it is represented as Fig.\ref{dsoivoidfsuosiufdsoifusio}. 
	
	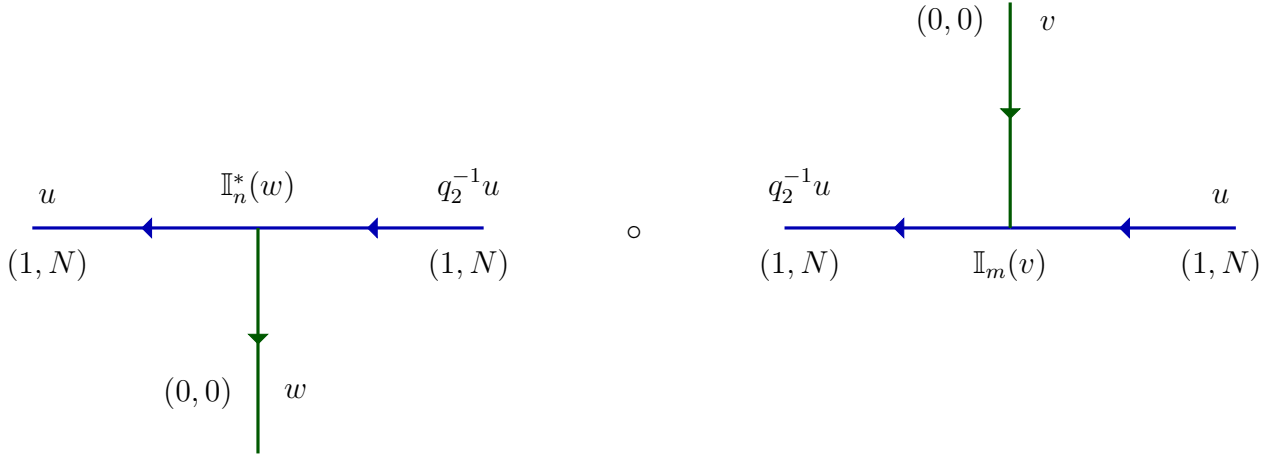
\begin{figure}[h]
		\unitlength 2mm
		\begin{center}
		\begin{tikzpicture}
			\begin{scope}[very thick, every node/.style={sloped,allow upside down}]
				\draw[color=blue!70!black] (0,0)-- node {\midarrow} (-3,0);
				\draw [color=blue!70!black](3,0)-- node {\midarrow} (0,0);
				\draw[color=green!35!black] (0,0) -- node {\midarrow} (0,-3);
			\end{scope}
			every node/.style=draw,
			every label/.style=draw
			]
			\node [label={[shift={(0.5,-2.55)}]$ w $}] {};
			\node [label={[shift={(-0.8,-2.7)}]$ (0,0) $}] {};
			
			\node [label={[shift={(2.8,-1.0)}]$ (1,N) $}] {};
			\node [label={[shift={(-2.8,-1.0)}]$ (1,N) $}] {};
			
			\node [label={[shift={(0,0.04)}]$ \bb{I}^*_n(w) $}] {};
			\node [label={[shift={(2.8,0.04)}]$ q_2^{-1}u $}] {};
			\node [label={[shift={(-2.8,0.04)}]$ u $}] {};
			
			\node [label={[shift={(5,-0.45)}]$ \comp $}] {};
			
			\begin{scope}[very thick, every node/.style={sloped,allow upside down}]
				\draw (10,0)[color=blue!70!black]-- node {\midarrow} (7,0);
				\draw (13,0)[color=blue!70!black]-- node {\midarrow} (10,0);
				\draw[color=green!35!black] (10,3) -- node {\midarrow} (10,0);
			\end{scope}
			\node [label={[shift={(10.5,2.35)}]$ v $}] {};
			\node [label={[shift={(9.2,2.25)}]$ (0,0) $}] {};
			
			\node [label={[shift={(12.8,-1.0)}]$ (1,N) $}] {};
			\node [label={[shift={(7.2,-1.0)}]$ (1,N) $}] {};
			
			\node [label={[shift={(10,-1.0)}]$ \bb{I}_m(v) $}] {};
			\node [label={[shift={(12.8,0.04)}]$ u $}] {};
			\node [label={[shift={(7.2,0.04)}]$ q_2^{-1}u $}] {};
		\end{tikzpicture}
		\end{center}
		\caption{Composition $\circ$ of the vector intertwiner and the dual vector intertwiner. 
			Contrary to the Fock intertwiner, the level $(1,N)$ of the horizontal Fock space is kept intact
			and the shift of the spectral parameter $u$ is independent of the spectral parameters
			$v,w$ of the vertical representation. }
		\label{dsoivoidfsuosiufdsoifusio}
	\end{figure}
	Using the commutation relation \eqref{Heisenberg}, we see that 
	\begin{align}
		\mathcal{T}^{m}_{n}(N,u|v,w) =
		z^{*}_n & \big(	u, N \big| w		\big)z_m\big(	 u, N \big| v		\big) \nn \\
		& \cdot \exp\Big(
		-\sum_{r=1}^{\infty}
		\frac{	q_{1}^{-r(n-1)}w^{-r}(q^m_1v)^r						}{r}
		\frac{	\mathfrak{q}^r	(	1 - q^r_2	)			}{1-q^r_1}
		\Big) : \tilde{\mathbb{I}}^{*}_n(w) \tilde{\mathbb{I}}_m(v):~,
		\label{mathToperatudosifuweioruewoiruewioudfsiofusdiofdsuo}
	\end{align}
	where $: \quad :$ denotes the normal ordering. 
	The next step is to determine the value of the shift parameter
	$ p $ such that we can express $ \mathbb{I}_n(pv) $ and $ \mathbb{I}_n^*(pv) $ by the action of $ \mathcal{T} $ operator. 
	Consider the quantity $ \mathbb{I}_{n}(pv) $. From the explicit expression of the vector intertwiner explained in section \ref{vecint}, 
	we see that 
	\begin{align}
		\mathbb{I}_n(pv)
		=&\frac{
			z_n(u,N | pv)
		}{
			z_n(u,N | v)
		} \cdot\exp\Big(
		-\sum_{r=1}^{\infty}\frac{H_{-r}}{\gq^r - \gq^{-r}}
		\frac{	(		\mathfrak{q}^{-1/2}q^n_1v		)^r			}{1-q^r_1}
		(	p^r - 1 				)
		\Big)
		\nn \\
		& \cdot \mathbb{I}_n(v) \cdot
		\exp\Big(
		\sum_{r=1}^{\infty}
		\frac{	H_r			}{\gq^r - \gq^{-r}}
		\frac{		(	\mathfrak{q}^{1/2}q^{n-1}_{1}v		)^{-r}			}{   1 - q^r_1  }
		(	1 - p^{-r}				)
		\Big). \label{pshift} 
	\end{align}

	To express the exponential factors in \eqref{pshift} by $ \mathcal{T} $ operator, 
	let us consider Eq.\eqref{mathToperatudosifuweioruewoiruewioudfsiofusdiofdsuo} in the case $ m = n  $. 
	\begin{align}
		\mathcal{T}^{n}_{n}(N,u|v,w) =
		&z^{*}_n\big(	u, N		\big| w \big)z_n\big(	u, N		\big| v \big)
		\exp\Big(
		-\sum_{r=1}^{\infty}
		\frac{	q_{1}^{r}w^{-r}v^r					}{r}
		\frac{	\mathfrak{q}^r	(	1 - q^r_2	)			}{1-q^r_1}
		\Big)
		\nn \\
		&\cdot \exp\Big(
		\sum_{r=1}^{\infty}
		\frac{H_{-r}}{(\gq^r - \gq^{-r})(1-q^r_1)}
		\big(
		(	\mathfrak{q}^{1/2}q^n_1w				)^r - (	 \mathfrak{q}^{-1/2}q^n_1v				)^r
		\big)
		\Big)
		\nn \\
		&\cdot 
		\exp\Big(
		\sum_{r=1}^{\infty}
		\frac{H_r}{(\gq^r - \gq^{-r})(1-q^r_1)}
		\big(
		(	\mathfrak{q}^{-1/2}q^{n-1}_1w		)^{-r} - 
		(	\mathfrak{q}^{1/2}q^{n-1}_{1}v			)^{-r}
		\big)
		\Big).
	\end{align}
	Thus, if $ w = \mathfrak{q}v $, the positive modes disappear 
	\begin{align}
		\mathcal{T}^{n}_{n}(N,u|v,\mathfrak{q}v) =
		& \exp\Big(
		- \sum_{r=1}^{\infty}
		\frac{q^r_1}{r}
		\frac{	1 - q_2^r				}{	1 - q^r_1				}
		\Big)
		\cdot 
		\exp\Big(
		\sum_{r=1}^{\infty} \frac{H_{-r} \gq^{r/2}}{1 - q^r_1}
		(	q^n_1v		)^r
		\Big).
		\label{seventennfsofudsiofudsoifudsoifudsofidsfusdfs}
	\end{align}
	where we have used $z^{*}_n\big(	u, N		\big| \gq^{\pm 1} v \big)z_n\big(	u, N		\big| v \big) =1$ for $\gamma=1$. 
	On the other hand, if $ w = \gq^{-1}v $, the negative modes disappear 
	\begin{align}
		\mathcal{T}^{n}_{n}(N,u|v,\gq^{-1}v) =
		&\exp\Big(
		-\sum_{r=1}^{\infty}
		\frac{	q^r_1\gq^{2r}(1-q^r_2)						}{	r(1 - q^r_1)			}
		\Big)
		\cdot \exp\Big(
		\sum_{r=1}^{\infty}
		\frac{H_{r}\gq^{r/2}}{1 - q^r_1}
		(	q^{n-1}_1v			)^{-r}
		\Big). 
		\label{eightteendsfjdsofusdoifudsoifudsfosdufoisdu}
	\end{align}
	
	By tuning the spectral parameters of $\mathbb{I}^{*}_n(w)$ and $\mathbb{I}_m(v)$, 
	we define\footnote{See section \ref{section7} for the reason why we define them in this way.}
	\begin{align}
		\cals{T}^+_n(N,u | v) &:= \mathcal{T}^{n}_{n}(N,u|\gq^{1/2}v,\gq^{-1/2}v)^{-1},
		\notag \\ 
		\cals{T}^-_n(N,u | v) &:= \mathcal{T}^{n}_{n}(N,u|\gq^{-1/2}v,\gq^{1/2}v).
		\label{defi5.6}
	\end{align}
	Thus, 
	\begin{align}
		\mathbb{I}_n(\gq^{-2}v) =
		& \frac{
			z_n(u,N | \gq^{-2} v)
		}{
			z_n(u,N | v)
		} \cdot \exp
		\bigg(
		\sum_{r=1}^{\infty}
		\frac{q^r_1}{r}
		\frac{	1 - q^r_2			}{	1 - q^r_1		}(		1 - \mathfrak{q}^{2r}		)
		\bigg)
		\notag \\ 
		&
		\cdot \mathcal{T}^{-}_{n}(N,q_2^{-1}u|\mathfrak{q}^{-3/2}v) \cdot 
		\mathbb{I}_n(v)
		\cdot \mathcal{T}^{+}_{n}(N,u|\gq^{-1/2}v). 
		\label{sdweoruiodsufiosudoiuf}
	\end{align}
	It is remarkable that the shift parameter is fixed to be $p=\gq^{-2}$. 
	In the above tuning the spectral parameters of the intertwiners and the dual intertwiners are
	shifted by $\gq^{\pm 1}$. This reminds us of exactly the same shift of the spectral parameters
	in the \say{Higgsed} DIM network \cite{Taki} \cite{Z18} \cite{FOS1} \cite{FOS2}\footnote{Upon
		the completion of the present work, we noticed \cite{Z20}, which shows the universal $R$ matrix of DIM algebra plays
		a significant role in the \say{Higgsed} network calculus. In particular it is interesting that the shift operator in the present paper 
		is physically identified with the crossing of branes in \cite{Z20}. }. It seems that this coincidence is related to the fact
	the shift operator \eqref{defi5.6} is obtained from the universal $R$ matrix as we show later in section \ref{section7}.

	Similarly from the explicit expression of the dual vector intertwiner explained in section \ref{dualintertwiner}, 
	we see that 
	\begin{align*}
		\mathbb{I}^{*}_{n}(pv)
		=
		&\frac{
			z^*_n(u,N | pv)
		}{
			z^*_n(u,N | v)
		} \cdot
		\exp\bigg[
		\sum_{r=1}^{\infty} \frac{H_{-r}}{\gq^r - \gq^{-r}}
		\frac{		\gq^{r/2}(q^n_1v			)^r				}{	1 - q^r_1			}
		(	p^r - 1	)
		\bigg]
		\\
		&
		\cdot \mathbb{I}^{*}_{n}(v) \cdot
		\exp\bigg[
		- \sum_{r=1}^{\infty}\frac{H_r}{\gq^r - \gq^{-r}}
		\frac{	\gq^{r/2}(q^{n}_1v		)^{-r}					}{	1 - q^{-r}_1			}
		(	p^{-r} - 1		)
		\bigg]. 
		\label{aqewriueworudioudsfoidsufdosifusdoifuociucxofudsofdsu}
	\end{align*}
	From Eqs.\eqref{seventennfsofudsiofudsoifudsoifudsofidsfusdfs} and \eqref{eightteendsfjdsofusdoifudsoifudsfosdufoisdu}, we get that 
	\begin{align}
		\mathbb{I}^{*}_{n}(\gq^{-2}v) =
		&\frac{
			z^*_n(u,N | \gq^{-2} v)
		}{
			z^*_n(u,N | v)
		} \cdot
		\exp\bigg[
		\sum_{r=1}^{\infty}
		\frac{		q^r_1( 1 - q^r_2 )				}{r ( 1 - q^r_1 )}
		(	\gq^{2r} - 1		)
		\bigg]
		\notag \\
		&
		\mathcal{T}^{-}_{n}(N, q_2u|\gq^{-1/2}v)^{-1} \cdot 
		\mathbb{I}^{*}_{n}(v) \cdot
		\mathcal{T}^{+}_{n}(N, u|\mathfrak{q}^{-3/2}v)^{-1}. 
	\end{align}

	\subsection{Commutation of vector intertwiner and the shift operators}
	
	From the previous subsection, we see that $ \mathcal{T} $ operators are expressed as the composition of the intertwiner and 
	the dual vector intertwiner. 
	Thus, to calculate the commutation of the vector intertwiner and $ \mathcal{T} $ operators, 
	we first calculate the commutation between the vector intertwiners and the dual vector intertwiners themselves.
	
	
	
	\begin{figure}[b]
		\unitlength 2mm
		\begin{center}
%
		\begin{tikzpicture}
			\begin{scope}[very thick, every node/.style={sloped,allow upside down}]
				\draw[color=blue!70!black] (0,0)-- node {\midarrow} (-3,0);
				\draw [color=blue!70!black](3,0)-- node {\midarrow} (0,0);
				\draw[color=green!35!black] (10,0) -- node {\midarrow} (10,-3);
			\end{scope}
			every node/.style=draw,
			every label/.style=draw
			]
			\node [label={[shift={(10.5,-2.55)}]$ w $}] {};
			\node [label={[shift={(9.2,-2.7)}]$ (0,0) $}] {};
			
			\node [label={[shift={(2.8,-1.0)}]$ (1,N) $}] {};
			\node [label={[shift={(-2.8,-1.0)}]$ (1,N) $}] {};
			
			\node [label={[shift={(10,0.04)}]$ \bb{I}^*_n(w) $}] {};
			\node [label={[shift={(2.8,0.04)}]$ q_2u $}] {};
			\node [label={[shift={(-2.8,0.04)}]$ u $}] {};
			
			\node [label={[shift={(5,-0.45)}]$ \comp $}] {};
			
			\begin{scope}[very thick, every node/.style={sloped,allow upside down}]
				\draw (10,0)[color=blue!70!black]-- node {\midarrow} (7,0);
				\draw (13,0)[color=blue!70!black]-- node {\midarrow} (10,0);
				\draw[color=green!35!black] (0,3) -- node {\midarrow} (0,0);
			\end{scope}
			\node [label={[shift={(0.5,2.35)}]$ v $}] {};
			\node [label={[shift={(-0.8,2.25)}]$ (0,0) $}] {};
			
			\node [label={[shift={(12.8,-1.0)}]$ (1,N) $}] {};
			\node [label={[shift={(7.2,-1.0)}]$ (1,N) $}] {};
			
			\node [label={[shift={(0,-1.0)}]$ \bb{I}_m(v) $}] {};
			\node [label={[shift={(12.8,0.04)}]$ u $}] {};
			\node [label={[shift={(7.2,0.04)}]$ q_2u $}] {};
		\end{tikzpicture}
		\end{center}
		\caption{Composition $\circ$ of the dual vector intertwiner and the vector intertwiner. 
			Contrary to the Fock intertwiner, the level $(1,N)$ of the horizontal Fock space is kept intact
			and the shift of the spectral parameter $u$ is independent of the spectral parameters
			$v,w$ of the vertical representation. }
		\label{figuretwofsdofusdoifusdoifusdiofusouoidsuf}
	\end{figure}
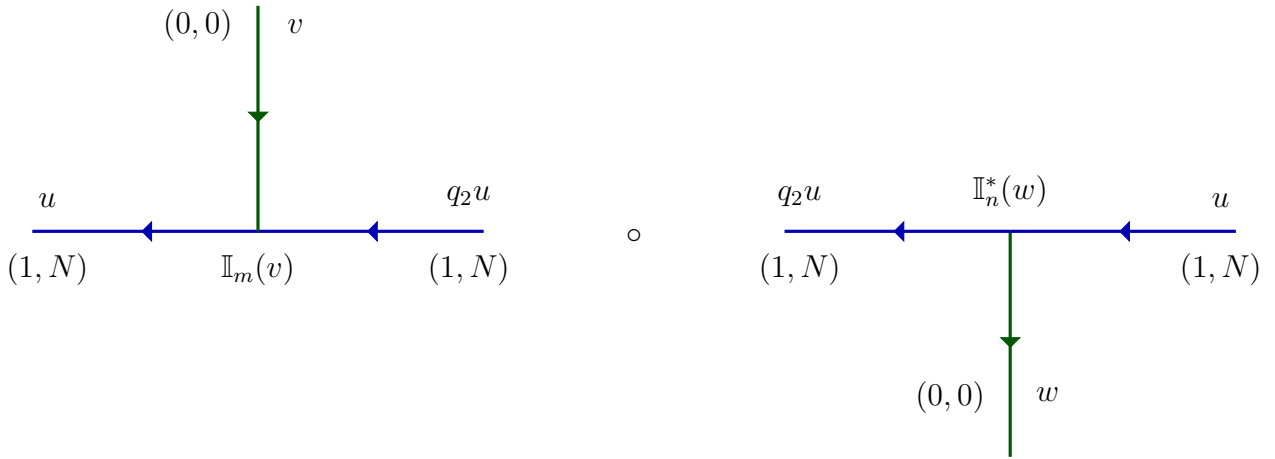
	

	The result of the composition $ \mathbb{I}^{*}_n(w)\mathbb{I}_m(v) $ of Fig.\ref{dsoivoidfsuosiufdsoifusio}
	has already given in eq.\eqref{mathToperatudosifuweioruewoiruewioudfsiofusdiofdsuo}.
	On the other hand the composition $ \mathbb{I}_m(v)\mathbb{I}^{*}_n(w) $ 
	of Fig.\ref{figuretwofsdofusdoifusdoifusdiofusouoidsuf} is
	\begin{align}
		\mathbb{I}_m(v)\mathbb{I}^{*}_n(w) 
		= z_m&\big( q_2 u, N \big| v		\big)z^{*}_{n}(	q_2 u, N \big| w		)
		\nn \\
		& \cdot \exp\bigg(
		-\sum_{r=1}^{\infty}
		\frac{1}{r}
		(	\gq^{1/2}q^{m-1}_1v			)^{-r}( \gq^{1/2}q^n_1w )^{r}
		\gq^r
		\frac{	1 - q^r_2			}{	1 - q^r_1				}
		\bigg) : \tilde{\mathbb{I}}_m(v) \tilde{\mathbb{I}}^{*}_n(w):~.
	\end{align}
	Since 
	\begin{align}\label{propertyofstackzeromodes}
		z_m\big(	q_2 u, N \big| v			\big) = q_2^m z_m\big(u,N\big|v\big), \qquad
		z^{*}_n\big(	q_2 u, N \big| w		\big) = q_2^{-n}z^{*}_n(u,N\big|w),
	\end{align}
	we get that 
	\begin{align}
		\mathbb{I}_m(v)\mathbb{I}^{*}_n(w)
		= q_2^{m-n}\exp\bigg(
		\sum_{r=1}^{\infty}
		\frac{	q^r_1 \gq^r		}{r}
		\frac{	1-q^r_2		}{1-q^r_1}
		\big[
		q^{-rn}_1q^{mr}_1(v/w)^r - q^{-mr}_1q^{nr}_1
		(w/v)^r\big]
		\bigg)
		\mathbb{I}^{*}_{n}(w)\mathbb{I}_m(v). 
		\label{rtipoipodfidspofidspofisopfisdopfidspof}
	\end{align}

	\begin{figure}[h]
		\unitlength 2mm
		\begin{center}
		\begin{tikzpicture}
			\begin{scope}[very thick, every node/.style={sloped,allow upside down}]
				\draw[color=blue!70!black] (0,0)-- node {\midarrow} (-3,0);
				\draw [color=blue!70!black](3,0)-- node {\midarrow} (0,0);
				\draw[color=green!35!black] (0,3) -- node {\midarrow} (0,0);
			\end{scope}
			every node/.style=draw,
			every label/.style=draw
			]
			\node [label={[shift={(0.5,2.35)}]$ v^{\prime} $}] {};
			\node [label={[shift={(-0.8,2.25)}]$ (0,0) $}] {};
			
			\node [label={[shift={(2.8,-1.0)}]$ (1,N) $}] {};
			\node [label={[shift={(-2.8,-1.0)}]$ (1,N) $}] {};
			
			\node [label={[shift={(0,-1.0)}]$ \bb{I}_m(v^{\prime}) $}] {};
			\node [label={[shift={(2.8,0.04)}]$ q_2^{-1}u $}] {};
			\node [label={[shift={(-2.8,0.04)}]$ q_2^{-2}u $}] {};
			
			\node [label={[shift={(5,-0.45)}]$ \comp $}] {};
			
			\begin{scope}[very thick, every node/.style={sloped,allow upside down}]
				\draw (10,0)[color=blue!70!black]-- node {\midarrow} (7,0);
				\draw (13,0)[color=blue!70!black]-- node {\midarrow} (10,0);
				\draw[color=green!35!black] (10,3) -- node {\midarrow} (10,0);
			\end{scope}
			\node [label={[shift={(10.5,2.35)}]$ v $}] {};
			\node [label={[shift={(9.2,2.25)}]$ (0,0) $}] {};
			
			\node [label={[shift={(12.8,-1.0)}]$ (1,N) $}] {};
			\node [label={[shift={(7.2,-1.0)}]$ (1,N) $}] {};
			
			\node [label={[shift={(10,-1.0)}]$ \bb{I}_n(v) $}] {};
			\node [label={[shift={(12.8,0.04)}]$ u $}] {};
			\node [label={[shift={(7.2,0.04)}]$ q_2^{-1}u $}] {};
		\end{tikzpicture}
		\end{center}
		\caption{Composition $\circ$ of the vector intertwiner and the vector intertwiner. 
			Contrary to the Fock intertwiner, the level $(1,N)$ of the horizontal Fock space is kept intact
			and the shift of the spectral parameter $u$ is independent of the spectral parameters
			$v,v^{\prime}$ of the vertical representation. }
		\label{vectorintertwinervsvectorintertwiner1}
	\end{figure}

	Next let us calculate $ \mathbb{I}_m(v^{\prime})\mathbb{I}_n(v) $ 
	of Fig.\ref{vectorintertwinervsvectorintertwiner1}. 
	The result is 
	\begin{align} 
		\mathbb{I}_m(v^{\prime})\mathbb{I}_n(v)
		=
		z_{m}&\big(	q_2^{-1}u, N \big| v^{\prime}			\big)z_n(u,N \big| v) \nn \\
		& \cdot \exp\bigg( 
		\sum_{r=1}^{\infty}
		(	q^{m-1}_{1}v^{\prime}			)^{-r}(		q^{n}_{1}v				)^{r}
		\frac{1}{r} \cdot 
		\frac{		1 - q^r_2					}{		1 - q^r_1			}
		\bigg) : \tilde{\mathbb{I}}_m(v^{\prime}) \tilde{\mathbb{I}}_n(v):~.
	\end{align}
	
	The formula for $\mathbb{I}_n(v)\mathbb{I}_m(v^{\prime})$ is obtained by
	interchanging the parameters $ v \leftrightarrow v^{\prime} $, and $ n \leftrightarrow m $.
	Hence, by \eqref{propertyofstackzeromodes}, we see that 
	\begin{align}
		\mathbb{I}_n(v)\mathbb{I}_m(v^{\prime}) 
		= q_2^{-n+m}\exp\bigg(
		\sum_{r=1}^{\infty}\frac{q^r_1}{r}\frac{1-q^r_2}{1-q^r_1}
		\big[
		\frac{		(q^m_1v^{\prime})^r			}{	(	q^n_1v		)^r		}
		-
		\frac{			(	q^n_1v		)^r				}{		(q^m_1v^{\prime})^r			}
		\big]
		\bigg)
		\mathbb{I}_m(v^{\prime})\mathbb{I}_n(v). 
		\label{qweopqwiepoifdopsfidspofidspfdsifpodsifsp}
	\end{align}
	A similar computation gives
	\begin{align}
		\mathbb{I}^{*}_{m}(w^{\prime})\mathbb{I}^{*}_{n}(w)
		=
		q_2^{n-m}
		\exp\bigg(
		\sum_{r=1}^{\infty}
		\frac{		\gq^{2r}q^r_1				}{r}
		\frac{	1 - q^r_2				}{1 - q^r_1}
		\big[
		\frac{	(q^n_1w)^r		}{	(q^m_1w^{\prime})^r		} - \frac{(q^m_1w^{\prime})^r}{		(q^n_1w)^r			}
		\big]
		\bigg)
		\mathbb{I}^{*}_{n}(w)\mathbb{I}^{*}_{m}(w^{\prime}). 
		\label{woeudsofudsiofudsiofdsufodsufdoisfusdio}
	\end{align}
	By introducing the function
	\begin{align}
		\Upsilon\big(	\alpha \big| x			\big)
		= \exp\bigg(
		\sum_{r=1}^{\infty}
		\frac{	\alpha^r(1-q^r_2)				}{	r(1 - q^r_1)			}
		(	x^r - x^{-r}	)
		\bigg),
		\label{definitiondsofdsufiosdufodisfudsoifuosiduf}
	\end{align}
	we can write the equations 
	\eqref{rtipoipodfidspofidspofisopfisdopfidspof},
	\eqref{qweopqwiepoifdopsfidspofidspfdsifpodsifsp} and
	\eqref{woeudsofudsiofudsiofdsufodsufdoisfusdio}
	more compactly;
	\begin{align}
		\mathbb{I}_m(v)\mathbb{I}^{*}_n(w)
		&= q_2^{m-n}
		\Upsilon\big(
		q_1\gq~\big|~\frac{v}{w}q^{m-n}_{1}
		\big)
		\mathbb{I}^{*}_{n}(w)\mathbb{I}_m(v),
		\label{34sdfiusfodsufiodsfudisofusdo}
		\\
		\mathbb{I}_n(v)\mathbb{I}_m(v^{\prime}) 
		&= q_2^{-n+m}
		\Upsilon\big(
		q_1~\big|~q^{m-n}_1\frac{v^{\prime}}{v}
		\big)
		\mathbb{I}_m(v^{\prime})\mathbb{I}_n(v), 
		\label{35sdofudsoifudsoifusiofuiouoisdfods}
		\\
		\mathbb{I}^{*}_{m}(w^{\prime})\mathbb{I}^{*}_{n}(w)
		&=
		q_2^{n-m}
		\Upsilon\big(
		\gq^2q_1~\big|~q_1^{n-m}\frac{w}{w^{\prime}}
		\big)
		\mathbb{I}^{*}_{n}(w)\mathbb{I}^{*}_{m}(w^{\prime}). 
		\label{36rwepriewopidspofidsopfidspofisdpofspdif}
	\end{align}

	
	Now we are ready to calculate the commutation relation between
	$ \mathcal{T} $ operator and the vector intertwiner $ \mathbb{I}_n(v) $. 
	Consider the composition $ \mathcal{T}^{m}_{n}\big(	N, q^{-1}u \big| v, w			\big)\mathbb{I}_k(z) $ 
	corresponding to Fig.\ref{figure6naja} below. 
	\begin{figure}[h]
		\unitlength 3.3mm
		\begin{center}
		\begin{tikzpicture}
			\begin{scope}[very thick, every node/.style={sloped,allow upside down}]
				\draw[color=blue!70!black] (0,0)-- node {\midarrow} (-4,0);
				\draw [color=blue!70!black](4,0)-- node {\midarrow} (0,0);
				\draw[color=green!35!black] (1.3,3) -- node {\midarrow} (1.3,0);
				\draw[color=green!35!black] (-1.3,0) -- node {\midarrow} (-1.3,-3);
			\end{scope}
			every node/.style=draw,
			every label/.style=draw
			]
			\node [label={[shift={(1.8,2.35)}]$ v $}] {};
			\node [label={[shift={(0.5,2.25)}]$ (0,0) $}] {};
			
			\node [label={[shift={(-1.3,0.01)}]$ \bb{I}^*_n(w) $}] {};
			\node [label={[shift={(-0.85,-2.95)}]$ w $}] {};
			\node [label={[shift={(-2.0,-3.1)}]$ (0,0) $}] {};
			
			\node [label={[shift={(0,0)}]$ q_2^{-2}u $}] {};
			\node [label={[shift={(0,-1.0)}]$ (1,N) $}] {};
			
			\node [label={[shift={(3.8,-1.0)}]$ (1,N) $}] {};
			\node [label={[shift={(-3.8,-1.0)}]$ (1,N) $}] {};
			
			\node [label={[shift={(1.6,-1.0)}]$ \bb{I}_m(v) $}] {};
			\node [label={[shift={(3.8,0.04)}]$ q_2^{-1}u $}] {};
			\node [label={[shift={(-3.8,0.04)}]$ q_2^{-1}u $}] {};
			
			\node [label={[shift={(5.5,-0.45)}]$ \comp $}] {};
			
			\begin{scope}[very thick, every node/.style={sloped,allow upside down}]
				\draw (10,0)[color=blue!70!black]-- node {\midarrow} (7,0);
				\draw (13,0)[color=blue!70!black]-- node {\midarrow} (10,0);
				\draw[color=green!35!black] (10,3) -- node {\midarrow} (10,0);
			\end{scope}
			\node [label={[shift={(10.5,2.35)}]$ z $}] {};
			\node [label={[shift={(9.2,2.25)}]$ (0,0) $}] {};
			
			\node [label={[shift={(12.8,-1.0)}]$ (1,N) $}] {};
			\node [label={[shift={(7.2,-1.0)}]$ (1,N) $}] {};
			
			\node [label={[shift={(10,-1.0)}]$ \bb{I}_k(z) $}] {};
			\node [label={[shift={(12.8,0.04)}]$ u $}] {};
			\node [label={[shift={(7.2,0.04)}]$ q_2^{-1}u $}] {};
		\end{tikzpicture}
		\end{center}
		\caption{Composition $\circ$ of the vector intertwiner and the shift operator. }
		\label{figure6naja}
	\end{figure}
	
	Then, by using \eqref{34sdfiusfodsufiodsfudisofusdo} and \eqref{35sdofudsoifudsoifusiofuiouoisdfods}, we obtain 
	\begin{align}
		\mathcal{T}^{m}_{n}\big(	N, q^{-1}_2u \big| v, w			\big)\mathbb{I}_k(z)
		&= \mathbb{I}^{*}_{n}(w)\mathbb{I}_m(v)\mathbb{I}_k(z)
		\notag 
		= \mathbb{I}^{*}_{n}(w)
		\bigg(
		q_2^{-m+k}\Upsilon\big(	q_1 \big| q^{k-m}_{1}\frac{z}{v}		\big)
		\bigg)
		\mathbb{I}_k(z)\mathbb{I}_m(v)
		\\
		&= q_2^{-m+k}\Upsilon\big( q_1
		\big|
		q_1^{k-m}\frac{z}{v}
		\big)
		q_2^{n-k}\Upsilon\big( q_1\gq
		\big|
		q^{k-n}_{1}\frac{z}{w}
		\big)^{-1}
		\mathbb{I}_k(z)\mathbb{I}^{*}_{n}(w)\mathbb{I}_m(v)
		\notag 
		\\
		&= q_2^{n-m}\Upsilon\big( q_1
		\big|
		q_1^{k-m}\frac{z}{v}
		\big)
		\Upsilon\big( q_1\gq
		\big|
		q^{n-k}_{1}\frac{w}{z}
		\big)
		\mathbb{I}_k(z)
		\mathcal{T}^{m}_{n}\big( N, u 
		\big|
		v, w 
		\big).
		\label{red37}
	\end{align}
	Here we have used the property
	$ \Upsilon\big(\alpha\big|x\big)^{-1} = \Upsilon\big(\alpha\big|\frac{1}{x}\big) $ which follows directly from the definition \eqref{definitiondsofdsufiosdufodisfudsoifuosiduf}. 
	Note that, during the calculation, we do not need to be careful too much about the horizontal spectral parameters,
	since we have dealt with this issue when we calculated the commutation of the vector intertwiner and dual vector intertwiner. 
	The only thing we need to be careful is the initial point, 
	$ \mathcal{T}^{m}_{n}\big(	N, q_2^{-1}u \big| v, w			\big)\mathbb{I}_k(z) $, 
	and the final point, $ \mathbb{I}_k(z)
	\mathcal{T}^{m}_{n}\big( N, u 
	\big|
	v, w 
	\big) $.

	Now consider the composition $ \mathcal{T}^{m}_{n}\big(	N, q_2 u \big| v,w	\big)\mathbb{I}^{*}_{k}(z) $ 
	corresponding to Fig.\ref{figurednaja} below.
	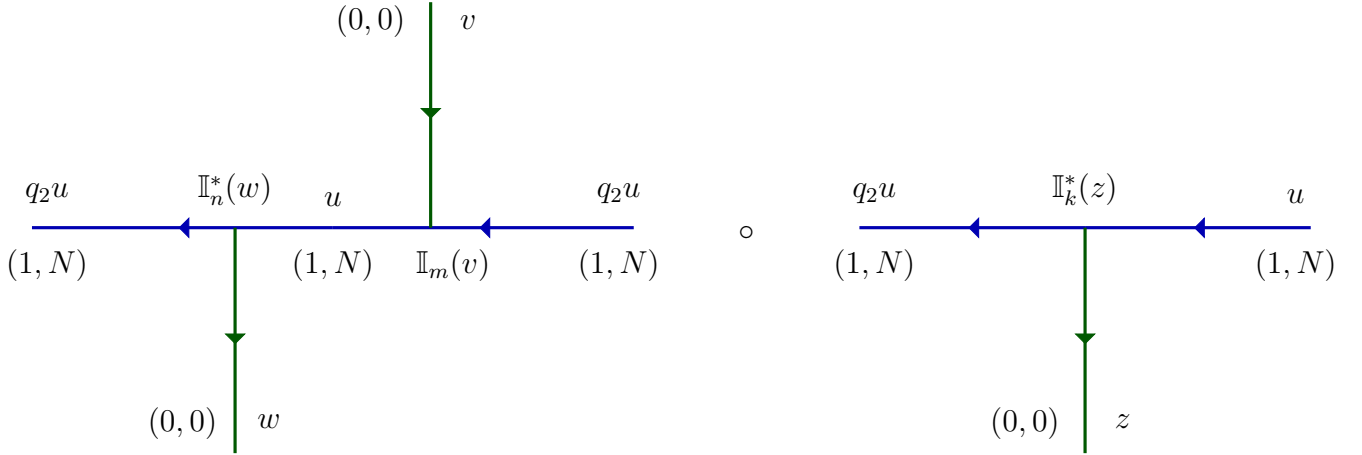
\begin{figure}[h]
		\unitlength 3.3mm
		\begin{center}
		\begin{tikzpicture}
			\begin{scope}[very thick, every node/.style={sloped,allow upside down}]
				\draw[color=blue!70!black] (0,0)-- node {\midarrow} (-4,0);
				\draw [color=blue!70!black](4,0)-- node {\midarrow} (0,0);
				\draw[color=green!35!black] (1.3,3) -- node {\midarrow} (1.3,0);
				\draw[color=green!35!black] (-1.3,0) -- node {\midarrow} (-1.3,-3);
			\end{scope}
			every node/.style=draw,
			every label/.style=draw
			]
			\node [label={[shift={(1.8,2.35)}]$ v $}] {};
			\node [label={[shift={(0.5,2.25)}]$ (0,0) $}] {};
			
			\node [label={[shift={(-1.3,0.01)}]$ \bb{I}^*_n(w) $}] {};
			\node [label={[shift={(-0.85,-2.95)}]$ w $}] {};
			\node [label={[shift={(-2.0,-3.1)}]$ (0,0) $}] {};
			
			\node [label={[shift={(0,0)}]$ u $}] {};
			
			\node [label={[shift={(3.8,-1.0)}]$ (1,N) $}] {};
			\node [label={[shift={(-3.8,-1.0)}]$ (1,N) $}] {};
			\node [label={[shift={(0,-1.0)}]$ (1,N) $}] {};
			
			\node [label={[shift={(1.6,-1.0)}]$ \bb{I}_m(v) $}] {};
			\node [label={[shift={(3.8,0.04)}]$ q_2u $}] {};
			\node [label={[shift={(-3.8,0.04)}]$ q_2u $}] {};
			
			\node [label={[shift={(5.5,-0.45)}]$ \comp $}] {};
			
			\begin{scope}[very thick, every node/.style={sloped,allow upside down}]
				\draw (10,0)[color=blue!70!black]-- node {\midarrow} (7,0);
				\draw (13,0)[color=blue!70!black]-- node {\midarrow} (10,0);
				\draw[color=green!35!black] (10,0) -- node {\midarrow} (10,-3);
			\end{scope}
			\node [label={[shift={(10.5,-2.95)}]$ z $}] {};
			\node [label={[shift={(9.2,-3.1)}]$ (0,0) $}] {};
			
			\node [label={[shift={(12.8,-1.0)}]$ (1,N) $}] {};
			\node [label={[shift={(7.2,-1.0)}]$ (1,N) $}] {};
			
			\node [label={[shift={(10,0.01)}]$ \bb{I}^*_k(z) $}] {};
			\node [label={[shift={(12.8,0.04)}]$ u $}] {};
			\node [label={[shift={(7.2,0.04)}]$ q_2u $}] {};
		\end{tikzpicture}
		\end{center}
		\caption{Composition $\circ$ of the dual vector intertwiner and the shift operator. }
		\label{figurednaja}
	\end{figure}
	We can compute
	\begin{align}
		\mathcal{T}^{m}_{n}\big(
		N,q_2 u \big| v, w 
		\big)
		\mathbb{I}^{*}_{k}(z)
		&=
		\mathbb{I}^{*}_{n}(w)\mathbb{I}_m(v)\mathbb{I}^{*}_{k}(z)
		=
		q_2^{m-k}\Upsilon\big(
		q_1\gq \big| \frac{v}{z}q^{m-k}_{1}
		\big)
		\mathbb{I}^{*}_{n}(w)\mathbb{I}^{*}_{k}(z)\mathbb{I}_m(v)
		\notag 
		\\
		&= q_2^{m-k}\Upsilon\big( q_1\gq
		\big|\frac{v}{z}q^{m-k}_{1}
		\big)
		q_2^{k-n}\Upsilon\big(\gq^2q_1
		\big|
		q^{k-n}_{1}\frac{z}{w}\big)
		\mathbb{I}^{*}_{k}(z)\mathbb{I}^{*}_{n}(w)\mathbb{I}_m(v)
		\notag
		\\
		&= q_2^{m-n}\Upsilon\big( q_1\gq
		\big|\frac{v}{z}q^{m-k}_{1}
		\big)
		\Upsilon\big(\gq^2q_1
		\big|
		q^{k-n}_{1}\frac{z}{w}\big)
		\mathbb{I}^{*}_{k}(z)
		\mathcal{T}^{m}_{n}\big(
		N, u \big| v, w 
		\big). 
		\label{red38}
	\end{align}
	
	From Eqs. \eqref{red37}, \eqref{red38} and the definition \eqref{defi5.6} we see that
	\begin{align}
		\cals{T}^+_n(N, u \big| v)\bb{I}_k(z) &= 
		\frac{		\Upsilon(	q_1 \big| q^{n-k}_1\gq^{1/2}\frac{v}{z}		)			}{	\Upsilon(q_1\gq \big| q^{n-k}_1\gq^{-1/2}\frac{v}{z})			}\bb{I}_k(z)\cals{T}^+_n(N, q_2u \big| v),
		\label{pro5.20}
		\\
		\cals{T}^-_n(N, u \big| v)\bb{I}_k(z) &=
		\frac{	\Upsilon(	q_1 \big| q^{k-n}_1\gq^{1/2}\frac{z}{v})}
		{	\Upsilon(q_1\gq \big| q^{k-n}_1\gq^{-1/2}\frac{z}{v})					}
		\bb{I}_k(z)\cals{T}^-_n(N, q_2u \big| v),
		\label{pro5.21}
		\\
		\cals{T}^+_n(N, u \big| v)\bb{I}^*_k(z) &=
		\frac {	\Upsilon(	q_1\gq \big| q^{k-n}_1\gq^{-1/2}\frac{z}{v}		)		
		}{	\Upsilon(	q_1\gq^2 \big| q^{k-n}_1\gq^{1/2}\frac{z}{v}		)		
		}
		\bb{I}^*_k(z)\cals{T}^+_n(N, q_2^{-1}u \big| v),
		\label{pro5.22}
		\\
		\cals{T}^-_n(N, u \big| v)\bb{I}^*_k(z) &=
		\frac{
			\Upsilon(   q_1\gq \big| q^{n-k}_1\gq^{-1/2}\frac{v}{z}				)}
		{ \Upsilon(	q_1\gq^2 \big| q^{n-k}_1\gq^{1/2}\frac{v}{z}				)}
		\bb{I}^*_k(z)\cals{T}^-_n(N, q^{-1}_2u \big| v). 
		\label{pro5.23}
	\end{align}
	The function $\Upsilon(\alpha \big| x)$ involves both the positive and the negative powers of the parameter $x$.
	However the combinations appearing in Eqs.\eqref{pro5.20} -- \eqref{pro5.23} have only the positive or the negative
	powers. Hence, we expect that the proportional factors in these equations
	can be written in terms of the $ R $-matrix of the vector representation.
	We will see this is in fact the case in the next subsection.

	\subsection{$ R $-matrix of the vector representation}
	
	We can define the $ R $-matrix of the vector representation of the DIM algebra 
	in terms of the universal $ R $-matrix $ \cals{R}_0 $ \eqref{universalR};
	\begin{gather}
		\bigg[\big(	\rho^V_{v_1} \tens \rho^V_{v_2}				\big)\big(	\cals{R}_0		\big)\bigg]
		\bigg[		[v_1]_m \tens [v_2]_l					\bigg]
		= R_{ml}\big(	\frac{v_1}{v_2}				\big)\bigg[		[v_1]_m \tens [v_2]_l					\bigg]. 
	\end{gather}
	Thus, to determine the $ R_{ml}(\frac{v_1}{v_2}) $ we have to calculate the following quantity : 
	\begin{gather}
		\exp\bigg(
		\sum_{r=1}^{\infty}r \kappa_r \rho^V_{v_1}(h_{-r}) \tens \rho^V_{v_2}(h_r)
		\bigg)\bigg[
		[v_1]_m \tens [v_2]_l
		\bigg].
	\end{gather}
	From Eqs.\eqref{2.12},  \eqref{vecKp} and \eqref{vecKm}, we see that 
	\begin{equation}
		\rho^V_v(H_{\pm r})[v]_i = \pm \frac{(q^i_1v)^{\pm r}}{r}q^{(1\pm 1)r/2}_1(q^{\pm r}_2 - 1)(q^{\pm r}_3 - 1)[v]_i.
	\end{equation}
	Hence, 
	\begin{equation}\label{Lemma512}
		\rho^V_{v}(h_{\pm r})[v]_i = \pm \frac{1}{\kappa_r}\rho^V_{v}(H_{\pm r}) [v]_i
		=  \frac{(q^i_1v)^{\pm r}}{r \kappa_r}q^{(1\pm 1)r/2}_1(q^{\pm r}_2 - 1)(q^{\pm r}_3 - 1)[v]_i.
	\end{equation}
	Accordingly, we obtain 
	\begin{align}\label{6.39}
		R_{ml}\Big(		\frac{v_1}{v_2}			\Big)
		&= \exp\bigg[
		- \sum_{r=1}^{\infty}\frac{1}{r}\Big(	q^{m-l}_{1}\frac{v_1}{v_2}			\Big)^{-r}
		\frac{	q^r_1(1-q^r_2)(1-q^r_3)			}{1-q^r_1}
		\bigg]
		\nn \\
		&= \frac{\bigg(	q^{l-m+1}_{1}\frac{v_2}{v_1}			; q_1		\bigg)_{\infty}
			\bigg(q^{l-m}_{1}\frac{v_2}{v_1} ; q_1
			\bigg)_{\infty}}
		{\bigg(	\frac{q^{l-m}_{1}}{q_2}\frac{v_2}{v_1}	; q_1		\bigg)_{\infty}
			\bigg(	\frac{q^{l-m}_{1}}{q_3}\frac{v_2}{v_1}	; q_1		\bigg)_{\infty}},
	\end{align}
	where we have used the formula \eqref{q-factorial}.
	
	By direct calculation, it is easy to see that the following relation 
	\begin{gather}\label{U-Rrelation}
		R_{nl}(x) = \frac{\Upsilon\big(	q_1\gq^{m} \big| q^{n-l}_1\gq^{m}x			\big)		}
		{	\Upsilon\big(	q_1\gq^{m+1} \big| q^{n-l}_1\gq^{m-1}x			\big)					}
	\end{gather}
	holds for any $ m \in \bb{Z} $. Combining \eqref{U-Rrelation} with $ \Upsilon(\alpha\big|x)^{-1} = \Upsilon(\alpha\big| x^{-1}) $, 
	we can write Eqs.\eqref{pro5.20} -- \eqref{pro5.23} in terms of the $R$ matrix;
	\begin{align}
		\cals{T}^+_n(N, u \big| v)\bb{I}_k(z) &= 
		R_{nk}(	 \gq^{1/2}\frac{v}{z}	)\bb{I}_k(z)\cals{T}^+_n(N, q_2u \big| v),
		\label{prrrrro5.20}
		\\
		\cals{T}^-_n(N, u \big| v)\bb{I}_k(z) &=
		R_{kn}(	 \gq^{1/2}\frac{z}{v}				)
		\bb{I}_k(z)\cals{T}^-_n(N, q_2u \big| v),
		\label{prrrrrro5.21}
		\\
		\cals{T}^+_n(N, u \big| v)\bb{I}^*_k(z) &=
		\frac{1}{
			R_{nk}(	 \gq^{-1/2}\frac{v}{z}			)
		}
		\bb{I}^*_k(z)\cals{T}^+_n(N, q_2^{-1}u \big| v),
		\label{prrrrrro5.22}
		\\
		\cals{T}^-_n(N, u \big| v)\bb{I}^*_k(z) &=
		\frac{1}{
			R_{kn}(	 \gq^{-1/2}\frac{z}{v}					)
		}
		\bb{I}^*_k(z)\cals{T}^-_n(N, q^{-1}_2u \big| v). 
		\label{prrrrrro5.23}
	\end{align}
	
	\subsection{KZ equation for the vector intertwiners}
	
	We define the correlation function of vector intertwiners and its duals by 
	\begin{align}
		\mathcal{G}_{\lambda_1,\cdots,\lambda_n}^{\mu_1,\cdots,\mu_m}
		\big(	&u_n \big|~{w_1,\cdots,w_m, z_1,\cdots,z_n} 		\big)
		\notag \\
		:= 
		&\langle\emptyset|
		\mathbb{I}^{*}_{\mu_1}\big(	q_2^{-(n-m)}u_n \big| w_1			\big)
		\cdots \mathbb{I}^{*}_{\mu_{m-1}}\big(	q_2^{-(n-2)}u_n \big| w_{m-1}				\big)\mathbb{I}^{*}_{\mu_m}\big(
		q_2^{-(n-1)}u_n		\big| w_m
		\big)
		\notag
		\\
		&\cdot~
		\mathbb{I}_{\lambda_1}\big(	q_2^{-(n-1)}u_n \big| z_1		\big)
		\cdots\mathbb{I}_{\lambda_k}\big(	q_2^{-(n-k)}u_n	\big|		z_k
		\big)
		\cdots 
		\mathbb{I}_{\lambda_n}\big(	 u_n	\big|	z_n		\big)
		|\emptyset\rangle.
	\end{align}
	Consider the shift of the spectral parameter of the vector intertwiner;
	\begin{align}
		(\gq^{-2})^{z_k\frac{\partial}{	\partial z_k		}}
		&\mathcal{G}_{\lambda_1,\cdots,\lambda_n}^{\mu_1,\cdots,\mu_m}
		\big(	u_n \big|~{w_1,\cdots,w_m, z_1,\cdots,z_n} 		\big)
		= 
		\notag 
		\\
		\langle\emptyset|
		&\mathbb{I}^{*}_{\mu_1}\big(	q_2^{-(n-m)}u_n \big| w_1			\big)
		\cdots \mathbb{I}^{*}_{\mu_{m-1}}\big(	q_2^{-(n-2)}u_n \big| w_{m-1}				\big)\mathbb{I}^{*}_{\mu_m}\big(
		q_2^{-(n-1)}u_n		\big| w_m
		\big)
		\notag 
		\\
		&\cdot~
		\mathbb{I}_{\lambda_1}\big(	q_2^{-(n-1)}u_n \big| z_1		\big)
		\cdots\mathbb{I}_{\lambda_k}\big(	q_2^{-(n-k)}u_n	\big|	\gq^{-2}	z_k
		\big)
		\cdots 
		\mathbb{I}_{\lambda_n}\big(	 u_n	\big|	z_n		\big)
		|\emptyset\rangle. 
		\label{whalerpewipofdfisdpfosdpfsdis}
	\end{align}
	Here we explicitly write the information of incoming horizontal for each vector intertwiner. 
	The only independent horizontal spectral parameter is of the incoming horizontal of $ \mathbb{I}_{\lambda_n} $. 
	The other parameters are determined automatically by the property of vector intertwiners. 
	From \eqref{sdweoruiodsufiosudoiuf} we know that 
	\begin{align}
		\mathbb{I}_{\lambda_k}\big( &q_2^{-(n-k)}u_n
		\big| \gq^{-2}z_k
		\big)
		=
		\exp\bigg(
		\sum_{r=1}^{\infty}\frac{q^r_1}{r}\frac{1-q^r_2}{1-q^r_1}(1-\gq^{2r})
		\bigg)
		\notag 
		\\
		&\cdot~
		\cals{T}^{-}_{\lambda_k}\big(
		N,q^{-(n-k)}_{2}u_n \big| \gq^{-3/2}z_k
		\big) \cdot
		\mathbb{I}_{\lambda_k}\big(
		q_2^{-(n-k)}u_n \big| z_k
		\big) \cdot
		\cals{T}^{+}_{\lambda_k}\big(
		N, q^{-(n-k)}_{2}u_n \big| \gq^{-1/2}z_k
		\big).
		\label{40fortydsofusioreuosdufoidsufiof}
	\end{align}
	Hence, we can rewrite $ \mathbb{I}_{\lambda_k}\big(	\gq^{-(n-k)}u_n	\big|	\gq^{-2}	z_k
	\big) $ in \eqref{whalerpewipofdfisdpfosdpfsdis} by using \eqref{40fortydsofusioreuosdufoidsufiof}, which implies
	\begin{align}
		(\gq^{-2})^{z_k\frac{\partial}{	\partial z_k		}}
		&\mathcal{G}_{\lambda_1,\cdots,\lambda_n}^{\mu_1,\cdots,\mu_m}
		\big(	u_n \big|~{w_1,\cdots,w_m, z_1,\cdots,z_n} 		\big)
		\notag 
		\\
		= 
		\langle\emptyset|
		&\mathbb{I}^{*}_{\mu_1}\big(	q_2^{-(n-m)}u_n \big| w_1			\big)
		\cdots \mathbb{I}^{*}_{\mu_{m-1}}\big(	q_2^{-(n-2)}u_n \big| w_{m-1}				\big)\mathbb{I}^{*}_{\mu_m}\big(
		q_2^{-(n-1)}u_n		\big| w_m
		\big)
		\notag
		\\
		&\cdot~
		\mathbb{I}_{\lambda_1}\big(	q_2^{-(n-1)}u_n \big| z_1		\big)
		\cdots
		\mathbb{I}_{\lambda_{k-1}}\big(
		q_2^{-(n-k+1)}u_n  \big| z_{k-1}
		\big)
		\notag 
		\\
		&\cdot~
		\cals{T}^{-}_{\lambda_k}\big(
		N,q^{-(n-k+1)}_{2}u_n \big| \gq^{-3/2}z_k
		\big) \cdot
		\mathbb{I}_{\lambda_k}\big(
		q_2^{-(n-k)}u_n \big| z_k
		\big) \cdot
		\cals{T}^+_{\lambda_k}\big(
		N, q^{-(n-k)}_{2}u_n \big| \gq^{-1/2}z_k
		\big)
		\notag 
		\\
		&\cdot~
		\mathbb{I}_{\lambda_{k+1}}\big(	q_2^{-(n-k-1)}u_n	\big|	z_{k+1}
		\big)
		\cdots 
		\mathbb{I}_{\lambda_n}\big(	u_n \big|z_n\big)
		|\emptyset\rangle
		\notag \\
		&\cdot 
		\exp\bigg(
		\sum_{r=1}^{\infty}\frac{q^r_1}{r}\frac{1-q^r_2}{1-q^r_1}(1-\gq^{2r})
		\bigg).
		\label{42forweioudoisfuoiruewoiusdofiudsoewr}
	\end{align}
	
	Now by using \eqref{prrrrro5.20}, \eqref{prrrrrro5.21} and \eqref{prrrrrro5.23}, 
	we can transport $ \cals{T}^{-}_{\lambda_k}$ and $ \cals{T}^+_{\lambda_k}$
	in \eqref{42forweioudoisfuoiruewoiusdofiudsoewr} until they hit the left and the right vacua.
	Then, using \eqref{seventennfsofudsiofudsoifudsoifudsofidsfusdfs} and
	\eqref{eightteendsfjdsofusdoifudsoifudsfosdufoisdu}, we calculate
	$ \langle\emptyset|
	\cals{T}^{-}_{\lambda_k}\big( N, q^{m-n}_{2}u_n
	\big| \gq^{-3/2}z_k
	\big) $ and
	$ \cals{T}^{+}_{\lambda_k}\big(
	N, u_n			\big| \gq^{-1/2}z_k
	\big)
	|\emptyset\rangle $. 
	The result cancels the exponential factor in 
	\eqref{42forweioudoisfuoiruewoiusdofiudsoewr}.
	Hence, we finally obtain
	\begin{gather}
		(\gq^{-2})^{z_k\frac{\partial}{	\partial z_k		}}
		\mathcal{G}_{\lambda_1,\cdots,\lambda_n}^{\mu_1,\cdots,\mu_m}
		\big(	u_n \big|~{w_1,\cdots,w_m, z_1,\cdots,z_n} 		\big)
		= A_k \cdot \mathcal{G}_{\lambda_1,\cdots,\lambda_n}^{\mu_1,\cdots,\mu_m}
		\big(	u_n \big|~{w_1,\cdots,w_m, z_1,\cdots,z_n} 		\big),
		\notag \\
		A_k := \prod_{j=1}^{m}
		\Big[
		R_{\mu_j\lambda_k}\Big(	\frac{\gq w_j}{z_k}				\Big)
		\Big]
		\prod_{j=1}^{k-1}
		\Big[
		R_{\lambda_j\lambda_k}\Big(
		\gq^2 \frac{z_j}{z_k}
		\Big)^{-1}
		\Big]
		\prod_{j=k+1}^{n}
		\Big[
		R_{\lambda_k\lambda_j}\Big( \frac{z_k}{z_j}
		\Big)
		\Big].\label{5.41}
	\end{gather}

	Now we consider the shift of the spectral parameter of the dual vector intertwiner;
	\begin{align}
		( \gq^{-2} )^{w_j\frac{	\partial		}{\partial w_j}}&\mathcal{G}_{\lambda_1,\cdots,\lambda_n}^{\mu_1,\cdots,\mu_m}
		\big(	u_n \big|~{w_1,\cdots,w_m, z_1,\cdots,z_n} 		\big)
		\notag \\ 
		=
		&\langle\emptyset|
		\mathbb{I}^{*}_{\mu_1}\big(
		q^{-(n-m)}_2u_n		\big| w_1
		\big)
		\cdots
		\mathbb{I}^{*}_{\mu_{j-1}}
		\big(
		q_2^{	-(n-m+j-2)		}u_n		\big| w_{j-1}
		\big)
		\notag 
		\\
		&\cdot ~
		\mathbb{I}^{*}_{\mu_j}\big(
		q_2^{-(n-m+j-1)}u_n			\big| \gq^{-2}w_j
		\big)
		\mathbb{I}^{*}_{\mu_{j+1}}\big(
		q_2^{-(n-m+j)}u_n			\big| w_{j+1}
		\big)
		\cdots
		\mathbb{I}^{*}_{\mu_m}\big(
		q_2^{-(n-1)}u_n\big| w_m
		\big)
		\notag \\ 
		&\cdot~ \mathbb{I}_{\lambda_1}\big(
		q_2^{-(n-1)}u_n\big|z_1
		\big) \cdots 
		\mathbb{I}_{\lambda_n}\big(
		u_n\big|z_n
		\big)
		|\emptyset\rangle .
	\end{align}
	We can rewrite 
	\begin{align}
		\mathbb{I}^{*}_{\mu_j}&\big(
		q_2^{-(n-m+j-1)}u_n			\big| \gq^{-2}w_j
		\big)
		=
		\exp\bigg[
		\sum_{r=1}^{\infty}
		\frac{		q^r_1( 1 - q^r_2 )				}{r ( 1 - q^r_1 )}
		(	\gq^{2r} - 1		)
		\bigg]
		\notag \\
		&\cdot~
		\cals{T}^{-}_{\mu_j}\big(
		N, q^{-(n-m+j-1)}_{2}u_n			\big| \gq^{-1/2}w_j
		\big)^{-1} \cdot
		\mathbb{I}^{*}_{\mu_j}(w_j) \cdot
		\cals{T}^{+}_{\mu_j}\big(
		N, q^{-(n-m+j)}_{2}u_n\big| \gq^{-3/2}w_j
		\big)^{-1}. 
		\label{5.43}
	\end{align}
	Now we move the operator
	$ \cals{T}^+_{\mu_j} $ to the rightmost and
	$ \cals{T}^-_{\mu_j} $ to the leftmost as above. 
	By using \eqref{seventennfsofudsiofudsoifudsoifudsofidsfusdfs} and \eqref{eightteendsfjdsofusdoifudsoifudsfosdufoisdu}, we can calculate
	$ \langle\emptyset|
	\cals{T}^{-}_{\mu_j}\big(
	N, q^{-(n-m)}_{2}u_n\big| \gq^{-1/2}w_j
	\big)^{-1} $ and  $ 
	\cals{T}^+_{\mu_j}\big(
	N,u_n\big|\gq^{-3/2}w_j
	\big)^{-1}
	|\emptyset\rangle   $. The result of this is the cancellation of the exponential factor in \eqref{5.43}. 
	The final result is 
	\begin{gather}
		( \gq^{-2} )^{w_j\frac{	\partial		}{\partial w_j}}\mathcal{G}_{\lambda_1,\cdots,\lambda_n}^{\mu_1,\cdots,\mu_m}
		\big(	u_n \big|~{w_1,\cdots,w_m, z_1,\cdots,z_n} 		\big)
		= A_j^{*} \cdot \mathcal{G}_{\lambda_1,\cdots,\lambda_n}^{\mu_1,\cdots,\mu_m}
		\big(	u_n \big|~{w_1,\cdots,w_m, z_1,\cdots,z_n} 		\big),
		\notag \\ 
		A_j^{*} :=
		\prod_{l=1}^{j-1}
		\Big[
		R_{\mu_l\mu_j}\Big(
		\frac{w_l}{w_j}
		\Big)^{-1}
		\Big]
		\prod_{l=j+1}^{m}
		\Big[
		R_{\mu_j\mu_l}\Big( \gq^{-2}\frac{w_j}{w_l}
		\Big)
		\Big]
		\prod_{k=1}^{n}
		\Big[
		R_{\mu_j\lambda_k}\Big(
		\gq^{-1}\frac{w_j}{z_k}
		\Big)^{-1}
		\Big].\label{5.44}
	\end{gather}
	We call the system of Eqs.\eqref{5.41} and \eqref{5.44} generalized KZ equation for the vector representation.
	

	\subsection{Shift operator from the universal $ R $-matrix}
	\label{section7}
	
	Before proceeding to the generalized KZ equation for the MacMahon representation, in this section we provide an alternative method to obtain
	the shift operator. This method is based on the use of the universal $ R $-matrix of DIM algebra $\cals{R}_0 $ \eqref{universalR}.
	Consider the quantity $ \bigg\{[\rho^{V}_v \tens \rho^H](\cals{R}_0)\bigg\}[v]_{n-1} \tens |\bullet\rangle $. 
	Here $ \rho^V_v $ is the vector representation with the spectral parameter $ v $ described in section \ref{subsecvectorrep}, 
	and $ \rho^H $ is the horizontal representation described in section \ref{subsechorizontal}. Up to a proportional factor, we see that 
	\begin{gather}
		\bigg\{[\rho^{V}_v \tens \rho^H](\cals{R}_0)\bigg\}[v]_{n-1} \tens |\bullet\rangle 
		\sim 
		\bigg\{
		\exp\big[ \sum_{r=1}^{\infty} r\kappa_r\rho^V(h_{-r}) \tens \rho^H(h_r)									\big]
		\bigg\}\bigg(	[v]_{n-1} \tens |\bullet \rangle					\bigg).
	\end{gather}
	Eq.\eqref{Lemma512} tells us that 
	\begin{align}
		\bigg\{[\rho^{V}_v \tens \rho^H](&\cals{R}_0)\bigg\}[v]_{n-1} \tens |\bullet\rangle 
		\notag \\ 
		&\sim 
		\bigg\{
		\exp\big[  \sum_{r=1}^{\infty}\frac{	q^{-nr}_{1}( 1 - q^{-r}_{3})(	1 - q^{-r}_{2}	)			}{v^r} \tens \frac{\rho^H(H_r)}{\kappa_r}							\big]
		\bigg\}\bigg(	[v]_{n-1} \tens |\bullet \rangle					\bigg)
		\notag \\
		&= 
		\displaystyle \bigg\{
		\exp\big[ - \sum_{r=1}^{\infty}\frac{q^{-(n-1)r}_{1}}{v^r} \frac{H_r}{1 - q^r_1}					\big]
		\bigg\} 
		\bigg(	[v]_{n-1} \tens |\bullet \rangle					\bigg).
	\end{align}
	%
	Note that here we suppress the tensor product notation and $ \rho^H $. 
	The context will tell us either $ H_r $ plays a role of generator of the DIM algebra or the operator on a representation. 
	If we compare this operator with Eq.\eqref{eightteendsfjdsofusdoifudsoifudsfosdufoisdu}, we see 
	\begin{equation}\label{T+fromR}
		[\rho^{V}_v \tens \rho^H](\cals{R}_0) \sim \mathcal{T}^{n}_{n}(N,u|\gq^{1/2}v,\gq^{-1/2}v)^{-1} = \cals{T}^+_n\big(N,u \big| v\big).
	\end{equation}

	Now consider the quantity $ \bigg\{[\rho^{V}_v \tens \rho^H](P\cals{R}_0)\bigg\} $. 
	Here the operator $ P $ switches the position between the elements in tensor product, i.e. $ P(h_{-r} \tens h_r) = h_r \tens h_{-r} $. 
	This time Eq.\eqref{Lemma512} implies that 
	\begin{align}
		\bigg\{
		\exp\big[	\sum_{r=1}^{\infty} r\kappa_r\rho^V(h_{r}) \tens &\rho^H(h_{-r})									\big]
		\bigg\}\bigg(	[v]_{n-1} \tens |\bullet \rangle					\bigg)
		\notag \\ 
		&=  \displaystyle \bigg\{
		\exp\big[
		\sum_{r=1}^{\infty}(q^n_1v)^r\frac{1}{1- q^r_1}H_{-r}
		\big]
		\bigg\} 
		\bigg(	[v]_{n-1} \tens |\bullet \rangle					\bigg).
	\end{align}
	If we compare this operator with Eq.\eqref{seventennfsofudsiofudsoifudsoifudsofidsfusdfs},
	we see that
	\begin{equation}\label{T-fromR}
		[\rho^{V}_v \tens \rho^H](P\cals{R}_0) \sim
		\mathcal{T}^{n}_{n}(N,u|\gq^{-1/2}v,\mathfrak{q}^{1/2}v)= \cals{T}^{-}_{n}\big(N , u \big| v			\big).
	\end{equation}
	
	
	\section{Generalized KZ equation for MacMahon intertwiners}
	\label{section8}

	\subsection{MacMahon shift operator}
	
	In the last section we learn that the formulas \eqref{T+fromR} and \eqref{T-fromR} provide us
	the shift operators $ \cals{T}^{\pm} $ from the universal $R$ matrix.
	Replacing the vector representation with the MacMahon representation, we thus consider the quantity 
	\begin{align}
		\Bigg\{
		\bigg[	&\rho^M_{v,K_1} \otimes \rho^H						\bigg]\big(
		\cals{R}_0
		\big)
		\Bigg\}\big(
		|\Pi, v, K_1 \rangle \otimes | \bullet \rangle 
		\big)
		\notag 
		\\
		&= \Bigg\{
		(K_0^{-} \otimes \gq^{d_1})(\gq^{d_1} \otimes K_0^{-})
		\exp\big[ \sum_{n=1}^{\infty} n \kappa_n \rho^{M}_{v,K_1}(h_{-n})	\otimes \rho^H(h_n)								\big]
		\Bigg\}
		\big(
		|\Pi,v,K_1\rangle \otimes |\bullet\rangle 
		\big),
	\end{align}
	where $ \rho^{M}_{v,K_1} $ represents the vertical MacMahon module with the spectral parameter $ v $ 
	and level $ K_1 $, and $ H $ denotes the horizontal Fock module with the spectral parameter $ u $.
	To calculate $ \bigg[	\rho^{M}_{v,K_1}(h_{-n})	\otimes \rho^H(h_n)											\bigg]\Big(
	|\Pi,v,K_1\rangle \otimes |\bullet\rangle 
	\Big) $ we need to know $ \bigg[		\rho^{M}_{v,K_1}(h_{-n})			\bigg]\Big(		|\Pi,v,K_1\rangle 						\Big) $,
	which has been calculated in \cite{MacMahon}. We systematically state the result as the following lemma\footnote
	{Recall the exchange of $q_1$ and $q_2$ in the formulas of \cite{MacMahon}. We have corrected typos in \cite{MacMahon}.};
	\begin{lem}
		For any $ n (\neq 0) \in \mathbb{Z} $ 
		\begin{align*}
			\rho^{M}_{v,K}(h_n)|\Pi,v,K\rangle 
			= \frac{v^n}{n}\bigg[
			\frac{1-K^n}{\kappa_n} + \sum_{	(i,j,k) \in \Pi				} q^{ni}_1q^{nj}_2q^{nk}_3
			\bigg]|\Pi,v,K\rangle. 
		\end{align*}
		\label{lemma3.1}
	\end{lem}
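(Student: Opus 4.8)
The plan is to read off the eigenvalues of $h_n$ from those of the Cartan currents $K^\pm(z)$. Since the MacMahon module is a vertical representation, \eqref{1.16} shows that the modes $H_r$ commute among themselves, so the plane-partition basis $\{|\Pi,v,K\rangle\}$ simultaneously diagonalizes all of them; write $K^\pm(z)|\Pi,v,K\rangle=\varphi^\pm_\Pi(z)|\Pi,v,K\rangle$. By the expansion \eqref{2.12}, $K^+(z)=K_0^+\exp(\sum_{r\ge1}H_rz^{-r})$ and $K^-(z)=K_0^-\exp(-\sum_{r\ge1}H_{-r}z^{r})$, so the $H_{\pm r}$-eigenvalues are exactly the Laurent coefficients of $\log\varphi^\pm_\Pi(z)$, and $h_{\pm r}=\pm H_{\pm r}/\kappa_r$ then gives the claim. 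I will treat $n>0$ via $K^+(z)$; the case $n<0$ is identical with $K^-(z)$ and $\kappa_{-n}=-\kappa_n$.

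The first step is to compute $\varphi^+_\Pi(z)$ through the two nested inductive-limit constructions of section \ref{section3}. For the MacMahon-from-Fock step, $\Delta^{n-1}(K^+(z))=K^+(z)^{\otimes n}$ from \eqref{DNK} together with the modification factor $\gamma^+_{l+1}$ shows that $\varphi^+_\Pi(z)$ factorizes as a product over the Fock layers $\Lambda^{(k)}$ (carrying spectral parameters $q_3^{k-1}v$) times an overall vacuum/level factor coming from $\gamma^+$ and the $K$-dependence. Each layer eigenvalue is computed by the same mechanism one level down: using \eqref{DNK}, the factors $\beta^\pm$ of \eqref{2.23}, and the vector eigenvalue $K^+(z)[w]_i=\tilde{\psi}_1(q_1^iw/z)[w]_i$ from \eqref{vecKp}, the contribution of a $2$d partition $\lambda=\Lambda^{(k)}$ is a product of $\tilde{\psi}_1$ factors over its rows. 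Collecting everything, $\varphi^+_\Pi(z)$ becomes a product over the boxes $(i,j,k)\in\Pi$ times the single vacuum factor.

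The second step is the simplification. Writing $\tilde{\psi}_1(z)=\psi(z)\psi(q_2^{-1}z)^{-1}$ with $\psi(z)=\gq\frac{1-q_3^{-1}z}{1-z}$, the $\tilde{\psi}_1$ factors telescope within each row and the $\beta^+,\gamma^+$ factors telescope across rows and layers, leaving for each box the triality-symmetric factor $\prod_{s=1}^3\frac{1-q_s^{-1}x_{ijk}v/z}{1-q_sx_{ijk}v/z}$ with $x_{ijk}=q_1^iq_2^jq_3^k$, and the vacuum factor $K^{-1/2}\frac{1-Kv/z}{1-v/z}$. Expanding the logarithm with \eqref{q-factorial} and using $\kappa_r=\sum_{s=1}^3(q_s^r-q_s^{-r})$, each box yields $\sum_{r\ge1}\frac{\kappa_r}{r}(x_{ijk}v/z)^r$ while the vacuum factor yields $-\tfrac12\log K+\sum_{r\ge1}\frac{v^r}{rz^r}(1-K^r)$. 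Reading off the coefficient of $z^{-r}$ gives $H_r$, and dividing by $\kappa_r$ produces $\frac{v^r}{r}\big(\frac{1-K^r}{\kappa_r}+\sum_{(i,j,k)\in\Pi}q_1^{ri}q_2^{rj}q_3^{rk}\big)$, which is the asserted value of $h_r$.

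The main obstacle is the telescoping in the second step: one must verify that the nested products of $\tilde{\psi}_1$, $\beta^+$, and $\gamma^+$ factors really do collapse, leaving exactly one symmetric factor per box of $\Pi$ together with a single vacuum factor. This hinges on the plane-partition nesting $\Lambda^{(k)}_i\ge\Lambda^{(k+1)}_i$, which is what makes the interior poles and zeros cancel in pairs, so that precisely the boundary boxes survive. An economical alternative, which I would use to sidestep the bookkeeping, is to prove the box-product form of $\varphi^+_\Pi(z)$ by induction on $|\Pi|$: the exchange relation between $K^+(z)$ and $E(w)$ from section \ref{section2} fixes the multiplicative change of $\varphi^+_\Pi(z)$ upon adding a single box at $(i,j,k)$ to be exactly $\prod_{s=1}^3\frac{1-q_s^{-1}x_{ijk}v/z}{1-q_sx_{ijk}v/z}$, with the base case $\Pi=\varnothing$ supplying the vacuum factor.
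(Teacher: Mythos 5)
Your proposal is correct, but the comparison with ``the paper's own proof'' is peculiar here: the paper does not prove Lemma \ref{lemma3.1} at all --- it states the result and cites \cite{MacMahon} (with a footnote about exchanging $q_1 \leftrightarrow q_2$ and correcting typos). So any self-contained derivation is necessarily a different route, and yours is sound. Your reduction of the lemma to the box-product (``shell'') formula for the Cartan eigenvalue,
\begin{equation*}
K^{+}(z)\,|\Pi,v,K\rangle \;=\; K^{-1/2}\,\frac{1-Kv/z}{1-v/z}\,
\prod_{(i,j,k)\in\Pi}\;\prod_{s=1}^{3}\frac{1-q_s^{-1}x_{ijk}v/z}{1-q_s\,x_{ijk}v/z}\;|\Pi,v,K\rangle ,
\qquad x_{ijk}=q_1^{i}q_2^{j}q_3^{k},
\end{equation*}
is exactly what is needed: taking logarithms, the vacuum factor contributes $\sum_{r}\frac{v^r(1-K^r)}{r}z^{-r}$, each box contributes $\sum_{r}\frac{\kappa_r}{r}(x_{ijk}v/z)^r$ via $\kappa_r=\sum_{s}(q_s^r-q_s^{-r})$, and dividing the $z^{-r}$ coefficient (the $H_r$-eigenvalue, by \eqref{2.12}) by $\kappa_r$ gives precisely the asserted $h_r$-eigenvalue, with the $n<0$ case coming from $K^-(z)$ and $\kappa_{-n}=-\kappa_n$. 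Of your two ways to establish the shell formula, the induction on $|\Pi|$ is the one that really closes the argument: the relation $g(z,w)K^{+}(z)E(w)+g(w,z)E(w)K^{+}(z)=0$ forces the eigenvalue to change, upon adding a box of character $\chi=x_{ijk}v$, by $-g(\chi,z)/g(z,\chi)=\prod_{s}\frac{1-q_s^{-1}\chi/z}{1-q_s\chi/z}$ (using $q_1q_2q_3=1$), and the base case follows from $\gamma^{+}_{1}$ times the Fock vacuum eigenvalue $\psi(q_3v/z)^{-1}=\gq^{-1}\frac{1-q_3v/z}{1-v/z}$, which indeed collapses to $K^{-1/2}\frac{1-Kv/z}{1-v/z}$. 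Two caveats you should make explicit: (i) the induction requires that every plane partition be reachable by adding boxes with nonvanishing $E$-matrix elements, which holds for generic $K$ (and extends to all $K$ since both sides are rational in $K$); (ii) your description of the telescoping in the first route (``within rows'' / ``across rows and layers'') is schematic --- the actual cancellation starts from one $\tilde\psi_1$-factor per row, not per box --- but since you circumvent this by the induction, it does not affect correctness.
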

	Thus, we obtain
	\begin{align}
		\Bigg\{
		&\bigg[	\rho^M_{v,K_1} \otimes \rho^H						\bigg]\big(
		\cals{R}_0
		\big)
		\Bigg\}\big(
		|\Pi, v, K_1 \rangle \otimes | \bullet \rangle 
		\big)
		\notag 
		\\
		= \alpha\,\Bigg\{
		&\exp\bigg[
		- \sum_{n=1}^{\infty} n \kappa_n \bigg(
		\frac{v^{-n}}{n}\bigg[
		\frac{1-K^{-n}}{\kappa_{-n}} + \sum_{	(i,j,k) \in \Pi				} q^{-ni}_1q^{-nj}_2q^{-nk}_3
		\bigg]
		\otimes
		\frac{	H_n				}{\kappa_n}
		\bigg)
		\bigg]
		\Bigg\}
		\big(
		|\Pi, v, K_1 \rangle \otimes | \bullet \rangle 
		\big).
	\end{align}
	where we have used the fact that $ d_1 $ commutes with $ H_r $ and 
	$ \alpha $ is the multiplicative constant in front of the action 
	of $(K_0^{-} \otimes \gq^{d_1})(\gq^{d_1} \otimes K_0^{-})$
	on the state $ |\Pi, v, K_1 \rangle \otimes | \bullet \rangle  $. 
	Thus\footnote{The notation $ \cals{T}^+_{\Pi}(N \big| v, K) $ resembles the notation of 
		$ \cals{T} $ operator of the vector module. However, it is just a matter of notation.},
	\begin{align}
		\cals{T}^{+}_{\Pi}(  v, K			)
		\sim 
		\exp\bigg(
		\sum_{n=1}^{\infty} \bigg[
		\frac{1-K^{-n}}{\kappa_{n}} - \sum_{	(i,j,k) \in \Pi				} q^{-ni}_1q^{-nj}_2q^{-nk}_3
		\bigg] v^{-n} H_n
		\bigg),
		\label{3.6}
	\end{align}
	where we have suppressed the tensor product.
	
	Similarly,
	\begin{align}
		&\Bigg\{
		\bigg[	\rho^M_{v,K_1} \otimes \rho^H						\bigg]\big(
		\cals{P}\cals{R}_0
		\big)
		\Bigg\}\big(
		|\Pi, v, K_1 \rangle \otimes | \bullet \rangle 
		\big)
		\notag \\
		&= \Bigg\{
		(K_0^{-} \otimes \gq^{d_1})(\gq^{d_1} \otimes K_0^{-})
		\exp\bigg[	\sum_{n=1}^{\infty} \bigg( n \kappa_n	
		\frac{v^n}{n}\bigg[
		\frac{1-K^n}{\kappa_n} + \sum_{	(i,j,k) \in \Pi				} q^{ni}_1q^{nj}_2q^{nk}_3
		\bigg] 
		\notag \\ 
		&\quad\quad \otimes 
		(-1)\frac{	H_{-n}			}{\kappa_n}
		\bigg)
		\bigg]
		\Bigg\}
		\big(
		|\Pi,v,K_1\rangle \otimes |\bullet\rangle 
		\big).
	\end{align}
	Note that we have applied Lemma \ref{lemma3.1} as before. Thus, 
	\begin{align}
		&\Bigg\{
		\bigg[	\rho^M_{v,K_1} \otimes \rho^H						\bigg]\big(
		\cals{P}\cals{R}_0
		\big)
		\Bigg\}\big(
		|\Pi, v, K_1 \rangle \otimes | \bullet \rangle 
		\big)
		\notag \\ 
		&= \alpha \,\, \Bigg\{
		\exp\bigg[	- \sum_{n=1}^{\infty} \bigg( n \kappa_n	
		\frac{v^n}{n}\bigg[
		\frac{1-K^n}{\kappa_n} + \sum_{	(i,j,k) \in \Pi				} q^{ni}_1q^{nj}_2q^{nk}_3
		\bigg] 
		\otimes 
		\frac{	H_{-n}			}{\kappa_n}
		\bigg)
		\bigg]
		\Bigg\}
		\big(
		|\Pi,v,K_1\rangle \otimes |\bullet\rangle 
		\big).
	\end{align}
	Suppressing the tensor product, we are able to conclude that
	\begin{align}
		\cals{T}^{-}_{\Pi}(  v, K			)
		\sim
		\exp\bigg(	- \sum_{n=1}^{\infty} 	\bigg[
		\frac{1-K^n}{\kappa_n} + \sum_{	(i,j,k) \in \Pi				} q^{ni}_1q^{nj}_2q^{nk}_3
		\bigg] v^nH_{-n}
		\bigg).
		\label{3.10}
	\end{align}
	
	
	\subsection{Effect of the shift operator}
	
	As in the case of the vector and the Fock intertwiners we expect that we are able to write the shift of the spectral parameter 
	in term of the shift operators $ \cals{T}^+$ and $\cals{T}^- $. Schematically, 
	\begin{align}
		\Xi(pv) = \cals{T}^{-}\cdot \Xi(v) \cdot \cals{T}^{+}.
		\label{schemeofscaling}
	\end{align}
	However, the value of $ p $ in  \eqref{schemeofscaling} is not arbitrary but only certain appropriate values are allowed.
	We first determine these appropriate values of $ p $.

	Recall that the MacMahon intertwiner is given by (see \eqref{macmahonansatttz})
	\begin{align}
		\Xi_{\Lambda}(K;v) = z_{\Lambda}(K;v)\mathcal{M}^{[n]}(K)\tilde{\Phi}^{[n]}_{\Lambda}(v)\Gamma_n(K;v)
	\end{align}
	where $\Gamma_n(K;v)$ and $z_\Lambda(K;v)$ are given by \eqref{4.3} and \eqref{Maczeromode}.
	Note also that  (see \eqref{Fockcomposition})
	\begin{align}
		\tilde{\Phi}^{[n]}_\Lambda(v)
		&= \tilde{\Phi}_{\Lambda^{(1)}}(v) \circ \cdots \circ \tilde{\Phi}_{\Lambda^{(n)}}(q_3^{n-1}v),
	\end{align}
	where the tilde means the intertwiner without the zero mode factor. 
	In the following, introducing the horizontal spectral parameter $u$, we use the zero mode factor
	\begin{align}
		\textbf{e}(z) = K^{1/2} \frac{	\theta_{q_3}(z^{-1})		}{	\theta_{q_3}(K z^{-1})			}u.
	\end{align}
	so that we can write 
	\begin{align}
		z_\Lambda(K;v)
		= \prod_{k=1}^{h(\Lambda)} \prod_{(i,j)\in\Lambda^{(k)}}
		\frac{K}{\gq^{k-1}} \frac{\theta_{q_3}(q_3^{k-1}/x_{ijk})}{\theta_{q_3}(K/x_{ijk})}
		\frac{	\theta_{q_3}(1/(x_{ijk}v))		}{	\theta_{q_3}(K/(x_{ijk}v))			}u. 
	\end{align}
	
	Let us calculate $ \Xi_{\Lambda}(K;pv) $. Firstly,  we see that 
	\begin{align}
		\Gamma_n(K;pv)
		= &\exp\bigg(
		\sum_{r=1}^{\infty} \frac{H_{-r}}{\gq^r - \gq^{-r}}\frac{q^{nr}_3 - K^r}{\kappa_r}
		(\gq^{-1/2}v)^r(p^r - 1)
		\bigg)
		\notag 
		\\
		&\cdot 
		\Gamma_n(K;v)
		\exp\bigg(
		\sum_{r=1}^{\infty}\frac{H_r}{\gq^r - \gq^{-r}}\frac{	q^{-nr}_3 - K^{-r}			}{\kappa_r}
		(\gq^{1/2}v)^{-r}(p^{-r}-1)
		\bigg),
		\label{4.8}
	\end{align}
	and 
	\begin{align}
		z_\Lambda(K;pv)
		&= \prod_{k=1}^{h(\Lambda)} \prod_{(i,j)\in\Lambda^{(k)}}
		\frac{K}{\gq^{k-1}} \frac{\theta_{q_3}(q_3^{k-1}/x_{ijk})}{\theta_{q_3}(K/x_{ijk})}
		\frac{	\theta_{q_3}(1/(x_{ijk}pv))		}{	\theta_{q_3}(K/(x_{ijk}pv))			}u
		\notag 
		\\
		&= z_{\Lambda}(K;v)\prod_{k=1}^{h(\Lambda)}\prod_{(i,j)\in\Lambda^{(k)}}
		\frac{	\theta_{q_3}(1/(x_{ijk}pv))		}{	\theta_{q_3}(K/(x_{ijk}pv))			}
		\frac{	\theta_{q_3}(K / (x_{ijk}v))			}{\theta_{q_3}(1/(x_{ijk}v)	)}. 
		\label{6.13}
	\end{align}
	To calculate $ \tilde{\Phi}^{[n]}_{\Lambda}(pv) $ we need to know an explicit expression of $ \tilde{\Phi}^{[n]}_{\Lambda}(v) $ first. 
	It is convenient to introduce a notation
	\begin{equation}\label{chi(k)}
		\chi_{\pm r}^{(k)} (q_1, q_2) :=
		\sum_{	(i,j) \in \Lambda^{(k)}			}x^{\pm r}_{ij} - \frac{1}{(	1- q^{\pm r}_1 )( 1-	q^{\pm r}_2 )},
	\end{equation}
	Using this notation, we state its expression systematically as the following lemma;
	\begin{lem}
		\begin{align*}
			\tilde{\Phi}^{[n]}_{\Lambda}(v) 
			= c(\Lambda) \,
			\exp\big[
			\sum_{r=1}^{\infty}\frac{H_{-r}}{\gq^r - \gq^{-r}}
			\sum_{k=1}^{n}(	\gq^{-1/2}q^{k-1}_3v				)^r \chi_r^{(k)} (q_1, q_2)
			\big]
			\notag \\ 
			\exp\big[
			-\sum_{r=1}^{\infty}\frac{H_r}{\gq^r - \gq^{-r}}\sum_{k=1}^{n}(	\gq^{1/2}q^{k-1}_3v		)^{-r}
			\chi_{-r}^{(k)} (q_1, q_2)
			\big],
		\end{align*}
		where $ c(\Lambda) $ is a constant which may depend on $ \Lambda $. 
		\label{lemma4.1}
	\end{lem}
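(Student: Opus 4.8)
The plan is to realize $\tilde{\Phi}^{[n]}_\Lambda(v)$ as an ordered operator product of the individual zero-mode-stripped Fock intertwiners and then bring it into normal-ordered form by repeated use of the CBH formula of Lemma~\ref{L1}. First I would substitute the explicit Fock intertwiner recalled in section~\ref{intertwiners} with spectral parameter $q_3^{k-1}v$ and Young diagram $\Lambda^{(k)}$. Using $\gq^{-r/2}(q_3^{k-1}v)^{r}=(\gq^{-1/2}q_3^{k-1}v)^{r}$ and $\gq^{-r/2}(q_3^{k-1}v)^{-r}=(\gq^{1/2}q_3^{k-1}v)^{-r}$, each factor takes the form $\tilde{\Phi}_{\Lambda^{(k)}}(q_3^{k-1}v)=\cals{G}_{\Lambda^{(k)}}^{-1}\,e^{A_k}e^{B_k}$, where
\[
A_k = \sum_{r=1}^{\infty}\frac{H_{-r}}{\gq^r-\gq^{-r}}\big(\gq^{-1/2}q_3^{k-1}v\big)^{r}\chi_r^{(k)}(q_1,q_2),
\]
\[
B_k = -\sum_{r=1}^{\infty}\frac{H_r}{\gq^r-\gq^{-r}}\big(\gq^{1/2}q_3^{k-1}v\big)^{-r}\chi_{-r}^{(k)}(q_1,q_2)
\]
involve only the creation modes $H_{-r}$ and only the annihilation modes $H_r$, respectively. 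The composition $\tilde{\Phi}^{[n]}_\Lambda(v)$ is then the operator product $\big(\prod_{k}\cals{G}_{\Lambda^{(k)}}^{-1}\big)\,e^{A_1}e^{B_1}\cdots e^{A_n}e^{B_n}$.

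Next I would normal order this product. The essential point is that, by the Heisenberg relation~\eqref{Heisenberg}, the mixed commutators $[B_j,A_k]$ collapse via the factor $\delta_{r+s,0}$ to a single scalar sum and are therefore central, while the $A_k$ commute among themselves (only negative modes) and the $B_j$ commute among themselves (only positive modes). Hence moving each $e^{B_j}$ to the right past each $e^{A_k}$ with $k>j$ costs, by Lemma~\ref{L1}, only the scalar factor $e^{[B_j,A_k]}$, after which all the $e^{A}$ collect to the left into $e^{\sum_k A_k}$ and all the $e^{B}$ to the right into $e^{\sum_k B_k}$. Collecting everything gives
\[
\tilde{\Phi}^{[n]}_\Lambda(v) = c(\Lambda)\,\exp\!\Big(\sum_{k=1}^n A_k\Big)\exp\!\Big(\sum_{k=1}^n B_k\Big),
\qquad c(\Lambda)=\prod_{k}\cals{G}_{\Lambda^{(k)}}^{-1}\cdot\prod_{j<k}e^{[B_j,A_k]},
\]
and recognizing $\sum_k A_k$ and $\sum_k B_k$ as exactly the two exponents in the statement completes the identification.

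The computation is essentially bookkeeping, and the only point needing care is confirming that every correction produced by the CBH moves is a genuine scalar, so that the creation- and annihilation-mode content of the normal-ordered expression is precisely the additive superposition $\sum_k A_k$, $\sum_k B_k$ with no residual operator piece; this follows at once from the centrality of $[H_r,H_s]$ in~\eqref{Heisenberg}. A convenient bonus to note is that in each $[B_j,A_k]$ the prefactor $\big(\gq^{1/2}q_3^{j-1}v\big)^{-r}\big(\gq^{-1/2}q_3^{k-1}v\big)^{r}=\gq^{-r}q_3^{(k-j)r}$ is independent of $v$, so $c(\Lambda)$ is a genuine constant in $v$ depending only on $\Lambda$ (and the deformation parameters), exactly as claimed. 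Since the statement asserts only the \emph{existence} of such a $c(\Lambda)$ and not its closed form, the sole potential obstacle — explicitly summing the double product $\prod_{j<k}e^{[B_j,A_k]}$ of theta-type factors — is sidestepped here; its explicit evaluation is precisely what is deferred to the later normalization factors $\mathcal{M}^{[n]}(K)$ and $\cals{G}_{\Lambda}$.
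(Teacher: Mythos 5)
Your proof is correct and follows essentially the same route as the paper: substitute the explicit zero-mode-stripped Fock intertwiners at spectral parameters $q_3^{k-1}v$ into the composition \eqref{Fockcomposition}, then apply the CBH formula of Lemma~\ref{L1} to normal order the product, absorbing the central commutator factors (together with the $\cals{G}_{\Lambda^{(k)}}^{-1}$'s) into $c(\Lambda)$. Your added observation that each $[B_j,A_k]$ prefactor reduces to $\gq^{-r}q_3^{(k-j)r}$, so that $c(\Lambda)$ is genuinely independent of $v$, is a nice touch the paper leaves implicit.
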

	\begin{proof}
		By definition, 
		\begin{align}
			\tilde{\Phi}^{[n]}_\Lambda(v)
			&= \tilde{\Phi}_{\Lambda^{(1)}}(v) \circ \cdots \circ \tilde{\Phi}_{\Lambda^{(n)}}(q_3^{n-1}v) 
			\notag \\
			&= \cals{G}^{-1}_{\Lambda^{(1)}}\cals{G}^{-1}_{\Lambda^{(2)}}\cdots\cals{G}^{-1}_{\Lambda^{(n)}}\big(
			\text{exponential factor}
			\big)
		\end{align}
		where (exponential factor) in the above equation is equal to 
		\begin{align}
			\prod_{k=1}^{n}
			\Bigg[
			\exp\big[
			\sum_{r=1}^{\infty}\frac{H_{-r}}{\gq^r - \gq^{-r}}
			(	\gq^{-1/2}q^{k-1}_3v				)^r
			\big(	\sum_{	(i,j) \in \Lambda^{(k)}			}	x^r_{ij} 	
			- \frac{1}{	(1-q^r_1)(1-q^r_2)			}
			\big)
			\big]
			\notag \\ 
			\exp\big[
			-\sum_{r=1}^{\infty}\frac{H_r}{\gq^r - \gq^{-r}}(	\gq^{1/2}q^{k-1}_3v		)^{-r}
			\big(
			\sum_{	(i,j) \in \Lambda^{(k)}			}x_{ij}^{-r} - 
			\frac{1}{	(1-q^{-r}_1)(1-q^{-r}_2)				}
			\big)
			\big]
			\Bigg].
			\label{4.11}
		\end{align}
		Applying the formula \eqref{CBH}
		to Eq.\eqref{4.11}, we obtain the desired result. 
	\end{proof}
	
	\begin{rem}
		Recall that the aim of this subsection is to calculate the appropriate value of $ p $ 
		by comparing with the exponential factor appearing in \eqref{3.6} and \eqref{3.10}. 
		For this purpose, only the exponential factor of $ H_{\pm r}$ is instrumental, and we do not pay attention to the form of the proportional factor $ c $. 
		For the sake of convenience we introduce the notation $ \sim $ to mean equal up to a proportional factor. 
	\end{rem}
	
	By Lemma \ref{lemma4.1}
	\begin{align}
		\tilde{\Phi}^{[n]}_{\Lambda}&(pv)
		= \exp\big[
		\sum_{r=1}^{\infty}\frac{(p^r-1)H_{-r}}{\gq^r - \gq^{-r}}
		\sum_{k=1}^{n}(	\gq^{-1/2}q^{k-1}_3v				)^r 
		\chi_{r}^{(k)} (q_1, q_2) \big]
		\notag \\ 
		&\cdot \tilde{\Phi}^{[n]}_{\Lambda}(v)
		\exp\big[
		-\sum_{r=1}^{\infty}\frac{(	p^{-r} - 1) H_r}{\gq^r - \gq^{-r}}\sum_{k=1}^{n}(	\gq^{1/2}q^{k-1}_3v		)^{-r}
		\chi_{-r}^{(k)} (q_1, q_2) \big].
		\label{4.12}
	\end{align}
	Merging \eqref{4.8} and \eqref{4.12}, we see that 
	\begin{equation}\label{4.13}
		\Xi_{\Lambda}(K;pv)
		\sim \widetilde{\cals{T}}^{-}_\Lambda(p) \cdot  \, \Xi_{\Lambda}(K;v)\cdot  \,  \widetilde{\cals{T}}^{+}_\Lambda(p),
	\end{equation}
	where $ \widetilde{\cals{T}}^{-}_\Lambda(p)$ and $\widetilde{\cals{T}}^{+}_\Lambda(p)$ are
	\begin{equation}
		\exp\bigg(
		\sum_{r=1}^{\infty} \frac{(p^r-1) H_{-r}}{(\gq^r - \gq^{-r})}
		(\gq^{-1/2}v)^r \Big\{
		\frac{q^{nr}_3 - K^r}{\kappa_r}
		+ \sum_{k=1}^n  q^{r(k-1)}_3
		\chi_{r}^{(k)} (q_1, q_2)
		\Big\}
		\bigg), \nonumber
	\end{equation}
	and
	\begin{equation}
		\exp\bigg(
		\sum_{r=1}^{\infty}\frac{(	p^{-r}-1) H_r}{(\gq^r - \gq^{-r})}
		(\gq^{1/2}v)^{-r} \Big\{
		\frac{	q^{-nr}_3 - K^{-r}			}{\kappa_r}
		- \sum_{k=1}^{n} q^{-r(k-1)}_3 \chi_{- r}^{(k)} (q_1, q_2)
		\Big\}
		\bigg), \nonumber
	\end{equation}
	respectively.
	Again note that when we switch the $ \exp(H_{-r}) $ factor of $ \Gamma_n(K;pv) $ 
	and the $ \exp(-H_r) $ factor of $ \tilde{\Phi}^{[n]}_{\Lambda}(pv) $ 
	the additional proportional factor appears from the formula \eqref{CBH}. 
	However, under the sign $ \sim $ we are able to  pay no attention to it.

	Now compare the $ \exp(H_{-r}) $ factor of \eqref{4.13} with those of \eqref{3.10}. We would like to find a value of $ p $ 
	which renders them to be equal up to a shift of spectral parameter $ v $. That is, we require that 
	\begin{equation}
		\widetilde{\cals{T}}^{-}_\Lambda(p) =
		\exp\bigg(	- \sum_{n=1}^{\infty} \bigg[
		\frac{1-K^n}{\kappa_n} + \sum_{	(i,j,k) \in \Lambda				} q^{ni}_1q^{nj}_2q^{nk}_3
		\bigg] 
		\frac{v^n}{\lambda^n}H_{-n}
		\bigg). 
	\end{equation}
	for some constant $ \lambda $. 
	We find that the above equality holds if and only if 
	\begin{align}
		\frac{p^r - 1}{\gq^r - \gq^{-r}}&\bigg\{
		(\gq^{-1/2}v)^r\big(
		\frac{q^{nr}_3 - K^r}{\kappa_r}
		\big)
		+ \sum_{k=1}^n (	\gq^{-1/2}q^{k-1}_3v		)^r \chi_{r}^{(k)} (q_1, q_2)
		\bigg\}
		\notag \\ 
		&= - \frac{v^r}{\lambda^r}\bigg[
		\frac{1-K^n}{\kappa_n} + \sum_{	(i,j,k) \in \Lambda				} q^{ni}_1q^{nj}_2q^{nk}_3
		\bigg].
	\end{align}
	Recall that we have exchanged $q_1$ and $q_2$ in the Fock representation; $ x_{ij} := q^{i-1}_{1}q^{j-1}_{2} $, hence 
	$ \displaystyle{\sum_{k=1}^n} q^{(k-1)r}_3 \displaystyle{\sum_{(i,j)\in\Lambda^{(k)}}} x^r_{ij} = \sum_{(i,j,k) \in \Lambda}x^r_{ijk}$. 
	We see that for each $ r \in \mathbb{Z}^{\geq 1} $, we should have
	\begin{align}
		p^r = \frac{\lambda^r - \gq^{3r/2} + \gq^{-r/2}}{\lambda^r}.
	\end{align}
	To make the value of $ p $ independent of $ r $ we have to choose $ \lambda = \gq^{3/2} $, and hence we get that $ p = \gq^{-2} $. 
	Thus we see that the shift parameter is fixed. Thus, we have confirmed that 
	\begin{equation}
		\widetilde{\cals{T}}^{-}_\Lambda(\gq^{-2} ) = \cals{T}^{-}_{\Lambda}( \gq^{-3/2} v, K		).
	\end{equation}
	
	
	We expect that the value $ p = \gq^{-2} $ also works for the $ \exp(H_r) $ as well. More precisely,
	when $ p = \gq^{-2} $ we should check that there exists $ \lambda \in \mathbb{C} $ such that 
	\begin{equation}
		\widetilde{\cals{T}}^{+}_\Lambda(p) 
		= \cals{T}^{+}_{\Lambda}( \frac{v}{\lambda}, K			),
		\label{4.18}
	\end{equation}
	which means
	\begin{align}
		\frac{(p^{-r}-1) H_r}{\gq^r - \gq^{-r}}&(\gq^{1/2}v)^{-r}
		\bigg\{
		\frac{	q^{-nr}_3 - K^{-r}			}{\kappa_r}
		-
		\sum_{k=1}^{n}(	q^{k-1}_3				)^{-r} \chi_{-r}^{(k)} (q_1, q_2)
		\bigg\}
		\bigg)
		\notag \\ 
		&= \kappa_r \bigg(
		\frac{v^{-r}}{\lambda^{-r}}\bigg[
		\frac{1-K^{-r}}{\kappa_{r}} - \sum_{	(i,j,k) \in \Lambda			} q^{-ri}_1q^{-rj}_2q^{-rk}_3
		\bigg]
		\frac{	H_n				}{\kappa_n}
		\bigg).
	\end{align} 
	By using the summation formula for geometric series, the above equality becomes 
	\begin{gather}
		( p^{-r} - 1		) = \lambda^r(	\gq^{3r/2} - \gq^{-r/2}	). 
	\end{gather}
	If $ p = \gq^{-2} $, we obtain $\lambda = \gq^{1/2}$.
	Hence, 
	\begin{equation}
		\widetilde{\cals{T}}^{+}_\Lambda(\gq^{-2}) 
		= \cals{T}^{+}_{\Lambda}(  \gq^{-1/2}v, K	)
		\label{4.22}
	\end{equation}
	is satisfied. In summary we get the relation 
	\begin{align}
		\Xi_{\Lambda}(	K ; \gq^{-2}v		) = K^{|\Lambda|}~
		\cals{T}^{-}_{\Lambda}(  \gq^{-3/2} v, K	) \cdot \Xi_{\Lambda}(K ; v) \cdot
		\cals{T}^{+}_{\Lambda}(  \gq^{-1/2}v, K			),
		\label{4.35}
	\end{align}
	where we fix the proportional factor $K^{|\Lambda|}$ in appendix. 
	
	\subsection{Case of the dual intertwiner}
	
	Recall that the dual MacMahon intertwiner is;
	\begin{align}
		\Xi_{\Lambda}^{*}(K;v) = z_{\Lambda}^*(K;v)\cals{M}^{[n]*}(K)\tilde{\Phi}^{[n]*}_{\Lambda}(v)\Gamma^{*}_{n}(K;v),
	\end{align}
	where $\Gamma^{*}_{n}(K;v)$ and $z^*_{\Lambda}(K;v)$ are given by \eqref{4.36} and \eqref{dualMaczeromode}.
	We also have (see \eqref{dualFockcomposition});
	\begin{gather}
		\tilde{\Phi}^{[n]*}_{\Lambda}(v) =
		\tilde{\Phi}^{*}_{\Lambda^{(1)}}(v) \comp \cdots \comp \tilde{\Phi}^{*}_{\Lambda^{(n)}}(q^{n-1}_{3}v). 
	\end{gather}
	We will use the zero mode
	\begin{align}
		\textbf{f}(z) = \frac{	\theta_{q_3}(\gq K z^{-1})			}{\theta_{q_3}(\gq z^{-1})}u^{-1},
	\end{align}
	so that we can write 
	\begin{align}
		z^*_{\Lambda}(K;v) = \prod_{k=1}^{h(\Lambda)}\prod_{(i,j) \in \Lambda^{(k)}}
		\Bigg(
		\frac{K^{1/2}}{\gq^k}\frac{	 \theta_{q_3}(q^k_3x^{-1}_{ijk})			}{\theta_{q_3}(Kx^{-1}_{ijk})}
		\Bigg)^{-1}\frac{	\theta_{q_3}(\gq K \cdot(x_{ijk}v)^{-1})			}{\theta_{q_3}(\gq \cdot(x_{ijk}v)^{-1})}u^{-1}.
	\end{align}
	
	In order to calculate $ \Xi^*_{\Lambda}(K;pv) $ we need to calculate
	$ z_{\Lambda}^*(K;pv), \tilde{\Phi}^{[n]*}_{\Lambda}(pv) $ and $\Gamma^{*}_{n}(K;pv) $.
	From the above equations we see that 
	\begin{align}
		\Gamma^*_n(K;pv) =
		&\exp\bigg(
		-\sum_{r=1}^{\infty}\frac{H_{-r}}{\gq^r - \gq^{-r}}\frac{q^{nr}_3 - K^r}{\kappa_r}
		(\gq^{1/2}v)^r(p^r - 1)
		\bigg)
		\notag \\ 
		&\cdot \Gamma^*_n(K;v)
		\exp\bigg(
		-\sum_{r=1}^{\infty}\frac{H_r}{\gq^r - \gq^{-r}}\frac{	q^{-nr}_3 - K^{-r}		}{\kappa_r}
		(\gq^{-1/2}v)^{-r}(p^{-r}-1)
		\bigg),
		\label{4.30}
	\end{align}
	and 
	\begin{gather}
		z^*_{\Lambda}(K;pv) = z^{*}_{\Lambda}(K;v)
		\prod_{k=1}^{h(\Lambda)}\prod_{(i,j) \in \Lambda^{(k)}}
		\bigg[
		\frac{	\theta_{q_3}( \gq\cdot(x_{ijk}v)^{-1}			)			}{	 \theta_{q_3}(\gq K \cdot(x_{ijk}v)^{-1} )			}
		\frac{	 \theta_{q_3}(	\gq K \cdot(x_{ijk}pv)^{-1}				)						}{
			\theta_{q_3}(\gq\cdot(x_{ijk}pv)^{-1}) }
		\bigg]. 
		\label{6.33}
	\end{gather}
	Using the notation \eqref{chi(k)}, we can state an explicit expression of 
	$\tilde{\Phi}^{[n]*}_{\Lambda}(v) $ as the following lemma;
	\begin{lem}
		\begin{align*}
			\tilde{\Phi}^{[n]*}_{\Lambda}(v)
			= \tilde{c}(\Lambda)
			&\exp\bigg[
			-\sum_{m=1}^{n}\sum_{r=1}^{\infty}\frac{H_{-r}}{\gq^r - \gq^{-r}}(\gq^{1/2}q^{m-1}_{3}v)^r  \chi_{r}^{(m)} (q_1, q_2)
			\bigg]
			\notag \\ 
			&\cdot \exp\bigg[
			\sum_{m=1}^{n}\sum_{r=1}^{\infty}\frac{H_r}{\gq^r - \gq^{-r}}(	\gq^{-1/2}q^{m-1}_{3}v	)^{-r} \chi_{-r}^{(m)} (q_1, q_2)
			\bigg]. 
		\end{align*}
		where $ \tilde{c}(\Lambda) $ is a constant which may depend on $ \Lambda $. 
		\label{lemma4.3}
	\end{lem}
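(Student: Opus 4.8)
The plan is to follow the proof of Lemma \ref{lemma4.1} almost verbatim, starting from the definition \eqref{dualFockcomposition}
\[
\tilde{\Phi}^{[n]*}_{\Lambda}(v)=\tilde{\Phi}^{*}_{\Lambda^{(1)}}(v)\comp\cdots\comp\tilde{\Phi}^{*}_{\Lambda^{(n)}}(q^{n-1}_{3}v),
\]
and inserting the explicit single-layer expression for each dual Fock intertwiner. First I would take the formula \eqref{5.23} for $\Phi^{*}_{\lambda}$, delete the zero-mode prefactor $z^{*}_{\lambda}$ (this is what the tilde does) and keep in mind that the remaining scalar $\cals{G}^{*-1}_{\lambda}$ is constant; then the $m$-th factor $\tilde{\Phi}^{*}_{\Lambda^{(m)}}(q_3^{m-1}v)$ is obtained from \eqref{5.23} by the substitution $\lambda\mapsto\Lambda^{(m)}$, $v\mapsto q_3^{m-1}v$.

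The key step is to rewrite the bracketed coefficients of $H_{-r}$ and $H_{r}$ in \eqref{5.23} in the compact notation \eqref{chi(k)}. Recalling the exchange $q_1\leftrightarrow q_2$ and $x_{ij}=q_1^{i-1}q_2^{j-1}$, I would convert the box sum $\sum_{(i,j)\in\Lambda^{(m)}}x_{ij}^{r}$ into the row form appearing in \eqref{5.23} by the finite geometric series $\sum_{j=1}^{\Lambda^{(m)}_i}q_2^{(j-1)r}=\frac{1-q_2^{\Lambda^{(m)}_i r}}{1-q_2^{r}}$ together with $\sum_{i\ge 1}q_1^{(i-1)r}=\frac{1-q_1^{lr}}{1-q_1^{r}}$, where $l=\ell(\Lambda^{(m)})$. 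This yields the identity
\[
\frac{q_1^{lr}}{(1-q_1^{r})(1-q_2^{r})}+\sum_{i\ge 1}\frac{\big(q_1^{i-1}q_2^{\Lambda^{(m)}_i}\big)^{r}}{1-q_2^{r}}=-\,\chi^{(m)}_{r}(q_1,q_2),
\]
and its $r\mapsto-r$ analogue, which produces $+\,\chi^{(m)}_{-r}(q_1,q_2)$. After this substitution the $m$-th factor equals, up to a scalar, $\exp\!\big[-\sum_{r}\frac{H_{-r}}{\gq^{r}-\gq^{-r}}(\gq^{1/2}q_3^{m-1}v)^{r}\chi^{(m)}_{r}\big]\exp\!\big[\sum_{r}\frac{H_{r}}{\gq^{r}-\gq^{-r}}(\gq^{-1/2}q_3^{m-1}v)^{-r}\chi^{(m)}_{-r}\big]$, which already matches the desired summand.

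Finally I would assemble the composition. By \eqref{Heisenberg} (with $C=\gq$ in the horizontal representation) the bracket $[H_{r},H_{s}]$ is a scalar, vanishing unless $r+s=0$; hence all negative modes commute among themselves and all positive modes commute among themselves. I would therefore bring the product $\exp(H_{-}^{(1)})\exp(H_{+}^{(1)})\cdots\exp(H_{-}^{(n)})\exp(H_{+}^{(n)})$ to the normal-ordered form $\exp(\sum_m H_{-}^{(m)})\exp(\sum_m H_{+}^{(m)})$ by repeatedly commuting a positive-mode exponential of one layer past a negative-mode exponential of a later layer. By the CBH formula \eqref{CBH} each interchange contributes only a central (scalar) factor, and collecting the summands over $m$ gives precisely the stated double sum; the zero-mode scalars $\cals{G}^{*-1}_{\Lambda^{(m)}}$ and all CBH commutator factors are absorbed into the constant $\tilde{c}(\Lambda)$.

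The main obstacle is the single-layer identity of the second paragraph: one must track the $q_1\leftrightarrow q_2$ exchange and, in particular, the sign and $\gq$-power bookkeeping, since the dual intertwiner carries $\gq^{1/2}$ where the intertwiner of Lemma \ref{lemma4.1} carries $\gq^{-1/2}$ and an overall opposite sign in both exponents. Once this identity is verified, the rest is exactly the CBH normal-ordering argument already used for Lemma \ref{lemma4.1}.
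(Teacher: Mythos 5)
Your proposal is correct and follows essentially the same route as the paper's own proof: expand the composition \eqref{dualFockcomposition} into single-layer dual Fock intertwiners, write each factor (up to a scalar) as $\exp(H_{-r}\text{-part})\exp(H_{r}\text{-part})$ with coefficients $\mp\chi^{(m)}_{\pm r}$, and then normal-order the product using \eqref{CBH}, absorbing all scalar commutator factors and $\cals{G}^{*}$-type constants into $\tilde{c}(\Lambda)$. The only difference is one of explicitness: the paper simply asserts the box-sum form of each factor (its Eq.~\eqref{4.33}), whereas you derive it from \eqref{5.23} via the geometric-series identity relating the row form to $\chi^{(m)}_{\pm r}$ of \eqref{chi(k)} — a step worth writing out, and your sign and $\gq$-power bookkeeping there is consistent with the conventions used for the MacMahon intertwiner.
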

	\begin{proof}
		By definition
		\begin{align}
			\tilde{\Phi}^{[n]*}_{\Lambda}(v) &=
			\tilde{\Phi}^{*}_{\Lambda^{(1)}}(v) \circ \cdots \circ \tilde{\Phi}^{*}_{\Lambda^{(n)}}(	q^{n-1}_{3}v		) 
			\notag \\ 
			&= \tilde{\cals{G}}^{-1}_{\Lambda^{(1)}}\cdots \tilde{\cals{G}}^{-1}_{\Lambda^{(n)}}
			\big(
			\text{exponential factor}
			\big)
		\end{align}
		where $ \big(\text{exponential factor} \big) $ is 
		\begin{gather}
			\prod_{m=1}^{n}\Bigg[
			\exp\big[
			\sum_{r=1}^{\infty}\frac{H_{-r}}{\gq^r - \gq^{-r}}(	\gq^{1/2}q^{m-1}_{3}v	)^r\bigg(
			\frac{1}{( 1 - q^r_1)(1 - q^r_2)} - \sum_{(i,j) \in \Lambda^{(m)}}x^r_{ij}
			\bigg)
			\big]
			\notag \\ 
			\cdot \exp\big[
			\sum_{r=1}^{\infty}\frac{H_r}{\gq^r - \gq^{-r}}(	 \gq^{-1/2}q^{m-1}_{3}v			)^{-r}
			\bigg(
			\sum_{	(i,j) \in \Lambda^{(m)}			}x^{-r}_{ij}
			- \frac{1}{( 1- q^{-r}_1)( 1 - q^{-r}_2 )}
			\bigg)
			\big]
			\Bigg]. 
			\label{4.33}
		\end{gather}
		Applying \eqref{CBH} to Eq.\eqref{4.33}, we obtain the desired result. 
	\end{proof}
	
	Now Lemma \ref{lemma4.3} implies that 
	\begin{align}
		\tilde{\Phi}^{[n]*}_{\Lambda}&(pv) 
		= \exp\bigg[
		-\sum_{m=1}^{n}\sum_{r=1}^{\infty}
		\frac{(p^r - 1)H_{-r}}{\gq^r - \gq^{-r}}(	\gq^{1/2}q^{m-1}_{3}v		)^r \chi_{r}^{(m)} (q_1, q_2)
		\bigg]
		\notag \\ 
		&\cdot \tilde{\Phi}^{[n]*}_{\Lambda}(v)
		\cdot
		\exp\bigg[
		\sum_{m=1}^{n}\sum_{r=1}^{\infty}\frac{(p^{-r} - 1)H_r}{\gq^r - \gq^{-r}}(	\gq^{-1/2}q^{m-1}_{3}v		)^{-r} 
		\chi_{-r}^{(m)} (q_1, q_2)
		\bigg].
		\label{4.34}
	\end{align}
	
	Now we merge the results \eqref{4.30} and \eqref{4.34} together. Then, we see that
	\begin{align}
		&\Xi^{*}_{\Lambda}(K;pv)
		\sim
		\notag \\ 
		&\exp\bigg[
		-\sum_{r=1}^{\infty}\frac{(p^r - 1) H_{-r}}{\gq^r - \gq^{-r}}(\gq^{1/2}v)^r
		\big\{
		\sum_{m=1}^nq^{(m-1)r}_{3} \chi_{r}^{(m)} (q_1, q_2)
		+ \frac{q^{nr}_3 - K^r}{\kappa_r}
		\big\}
		\bigg]
		\cdot \Xi^{*}_{\Lambda}(K;v)
		\notag \\
		&\cdot \exp\bigg[
		\sum_{r=1}^{\infty}\frac{(p^{-r} - 1) H_r}{\gq^r - \gq^{-r}}(\gq^{-1/2}v)^{-r}
		\big\{
		\sum_{m=1}^{n}q^{-(m-1)r}_{3} \chi_{-r}^{(m)} (q_1, q_2)
		- \frac{	q^{-nr}_{3} - K^{-r}		}{\kappa_r}
		\big\}
		\bigg].
		\label{4.46}
	\end{align}
	It is straightforward to see that if we take $p=\gq^{-2}$,
	\begin{align}
		\Xi^{*}_{\Lambda}(K;\gq^{-2}v)
		= K^{-|\Lambda|}~
		\cals{T}^{-}_{\Lambda}\big(  \gq^{-1/2}v, K	\big)^{-1} \cdot
		\Xi^{*}_{\Lambda}(K;v) \cdot
		\cals{T}^{+}_{\Lambda}\big(	 \gq^{-3/2}v, K			\big)^{-1},
		\label{4.62}
	\end{align}
	where we explicitly determine the proportional factor $K^{-|\Lambda|}$ in appendix.

	\subsection{MacMahon KZ equation}
	\label{section5}
	
	All the commutation relations between the MacMahon intertwiners and the shift operators 
	are obtained by using equation \eqref{CBH}. It turns out that we can write the commutation relations
	compactly by introducing 
	\begin{equation}
		\tilde{R}^{K_1,K_2}_{\Pi,\Lambda}\big(
		\frac{u}{v}
		\big)
		:= \exp\bigg[
		- \sum_{r=1}^{\infty}\frac{\kappa_r}{r}\Big(	\frac{u}{v}			\Big)^{-r}
		\Big(	\sum_{	(i,j,k) \in \Lambda				}x^r_{ijk} + \frac{1 - K^r_2}{\kappa_r}	 \Big)
		\Big(	\sum_{	(i,j,k) \in \Pi	}x^{-r}_{ijk}  - \frac{1-K^{-r}_{1}}{\kappa_r} 	\Big)
		\bigg],
		\label{5.13}
	\end{equation}
	The commutation relations are
	\begin{align}
		\cals{T}^+_{\Lambda}(v,K)\Xi_{\Lambda^{\prime}}(K^{\prime},v^{\prime} )  
		&= \tilde{R}^{K,K^{\prime}}_{\Lambda,\Lambda^{\prime}}\big(
		\frac{v}{\gq^{-1/2}v^{\prime}}
		\big) \cdot \Xi_{\Lambda^{\prime}}(K^{\prime},v^{\prime})\cals{T}^+_{\Lambda}(v,K), \\
		\cals{T}^+_{\Lambda}(v,K)\Xi^*_{\Lambda^{\prime}}(K^{\prime},v^{\prime})
		&= \tilde{R}^{K,K^{\prime}}_{\Lambda,\Lambda^{\prime}}\big(
		\frac{v}{\gq^{1/2}v^{\prime}}
		\big)^{-1} \cdot \Xi^*_{\Lambda^{\prime}}(K^{\prime},v^{\prime})\cals{T}^+_{\Lambda}(v,K), \\
		\cals{T}^-_{\Lambda}(v,K)\Xi_{\Lambda^{\prime}}(K^{\prime},v^{\prime})
		&=  \tilde{R}^{K^{\prime},K}_{\Lambda^{\prime},\Lambda}\big(
		\frac{v^{\prime}}{\gq^{-1/2}v}
		\big) \cdot \Xi_{\Lambda^{\prime}}(K^{\prime},v^{\prime})\cals{T}^-_{\Lambda}(v,K), \\
		\cals{T}^-_{\Lambda}(v,K)\Xi^*_{\Lambda^{\prime}}(K^{\prime},v^{\prime})
		&= \tilde{R}^{K^{\prime},K}_{\Lambda^{\prime},\Lambda}\big(
		\frac{v^{\prime}}{\gq^{1/2}v}
		\big)^{-1} \cdot \Xi^*_{\Lambda^{\prime}}(K^{\prime},v^{\prime})\cals{T}^-_{\Lambda}(v,K).
	\end{align}
	They should be compared with the corresponding relations \eqref{prrrrro5.20}--\eqref{prrrrrro5.23} for the intertwiners 
	of the vector representation. 

	We can determine the relation between $ \tilde{R}^{K_1,K_2}_{\Pi,\Lambda}\big(
	\frac{u}{v} \big) $ 
	and the MacMahon R-matrix $ R^{K_1,K_2}_{\Pi,\Lambda}\big(	 \frac{u}{v}		\big) $ 
	which appears in the paper \cite{MacMahon}. From \eqref{5.13} we get that 
	\begin{equation}
		\tilde{R}^{K_1,K_2}_{\Pi,\Lambda}\big(
		\frac{u}{v}
		\big)
		= K^{\frac{- |\Lambda|			}{2}}_{1}K^{	- \frac{|\Pi|}{2}		}_{2}R^{K_1,K_2}_{\Pi,\Lambda}\big(	 \frac{u}{v}		\big)
		\exp\bigg[
		\sum_{r=1}^{\infty}\frac{1}{r}\Big(	\frac{u}{v}			\Big)^{-r}
		\frac{	(1-K^r_2)(1-K^{-r}_{1})		}{\kappa_r}
		\bigg].
		\label{8.86}
	\end{equation}
	Note that the factor 
	\begin{equation}
		\exp\bigg[
		\sum_{r=1}^{\infty}\frac{1}{r}\Big(	\frac{u}{v}			\Big)^{-r}
		\frac{	(1-K^r_2)(1-K^{-r}_{1})		}{\kappa_r}
		\bigg]
	\end{equation} 
	is the vacuum contribution to $ \tilde{R}^{K_1,K_2}_{\Pi,\Lambda}$ or the normalization factor to make $ R^{K_1,K_2}_{\varnothing,\varnothing}=1$.

	Now we are ready to derive the generalized KZ equation for MacMahon intertwiner. We define the correlation function by
	\begin{gather}
		G^{\Omega^1\cdots\Omega^m}_{\Lambda^1\cdots\Lambda^n}
		\Big(^{	K^{\prime}_{1}, \dots, K^{\prime}_{m}, K_1 \dots, K_n		}_{w_1,\dots,w_m,v_1,\dots,v_n}				\Big)
		:= \langle \emptyset | \Xi^*_{\Omega^1}(K^{\prime}_{1};w_1)
		\cdots
		\Xi^*_{\Omega^m}(K^{\prime}_{m};w_m)
		\Xi_{\Lambda^1}(K_1;v_1)\cdots\Xi_{\Lambda^n}(K_n;v_n)|\emptyset\rangle.
	\end{gather}
	Let us first consider the shift of the spectral  parameter $v_i$;
	\begin{gather}
		(\gq^{-2})^{v_i\frac{\partial}{\partial v_i}}G^{\Omega^1\cdots\Omega^m}_{\Lambda^1\cdots\Lambda^n}
		\Big(^{	K^{\prime}_{1}, \dots, K^{\prime}_{m}, K_1 \dots, K_n		}_{w_1,\dots,w_m,v_1,\dots,v_n}				\Big)
		\notag \\
		= \langle \emptyset | \Xi^*_{\Omega^1}(K^{\prime}_{1};w_1)\cdots
		\Xi^*_{\Omega^m}(K^{\prime}_{m};w_m)
		\Xi_{\Lambda^1}(K_1;v_1)\cdots\Xi_{\Lambda^{i-1}}(K_{i-1};v_{i-1})\Xi_{\Lambda^i}(K_i;\gq^{-2}v_i)
		\notag \\ 
		\Xi_{\Lambda^{i+1}}(K_{i+1};v_{i+1})
		\cdots\Xi_{\Lambda^n}(K_n;v_n)|\emptyset\rangle.
	\end{gather}
	From \eqref{4.35}, the above equation becomes
	\begin{gather}
		{K_i}^{|\Lambda_i|}
		\langle \emptyset | \Xi^*_{\Omega^1}(K^{\prime}_{1};w_1)\cdots\Xi^*_{\Omega^m}(K^{\prime}_{m};w_m)
		\Xi_{\Lambda^1}(K_1;v_1)\cdots\Xi_{\Lambda^{i-1}}(K_{i-1};v_{i-1})
		\notag \\
		\cdot \cals{T}^-_{\Lambda^i	}\big(	\gq^{-3/2}v_i, K_i	\big) \cdot \Xi_{\Lambda^i}(K_i;v_i) \cdot
		\cals{T}^+_{\Lambda^i		}\big(	\gq^{-1/2}v_i,K_i	\big)\Xi_{\Lambda^{i+1}}(K_{i+1};v_{i+1})\cdots
		\Xi_{\Lambda^{n}}(K_n;v_{n})|\emptyset\rangle~.
	\end{gather}
	Next we move the operator $ \cals{T}^+_{\Lambda^i}\big(	v_i,K_i			\big)^{-1} $ 
	to the right by using the result in the section \ref{section5}, which implies
	\begin{gather}
		\cals{T}^+_{\Lambda^i}\big(	\gq^{-1/2}v_i,K_i	\big) \cdot \Xi_{\Lambda^{i+1}}(K_{i+1};v_{i+1})\cdots
		\Xi_{\Lambda^{n}}(K_n;v_{n})|\emptyset\rangle
		\notag \\ 
		= \bigg(
		\prod_{l = i+1}^{n}\tilde{R}^{K_i,K_l}_{\Lambda^i,\Lambda^l}\big(
		\frac{v_i}{v_l}
		\big)
		\bigg)
		\Xi_{\Lambda^{i+1}}(K_{i+1};v_{i+1})\cdots
		\Xi_{\Lambda^{n}}(K_n;v_{n})|\emptyset\rangle.
	\end{gather}
	Thus, at this step we see that
	\begin{gather}
		(\gq^{-2})^{v_i\frac{\partial}{\partial v_i}}G^{\Omega^1\cdots\Omega^m}_{\Lambda^1\cdots\Lambda^n}
		\Big(^{	K^{\prime}_{1}, \dots, K^{\prime}_{m}, K_1 \dots, K_n		}_{w_1,\dots,w_m,v_1,\dots,v_n}				\Big)
		= {K_i}^{|\Lambda_i|}
		\bigg(
		\prod_{l = i+1}^{n}\tilde{R}^{K_i,K_l}_{\Lambda^i,\Lambda^l}\big(
		\frac{v_i}{v_l}
		\big)
		\bigg)
		\notag \\ 
		\langle \emptyset | \Xi^*_{\Omega^1}(K^{\prime}_{1};w_1)\cdots
		\Xi^*_{\Omega^m}(K^{\prime}_{m};w_m)
		\Xi_{\Lambda^1}(K_1;v_1)\cdots\Xi_{\Lambda^{i-1}}(K_{i-1};v_{i-1})
		\notag \\
		\cdot \cals{T}^-_{\Lambda^i}\big(	\gq^{-3/2}v_i, K_i			\big)\Xi_{\Lambda^i}(K_i;v_i)
		\Xi_{\Lambda^{i+1}}(K_{i+1};v_{i+1})\cdots
		\Xi_{\Lambda^{n}}(K_n;v_{n})|\emptyset\rangle.
	\end{gather}
	We then move the operator $ \cals{T}^-_{\Lambda^i}\big(	\gq^{-3/2}v_i, K_i			\big)$ to the left.
	In the same manner as above, we obtain
	\begin{gather}
		(\gq^{-2})^{v_i\frac{\partial}{\partial v_i}}G^{\Omega^1\cdots\Omega^m}_{\Lambda^1\cdots\Lambda^n}
		\Big(^{	K^{\prime}_{1}, \dots, K^{\prime}_{m}, K_1 \dots, K_n		}_{w_1,\dots,w_m,v_1,\dots,v_n}				\Big)
		= \cals{A}_i \cdot G^{\Omega^1\cdots\Omega^m}_{\Lambda^1\cdots\Lambda^n}
		\Big(^{	K^{\prime}_{1}, \dots, K^{\prime}_{m}, K_1 \dots, K_n		}_{w_1,\dots,w_m,v_1,\dots,v_n}				\Big),
		\notag \\ 
		\cals{A}_i := K_i^{|\Lambda^i|}
		\bigg(
		\prod_{l = i+1}^{n}\tilde{R}^{K_i,K_l}_{\Lambda^i,\Lambda^l}\big(
		\frac{v_i}{v_l}
		\big)
		\bigg)
		\bigg(	\prod_{k=1}^{i-1} \tilde{R}^{K_k,K_i}_{\Lambda^k,\Lambda^i}
		\big(	\gq^{2}\frac{v_k}{v_i}				\big)^{-1}
		\bigg)
		\bigg(	\prod_{s=1}^{m}	
		\tilde{R}^{K^{\prime}_s,K_i}_{\Omega^s,\Lambda^i}
		\big(		\gq\frac{w_s}{v_i}					\big)
		\bigg).
		\label{MacKZ1}
	\end{gather}

	Next let us consider the shift of the spectral parameter $w_i$;
	\begin{gather}
		(\gq^{-2})^{w_i\frac{\partial}{\partial w_i}}G^{\Omega^1\cdots\Omega^m}_{\Lambda^1\cdots\Lambda^n}
		\Big(^{	K^{\prime}_{1}, \dots, K^{\prime}_{m}, K_1 \dots, K_n		}_{w_1,\dots,w_m,v_1,\dots,v_n}				\Big)
		\notag \\
		= \langle \emptyset |
		\Xi^*_{\Omega^1}(K^{\prime}_{1};w_1)\cdots\Xi^*_{\Omega^{i-1}}(K^{\prime}_{i-1};w_{i-1})\Xi^*_{\Omega^i}(K^{\prime}_{i};\gq^{-2}w_i)
		\Xi^*_{\Omega^{i+1}}(K^{\prime}_{i+1};w_{i+1})\cdots
		\Xi^*_{\Omega^m}(K^{\prime}_{m};w_m)
		\notag \\ 
		\cdot
		\Xi_{\Lambda^1}(K_1;v_1)\cdots\Xi_{\Lambda^n}(K_n;v_n)|\emptyset\rangle.
		\label{6.7}
	\end{gather}
	From \eqref{4.62} we get that \eqref{6.7} is equal to  
	\begin{gather}
		(K^{\prime}_{i})^{-|\Omega^i|}
		\langle \emptyset | \Xi^*_{\Omega^1}(K^{\prime}_{1};w_1)\cdots\Xi^*_{\Omega^{i-1}}(K^{\prime}_{i-1};w_{i-1})
		\cals{T}^-_{\Omega^i}(\gq^{-1/2}w_i,K^{\prime}_{i})^{-1} \cdot \Xi^*_{\Omega^i}(K^{\prime}_{i};w_i)\cals{T}^+_{\Omega^i}(\gq^{-3/2}w_i,K^{\prime}_i)^{-1}
		\notag \\
		\cdot\Xi^*_{\Omega^{i+1}}(K^{\prime}_{i+1};w_{i+1})\cdots
		\Xi^*_{\Omega^m}(K^{\prime}_{m};w_m)
		\Xi_{\Lambda^1}(K_1;v_1)\cdots\Xi_{\Lambda^n}(K_n;v_n)|\emptyset\rangle
		\label{6.8}
	\end{gather}
	As above we can move $ \cals{T}^+_{\Omega^i}$ to the right
	and $ \cals{T}^-_{\Omega^i}$ to the left by using the result in section \ref{section5}.
	We see that \eqref{6.8} is equal to 
	\begin{gather}
		(K^{\prime}_{i})^{-|\Omega^i|}
		\Big(	\prod_{k=i+1}^{m}\tilde{R}^{K^{\prime}_{i},K^{\prime}_{k}}_{\Omega^i,\Omega^k}\big( \gq^{-2}\frac{w_i}{w_k}  \big)	\Big)
		\Big(	\prod_{l=1}^{n} 
		\tilde{R}^{K^{\prime}_{i},K_l}_{\Omega^i,\Lambda^l}\big( \gq^{-1}\frac{w_i}{v_l}			\big)^{-1}		\Big)
		\Big(	\prod_{s=1}^{i-1}  \tilde{R}^{K^{\prime}_{s},K^{\prime}_{i}}_{\Omega^s,\Omega^i}
		\big(		\frac{w_s}{w_i}				\big)^{-1}
		\Big)
		\notag \\ 
		\langle \emptyset | \Xi^*_{\Omega^1}(K^{\prime}_{1};w_1)\cdots\Xi^*_{\Omega^{i-1}}(K^{\prime}_{i-1};w_{i-1})
		\Xi^*_{\Omega^i}(K^{\prime}_{i};w_i)\cdots\Xi^*_{\Omega^m}(K^{\prime}_{m};w_m)
		\Xi_{\Lambda^1}(K_1;v_1)\cdots\Xi_{\Lambda^n}(K_n;v_n)|\emptyset\rangle.
	\end{gather}
	Hence, we obtain
	\begin{gather}
		(\gq^{-2})^{w_i\frac{\partial}{\partial w_i}}G^{\Omega^1\cdots\Omega^m}_{\Lambda^1\cdots\Lambda^n}
		\Big(^{	K^{\prime}_{1}, \dots, K^{\prime}_{m}, K_1 \dots, K_n		}_{w_1,\dots,w_m,v_1,\dots,v_n}				\Big)
		= \cals{A}_i^{*} \cdot
		G^{\Omega^1\cdots\Omega^m}_{\Lambda^1\cdots\Lambda^n}
		\Big(^{	K^{\prime}_{1}, \dots, K^{\prime}_{m}, K_1 \dots, K_n		}_{w_1,\dots,w_m,v_1,\dots,v_n}				\Big),
		\notag \\
		\cals{A}_i^{*} := (K^{\prime}_{i})^{-|\Omega^i|}
		\Big(	\prod_{k=i+1}^{m}\tilde{R}^{K^{\prime}_{i},K^{\prime}_{k}}_{\Omega^i,\Omega^k}\big( \gq^{-2}\frac{w_i}{w_k}  \big)	\Big)
		\Big(	\prod_{l=1}^{n} 
		\tilde{R}^{K^{\prime}_{i},K_l}_{\Omega^i,\Lambda^l}\big( \gq^{-1}\frac{w_i}{v_l}			\big)^{-1}		\Big)
		\Big(	\prod_{s=1}^{i-1}  \tilde{R}^{K^{\prime}_{s},K^{\prime}_{i}}_{\Omega^s,\Omega^i}
		\big(		\frac{w_s}{w_i}				\big)^{-1}
		\Big).
		\label{MacKZ2}
	\end{gather}
	The system of Eqs.\eqref{MacKZ1} and \eqref{MacKZ2} is the generalized KZ equation for MacMahon intertwiners. 
	
	
	\section{Solutions to the generalized KZ equation}
	\label{section9}
	
	In this section we discuss solutions to the generalized KZ equation derived in the last section.
	Since the generalized KZ equation is a system of linear difference equations for the vertical spectral parameters of
	the intertwiners, solutions are given up to a multiplicative constant or a $\gq^2=q_3$ periodic function. 
	In particular we will drop the normalization factors which are independent of the spectral parameters\footnote{
		This does not mean that the normalization factor is unimportant. For example, the normalization factor of the Fock
		intertwiner is closely related to the theory of Macdonald function.}.
	Due to the abelian nature of the $R$ matrices (recall they are diagonal), the solutions factorize into 
	the product or more precisely the ration of the fundamental solutions, namely two point functions,
	which reminds us of the Wick theorem for free field correlators.
	
	\subsection{Correlation function of the vector intertwiners}
	
	Since both the intertwiner and the dual intertwiner keep the level of the horizontal Fock representation,
	we may assume that it is zero without loss of generality. Under this assumption 
	$z_n(v)$ and $z_n^*(v)$ are independent of the spectral parameter (see \eqref{levelN}) and 
	hence for simplicity we will drop these factors in the following formulas;
	As mentioned above we expect that the general correlation function $ \mathcal{G}_{\lambda_1,\cdots,\lambda_n}^{\mu_1,\cdots,\mu_m}
	\big(	u_n \big|~{w_1,\cdots,w_m, z_1,\cdots,z_m} 		\big) $ can be expressed in terms of the two-point function.
	Thus, let us begin with the two-point correlator 
	\begin{equation}
		A_{nm}(v,w) := \langle 0 \big| \bb{I}_n(v)\bb{I}_m(w)\big| 0 \rangle 
		= \exp\bigg[
		\sum_{r=1}^{\infty}\Big(	\frac{q^n_1v}{q^m_1w}				\Big)^{-r}
		\frac{q_1^r}{r}\frac{1-q^r_2}{1-q^r_1}
		\bigg].
	\end{equation}
	Thus, we can confirm the generalized KZ equation for two-point function;
	\begin{gather}\label{vector2point}
		\frac{A_{nm}(\gq^{-2} v, w)}{A_{nm}(v,w)} = R_{nm}\Big(	v/w				\Big).
	\end{gather}
	By using \eqref{vector2point} one can show that 
	\begin{gather}\label{vectorKZsol}
		\mathcal{G}_{\lambda_1,\cdots,\lambda_n}^{\mu_1,\cdots,\mu_m}
		\big(	u_n \big|~{w_1,\cdots,w_m, z_1,\cdots,z_m} 		\big)
		\frac{\prod\limits_{	 \substack{ i, j = 1 \\ i < j  }				}^{n}A_{\lambda_i\lambda_j}
			\Big(	\frac{z_i}{z_j}			\Big)\prod\limits_{	 \substack{ i, l = 1 \\  l < i}				}^{m}
			A_{\mu_l\mu_i}\Big(	\frac{w_l}{\gq^2w_i}					\Big)}{
			\prod_{j=1}^m\prod_{l=1}^nA_{\mu_j\lambda_l}\Big(
			\frac{\gq w_j}{\gq^2 z_l}
			\Big)}
	\end{gather}
	solves the generalized KZ equation. We should emphasize again that the formula is valid up to a multiplicative constant.

	The AGT correspondence motivates us to compare $ A_{nm}(v,w) $ with the Nekrasov factor \eqref{1.2}.
	Under the normalization
	\begin{gather}
		\tilde{A}_{nm}(v,w) := \frac{A_{nm}(v,w)}{A_{00}(v,w)}
		= \exp\bigg[
		\sum_{r=1}^{\infty}(	q^{(m-n)r}_{1} - 1		)\Big(	\frac{v}{w}				\Big)^{-r}
		\frac{q^r_1}{r}\frac{1-q^r_2}{1-q^r_1}
		\bigg].
	\end{gather}
	by the relation \eqref{q-factorial}, we see
	\begin{align}
		\tilde{A}_{nm}(v,w) = \frac{\Big(	 \frac{wq_1}{v} ; q_1	\Big)_{\infty}\Big(		\frac{wq^{m-n+1}_1q_2}{v} ; q_1	\Big)_{\infty}}{\Big(	\frac{wq_1q_2}{v} ; q_1		\Big)_{\infty}\Big(		\frac{wq^{m-n+1}_1}{v} ; q_1		\Big)_{\infty}}
		=  
		\begin{cases}
			\frac{	N_{\emptyset,m-n}(	w/q_2, v	)		}{N_{\emptyset,m-n}(w,v)}
			%
			;~ m \geq n 
			\\
			\frac{	N_{n-m,\emptyset}(q_2w,v)				}{N_{n-m,\emptyset}(w,v)}
			%
			; ~ n \geq m 
		\end{cases}
	\end{align}
	where the non-negative integer in the subscript stands for the Young diagram with a single row. 
	The insertion of the screening operator is required to obtain more general solution. 
	In particular with the contour integral associated with the spectral parameter of the screening operators 
	the correlation function \eqref{vectorKZsol} provides building block of the partition function of
	three dimensional quiver gauge theories, or the $K$ theoretic lift of the vortex counting function
	\cite{YS} \cite{AHKS} \cite{AHS} \cite{AS} \cite{BKK} \cite{AFO} \cite{NPZ} \cite{APZ} \cite{P} \cite{LNP}. 
	Quite recently an elliptic lift of the correlation function of the
	vector intertwiners is discussed in \cite{GKKZ} based on the \say{Higgsed} network calculus and the elliptic DIM algebra
	\cite{Saito} \cite{IKY} \cite{Ni}.

	\subsection{Correlation function of the MacMahon intertwiners}
	
	We also expect that the general correlation function
	$ G^{\Omega^1\cdots\Omega^m}_{\Lambda^1\cdots\Lambda^n}
	\Big(^{	K^{\prime}_{1}, \dots, K^{\prime}_{m}, K_1 \dots, K_n		}_{w_1,\dots,w_m,v_1,\dots,v_n}				\Big) $ 
	can be expressed in terms of the two-point correlators. We first provide an explicit expression of
	$ \langle \emptyset | \Xi_{\Lambda}(K_1 ; v)\Xi_{\Lambda^{\prime}}(K_2 ; w)|\emptyset\rangle $.
	Up to the normalization factor such as $\cals{G}^{-1}_{\Lambda}$ which is independent of the spectral parameter,
	it is 
	\begin{align}
		A^{K_1K_2}_{\Lambda\Lambda^{\prime}}(v,w) :&= \langle \emptyset | \Xi_{\Lambda}(K_1 ; v)\Xi_{\Lambda^{\prime}}(K_2 ; w)|\emptyset\rangle 
		\notag \\ 
		&= z_{\Lambda}(K_1 ; v)z_{\Lambda^{\prime}}(K_2 ; w) E^{K_1K_2}_{\Lambda\Lambda^{\prime}}(v,w),
		\label{Mac2pt}
	\end{align}
	where
	\begin{align}
		E^{K_1K_2}_{\Lambda\Lambda^{\prime}}&(v,w)
		\notag \\ 
		= \exp&\bigg[
		-\sum_{r=1}^{\infty}\Big(	\frac{w}{v}			\Big)^r
		\bigg(	 \sum_{(i,j,k) \in \Lambda^{\prime}}x^r_{ijk} + \frac{1 - K^r_2}{\kappa_r}			\bigg)
		\bigg(	 \sum_{	(i,j,k) \in \Lambda				}x^{-r}_{ijk} - \frac{1-K^{-r}_{1}}{\kappa_r}			\bigg)\frac{ (1-q^r_1)(1-q^r_2)			}{r}
		\bigg].
	\end{align}
	But from \eqref{5.13} and $ \kappa_r = (q^r_1 - 1)(q^r_2 - 1)(q^r_3 - 1) $ we obtain 
	\begin{align}
		&\tilde{R}^{K_1,K_2}_{\Pi,\Lambda}\big(
		\frac{v}{u}
		\big)
		\notag \\ 
		&= \frac{
			\exp\bigg(
			\displaystyle{\sum_{r=1}^{\infty}} \frac{(1-q^r_1)(1-q^r_2)}{r}
			\big(	\frac{u}{v}		\big)^r\bigg(
			\sum_{	(i,j,k) \in \Lambda				}x^r_{ijk} + \frac{1-K^r_2}{\kappa_r}
			\bigg)
			\bigg(
			\sum_{	(i,j,k) \in \Pi				}x^{-r}_{ijk} - \frac{1-K^{-r}_{1}}{\kappa_r}
			\bigg)
			\bigg)
		}{\exp\bigg(
			\displaystyle{\sum_{r=1}^{\infty}} \frac{(1-q^r_1)(1-q^r_2)}{r}
			\big(	\frac{q_3u}{v}		\big)^r\bigg(
			\sum_{	(i,j,k) \in \Lambda				}x^r_{ijk} + \frac{1-K^r_2}{\kappa_r}
			\bigg)
			\bigg(
			\sum_{	(i,j,k) \in \Pi				}x^{-r}_{ijk} - \frac{1-K^{-r}_{1}}{\kappa_r}
			\bigg)
			\bigg)}
		\notag \\ 
		&= \frac{ E^{K_1K_2}_{\Pi\Lambda}(\gq^{-2} v,u)				}{E^{K_1K_2}_{\Pi\Lambda}(v,u)},
		\label{9.2}
	\end{align}
	which is nothing but the generalized KZ equation for two point function up to a factor coming from the zero modes. 
	Hence, we see that $ G^{\Omega^1\cdots\Omega^m}_{\Lambda^1\cdots\Lambda^n}
	\Big(^{	K^{\prime}_{1}, \dots, K^{\prime}_{m}, K_1 \dots, K_n		}_{w_1,\dots,w_m,v_1,\dots,v_n}				\Big) $ has
	to be proportional to 
	\begin{gather}
		\frac{
			\prod\limits_{	 \substack{ j, l = 1 \\ j < l }				}^{n}
			E^{K_j,K_l}_{\Lambda^j,\Lambda^l}(v_j,v_l)
			\prod\limits_{	 \substack{ l, j = 1 \\ j < l }				}^{m}
			E^{K^{\prime}_{j},K^{\prime}_{l}}_{\Omega^j,\Omega^l}(w_j, \gq^2w_l)
		}{
			\prod_{l=1}^{n}\prod_{s=1}^{m}E^{	K^{\prime}_{s}, K_l		}_{\Omega^s,\Lambda^l}(w_s,\gq v_l)
		}. 
	\end{gather}
	More precisely, we get that up to a multiplicative constant
	\begin{align}
		&G^{\Omega^1\cdots\Omega^m}_{\Lambda^1\cdots\Lambda^n}
		\Big(^{	K^{\prime}_{1}, \dots, K^{\prime}_{m}, K_1 \dots, K_n		}_{w_1,\dots,w_m,v_1,\dots,v_n}				\Big) 
		\nn \\
		= 
		&\bigg(\prod_{i=1}^{n}z_{\Lambda^i}(	K_i ; v_i			)\bigg)
		\bigg(\prod_{i=1}^{m}z^*_{\Omega^i}(K^{\prime}_{i};w_i)\bigg)
		\frac{
			\prod\limits_{	 \substack{ j, l = 1 \\ j < l }				}^{n}
			E^{K_j,K_l}_{\Lambda^j,\Lambda^l}(v_j,v_l)
			\prod\limits_{	 \substack{ l, j = 1 \\ j < l }				}^{m}
			E^{K^{\prime}_{j},K^{\prime}_{l}}_{\Omega^j,\Omega^l}(w_j, \gq^2w_l)
		}{
			\prod_{l=1}^{n}\prod_{s=1}^{m}E^{	K^{\prime}_{s}, K_l		}_{\Omega^s,\Lambda^l}(w_s,\gq v_l)
		},
		\notag \\ 
		&= 
		\frac{	\prod_{l=1}^{n} z_{\Lambda^l}(	K_l ; v_l			)z_{\Lambda^l}(K_l ; \gq v_l) \prod_{s=1}^{m}z^*_{\Omega^s}(K^{\prime}_{s};w_s)
			z_{\Omega^s}(K^{\prime}_s ; w_s)						}{
			\prod\limits_{	 \substack{ j, l = 1 \\ j < l }				}^nz_{\Lambda^j}(K_j ; v_j)z_{\Lambda^l}(K_l ; v_l)
			\cdot
			\prod\limits_{	 \substack{ l, j = 1 \\ j < l }}^mz_{\Omega^j}(K^{\prime}_{j} ; w_j)z_{\Omega^l}(K^{\prime}_l ; \gq^2w_l)		
		}
		\nn \\
		&\qquad\cdot 
		\frac{
			\prod\limits_{	 \substack{ j, l = 1 \\ j < l }				}^{n}
			A^{K_j,K_l}_{\Lambda^j,\Lambda^l}(v_j,v_l)
			\prod\limits_{	 \substack{ l, j = 1 \\ j < l }				}^{m}
			A^{K^{\prime}_{j},K^{\prime}_{l}}_{\Omega^j,\Omega^l}(w_j, \gq^2w_l)
		}{
			\prod_{l=1}^{n}\prod_{s=1}^{m}A^{	K^{\prime}_{s}, K_l		}_{\Omega^s,\Lambda^l}(w_s,\gq v_l)
		} .
	\end{align}

	In view of the AGT correspondence, it is quite curious to see if the solutions to the MacMahon KZ-equation give
	some generalization of the Nekrasov function. 
	Let us analyze the two-points correlator $A^{K_1K_2}_{\Lambda\Lambda^{\prime}}(v,w)$ (see \eqref{Mac2pt}).
	Defining the normalized two-point function by
	\begin{equation}
		\tilde{A}^{K_1K_2}_{\Lambda\Lambda^{\prime}}(v,w)
		:= \frac{A^{K_1K_2}_{\Lambda\Lambda^{\prime}}(v,w) }{A^{K_1K_2}_{\emptyset\emptyset}(v,w) },
	\end{equation}
	we see that 
	\begin{align}
		\log&\tilde{A}^{K_1K_2}_{\Lambda\Lambda^{\prime}}(v,w)
		= \log\bigg(	z_{\Lambda}(K_1 ; v)z_{\Lambda^{\prime}}(K_2 ; w)		\bigg)
		+ \mathrm{(I)} +  \mathrm{(II)} +  \mathrm{(III)},
		\label{normalized2pt}
	\end{align}
	where 
	\begin{align}
		\mathrm{(I)}  &= - \sum_{r=1}^{\infty}\Big( \frac{w}{v}
		\Big)^r\frac{(1-q^r_1)(1-q^r_2)}{r}
		\bigg(	\sum_{(i^{\prime},j^{\prime},k^{\prime}) \in \Lambda^{\prime}}x^r_{i^{\prime}j^{\prime}k^{\prime}}
		\sum_{	(i,j,k) \in \Lambda				}x^{-r}_{ijk}				\bigg)
		\notag \\ 
		&= \sum_{(i^{\prime},j^{\prime},k^{\prime}) \in \Lambda^{\prime}}\sum_{	(i,j,k) \in \Lambda				}
		\notag \\ 
		&\bigg[
		\log\big(	1 - \frac{wx_{i^{\prime},j^{\prime},k^{\prime}}}{vx_{ijk}}					\big)
		- \log\big(		1 - \frac{wq_2x_{i^{\prime},j^{\prime},k^{\prime}}	}{vx_{ijk}}		\big)
		-\log\big(		1 - \frac{wq_1x_{i^{\prime},j^{\prime},k^{\prime}}	}{vx_{ijk}}								\big)
		+\log\big( 1 - 
		\frac{wq_1q_2x_{i^{\prime},j^{\prime},k^{\prime}}	}{vx_{ijk}}			
		\big)
		\bigg].
	\end{align}
	Since $ \kappa_r = (q^r_1 - 1)(q^r_2 - 1)(q^r_3 - 1) $, the second term is 
	\begin{gather}
		\mathrm{(II)}  = \sum_{r=1}^{\infty}\Big( \frac{w}{v}
		\Big)^r\frac{1}{r}\cdot\frac{1-K^{-r}_{1}}{q^r_3 - 1}\sum_{(i,j,k) \in \Lambda^{\prime}}x^r_{ijk}
		= \sum_{r=1}^{\infty}\Big( \frac{w}{vK_1}
		\Big)^r\frac{1}{r}\cdot\frac{K^r_1 - 1}{q^r_3 - 1}\sum_{(i,j,k) \in \Lambda^{\prime}}x^r_{ijk}.
		\label{sdfusoieruwiroweuoeiwufsxjslkjsdofiusdofudsfoisdu}
	\end{gather}
	Using the identity \eqref{q-factorial},
	we see that Eq. \eqref{sdfusoieruwiroweuoeiwufsxjslkjsdofiusdofudsfoisdu} becomes 
	\begin{align}
		\sum_{r=1}^{\infty}&\Big( \frac{w}{vK_1}
		\Big)^r\frac{1}{r}\bigg[
		\frac{K^r_1}{q^r_3 - 1} - \frac{1}{q^r_3 - 1}
		\bigg]\sum_{(i,j,k) \in \Lambda^{\prime}}x^r_{ijk}
		\notag \\ 
		&= - \sum_{(i,j,k) \in \Lambda^{\prime}}\sum_{r=1}^{\infty}\frac{1}{r(1-q^r_3)}
		\Big(\frac{w}{v}x_{ijk}\Big)^r
		+ \sum_{(i,j,k) \in \Lambda^{\prime}}\sum_{r=1}^{\infty}\frac{1}{r(1-q^r_3)}
		\Big( \frac{w}{vK_1}x_{ijk}			\Big)^r
		\notag \\
		&= \sum_{(i,j,k) \in \Lambda^{\prime}}\log\Big(	\frac{w}{v}x_{ijk}	;	q_3	\Big)_{\infty}
		- \sum_{(i,j,k) \in \Lambda^{\prime}}\log\Big(	\frac{w}{vK_1}x_{ijk}	;	q_3	\Big)_{\infty}.
	\end{align}
	Similarly the third term becomes
	\begin{align}
		\mathrm{(III)} = \sum_{r=1}^{\infty}&\Big( \frac{w}{v}
		\Big)^r\frac{1}{r}\frac{1 - K^r_2}{1 - q^r_3 }\sum_{	(i,j,k) \in \Lambda				}x^{-r}_{ijk}
		\notag \\ 
		&= \sum_{	(i,j,k) \in \Lambda				}\sum_{r=1}^{\infty}
		\frac{1}{r(1-q^r_3)}\Big(			\frac{w}{vx_{ijk}}		\Big)^r
		- \sum_{	(i,j,k) \in \Lambda				}\sum_{r=1}^{\infty}
		\frac{1}{r(1-q^r_3)}\Big(			\frac{wK_2}{vx_{ijk}}		\Big)^r
		\notag \\
		&= -\sum_{	(i,j,k) \in \Lambda				}\log\big(	\frac{w}{vx_{ijk}}		; q_3	\big)_{\infty}
		+ \sum_{	(i,j,k) \in \Lambda				}\log\big( \frac{wK_2}{vx_{ijk}}	
		; q_3
		\big)_{\infty}.
	\end{align}

	Now merging all the three terms, we obtain
	\begin{align}
		\tilde{A}^{K_1K_2}_{\Lambda\Lambda^{\prime}}(v,w) = &z_{\Lambda}(K_1 ; v)z_{\Lambda^{\prime}}(K_2 ; w)
		\notag \\
		&\cdot 
		\bigg[
		\prod_{(i^{\prime},j^{\prime},k^{\prime}) \in \Lambda^{\prime}}\prod_{	(i,j,k) \in \Lambda				}
		\frac{ \big(	1 - \frac{wx_{i^{\prime}j^{\prime}k^{\prime}}}{vx_{ijk}}		\big) \big( 1 - 
			\frac{wq_1q_2x_{i^{\prime},j^{\prime},k^{\prime}}	}{vx_{ijk}}			
			\big)					}{\big(1 - \frac{wq_2x_{i^{\prime},j^{\prime},k^{\prime}}	}{vx_{ijk}}\big) \big(		1 - \frac{wq_1x_{i^{\prime},j^{\prime},k^{\prime}}	}{vx_{ijk}}								\big)	}
		\bigg]
		\cdot 
		\bigg[
		\prod_{	(i,j,k) \in \Lambda^{\prime}				}\frac{
			\Big(	\frac{w}{v}x_{ijk}	;	q_3	\Big)_{\infty}
		}{\Big(	\frac{w}{vK_1}x_{ijk}	;	q_3	\Big)_{\infty}}
		\bigg]
		\notag \\ 
		&\cdot 
		\bigg[
		\prod_{	(i,j,k) \in \Lambda				}
		\frac{\big( \frac{wK_2}{vx_{ijk}}	
			; q_3
			\big)_{\infty}}{\big(	\frac{w}{vx_{ijk}}		; q_3	\big)_{\infty}}
		\bigg].\label{MKZsol}
	\end{align}
	To show the relation of $ \tilde{A}^{K_1K_2}_{\Lambda\Lambda^{\prime}}(v,w) $ and the Nekrasov function, 
	we show that under a certain appropriate limit the inverse of the normalized two-point correlators 
	$ \tilde{A}^{K_1K_2}_{\Lambda\Lambda^{\prime}}(v,w) $ reduces to the Nekrasov function 
	up to multiplicative constant\footnote{Note that in the ordinary case, the Nekrasov functions appear in the denominator of the two-point correlators.}. 
	Using $ x_{ijk} = q^{i-1}_{1}q^{j-1}_{2}q^{k-1}_{3} = q^i_1q^j_2q^k_3 $, we see that under the conditions $ K_1 = K_2 = q_3 $ and $ k = 1 $, 
	\begin{align}
		\tilde{A}^{q_3,q_3}_{\lambda\lambda^{\prime}}(v,w)^{-1}
		= &z_{\lambda}(q_3 ; v)^{-1}z_{\lambda^{\prime}}(q_3 ; w)^{-1}
		\notag \\
		&\cdot 
		\bigg[
		\prod_{	(i^{\prime},j^{\prime}) \in \lambda^{\prime}			}\prod_{(i,j) \in \lambda}
		\frac{	\big(1 - \frac{w}{v}q_2q^{i^{\prime} - i}_{1}q^{j^{\prime} - j}_{2}\big)		\big(	1 - \frac{w}{v}q_1q^{i^{\prime} - i}_{1}q^{j^{\prime} - j}_{2}			\big)							}{	\big(	1 - \frac{w}{v}q^{i^{\prime} - i}_{1}q^{j^{\prime} - j}_{2}		\big)		\big(		1 - \frac{w}{v}q_1q_2q^{i^{\prime} - i}_{1}q^{j^{\prime} - j}_{2}			\big)					}
		\bigg]
		\cdot 
		\bigg[
		\prod_{(	i^{\prime}, j^{\prime}) \in \lambda^{\prime}			}
		\big(
		1 - \frac{w}{v}q^{i^{\prime}}_{1}q^{j^{\prime}}_{2}
		\big)
		\bigg]
		\notag \\
		&\cdot 
		\bigg[
		\prod_{(i,j) \in \lambda}\big(
		1 - q_1q_2\frac{w}{vq^{i}_{1}q^{j}_{2}}
		\big)
		\bigg].
	\end{align}
	Comparing with the Nekrasov function \eqref{1.2}, we see that 
	\begin{gather}
		\tilde{A}^{q_3,q_3}_{\lambda\lambda^{\prime}}(v,w)^{-1}
		= z_{\lambda}(q_3 ; v)^{-1}z_{\lambda^{\prime}}(q_3 ; w)^{-1}
		\cdot N_{\lambda \lambda^{\prime}}(w,v).
	\end{gather}
	This justifies that $ \tilde{A}^{K_1K_2}_{\Lambda\Lambda^{\prime}}(v,w)^{-1} $ can be regarded as a generalized Nekrasov function.

	It is an interesting challenge to understand the solution \eqref{MKZsol} as some kind of the partition function of supersymmetric gauge
	theory and/or generalization of Macdonald function \cite{Z17} \cite{Morozov}. 
	For example, it is tempting to relate it to six dimensional gauge theory.
	However the direction $q_3$ appears as \say{preferred} in \eqref{MKZsol} and it seems to suggest the existence of 
	an additional defect in the theory\footnote{If we adopt a suggestion in \cite{Z20} that the MacMahon representation may be associated with
		a system of $D7$ and anti-$D7$ branes, the defect would be five branes attached to it.}.
	Though the MacMahon representation is symmetric in $(q_1, q_2, q_3)$,
	the horizontal Fock representation breaks it by the commutation relation with $\gq=q_3^{1/2}$. 
	The appearance of the infinite product is another issue against an interpretation in terms of the gauge theory.

	{\vskip 20mm}
	
	\begin{center}
		{\bf Acknowledgements}
	\end{center}
	We would like to thank H.~Awata, A.~Mironov, A.~Morozov and Y.~Zenkevich for useful discussions. 
	Our work is supported in part by Grants-in-Aid for Scientific Research (\# 18K03274) (H.K.).
	The work of P.C. is supported by the MEXT Scholarship. 
	
	\bigskip
	
	\appendix

	\section{Determine the proportional factor}
	\label{subsection4.1.4}
	
	Here we determine the proportional factor of \eqref{4.35} explicitly. From
	\eqref{macmahonansatttz} we get that 
	\begin{align}
		\Xi_{\Lambda}(K ; pv) =
		\Big[
		\frac{z_{\lambda}(K;pv)}{z_{\lambda}(K;v)}\Big]z_{\lambda}(K;v)\cals{M}^{[n]}(K)\tilde{\Phi}^{[n]}_{\Lambda}(pv)\Gamma_n(K;pv).
		\label{4.24ver2}
	\end{align}
	Now we can write
	\begin{equation}
		\tilde{\Phi}^{[n]}_{\Lambda}(pv)\Gamma_n(K;pv)
		= A_{\Lambda}^{(-)}(v;p) \tilde{\Phi}^{[n]}_{\Lambda}(v) A_{\Lambda}^{(+)}(v;p)
		\cdot B_n^{(-)}(v,p) \Gamma_n(K;v) B_n^{(+)}(v,p),
	\end{equation}
	where
	\begin{gather}
		A_{\Lambda}^{(\pm)}(v;p) = \exp\bigg[ \mp
		\sum_{r=1}^{\infty}\frac{(p^{\mp r}-1) H_{\pm r}}{\gq^r - \gq^{-r}}
		\sum_{k=1}^{n}\gq^{-r/2}(q^{k-1}_3 v				)^{\mp r} 
		\chi_{\mp r}^{(k)}(q_1,q_2)
		\bigg],
		\notag \\ 
	\end{gather}
	and 
	\begin{gather}
		B_n^{(\pm)}(v,p) = \exp\bigg(
		\sum_{r=1}^{\infty} \frac{ (p^{\mp r }- 1)H_{ \pm r}}{\gq^r - \gq^{-r}}\frac{q^{\mp nr}_3 - K^{\mp r}}{\kappa_r}
		\gq^{-r/2} v^{\mp r}
		\bigg),
	\end{gather}
	where we have used the notation \eqref{chi(k)}.
	By using \eqref{CBH}, we get that 
	\begin{gather*}
		A_{\Lambda}^{(+)}(v;p) B_n^{(-)}(v,p) 
		= \exp\big(	\clubsuit		\big)
		B_n^{(-)}(v,p) A_{\Lambda}^{(+)}(v;p),
	\end{gather*}
	where
	\begin{gather*}
		\clubsuit =
		-\sum_{r=1}^{\infty}
		(p^{-r} - 1)(p^r - 1)\bigg(
		\frac{q^{nr}_{3} - K^r}{\kappa_r}
		\bigg)
		\frac{(1-q^r_1)(1-q^r_2)}{r}
		\bigg(
		\sum_{(i,j,k) \in \Lambda}x^{-r}_{ijk} - \frac{1-q^{-rn}_{3}}{\kappa_r}
		\bigg).
	\end{gather*}
	
	Next we would like to switch between $A_{\Lambda}^{(+)}(v;p)$
	and $ \Gamma_n(K;v) $. We know the explicit expression of $ \Gamma_n(K;v) $ from Eq. \eqref{4.3}. 
	So by using \eqref{CBH} again, we can show that 
	\begin{equation}
		A_{\Lambda}^{(+)}(v;p) \Gamma_n(K;v)
		\notag \\ 
		= \exp\big(	\spadesuit		\big)
		\Gamma_n(K;v) A_{\Lambda}^{(+)}(v;p) 
	\end{equation}  
	with
	\begin{gather}
		\spadesuit = -\sum_{r=1}^{\infty}(p^{-r} - 1)\bigg(
		\sum_{	(i,j,k) \in \Lambda			}x^{-r}_{ijk} - \frac{1-q^{-rn}_{3}}{\kappa_r}
		\bigg)\frac{q^{nr}_{3} - K^r }{\kappa_r}\frac{(1-q^r_1)(1-q^r_2)}{r}. 
	\end{gather}

	In the last step we would like to switch between $ \tilde{\Phi}^{[n]}_{\Lambda}(v) $  and
	$B_n^{(-)}(v,p)$.
	Again since we know the explicit expression of $ \tilde{\Phi}^{[n]}_{\Lambda}(v) $ from Lemma \ref{lemma4.1}, 
	and \eqref{CBH} implies
	\begin{equation}
		\tilde{\Phi}^{[n]}_{\Lambda}(v) B_n^{(-)}(v,p)
		= \exp \big( \heartsuit  \big) 
		B_n^{(-)}(v,p) \tilde{\Phi}^{[n]}_{\Lambda}(v),
	\end{equation}
	where
	\begin{gather*}
		\heartsuit =
		-\sum_{r=1}^{\infty}\frac{q^{nr}_{3} - K^r}{\kappa_r}(p^r - 1)
		\frac{(1-q^r_1)(1-q^r_2)}{r}
		\bigg(
		\sum_{	(i,j,k) \in \Lambda			}x^{-r}_{ijk} - \frac{1 - q^{-rn}_{3}}{\kappa_r}
		\bigg). 
	\end{gather*}
	Consequently, 
	\begin{gather}
		\tilde{\Phi}^{[n]}_{\Lambda}(pv)\Gamma_n(K;pv)
		= \exp\big(	\clubsuit		\big)\exp\big(	\spadesuit		\big)\exp\big(	\heartsuit		\big)
		\notag \\ 
		\cdot 
		A_{\Lambda}^{(-)}(v;p)  B_n^{(-)}(v,p) \cdot 
		\tilde{\Phi}^{[n]}_{\Lambda}(v)
		\Gamma_n(K;v)
		\cdot 
		A_{\Lambda}^{(+)}(v;p) B_n^{(+)}(v,p).
		\label{4.31}
	\end{gather}
	From the explicit expressions of $ \clubsuit, \spadesuit, $ and $ \heartsuit $ we get that 
	$\clubsuit + \spadesuit + \heartsuit = 0$.
	In conclusion,
	\begin{align}
		&\Xi_{\Lambda}(K;pv)
		= \frac{		z_{\Lambda}(K;pv)		}{z_{\Lambda}(K;v)}
		\nn \\
		& \cdot\exp\bigg(
		\sum_{r=1}^{\infty} \frac{(p^r-1)H_{-r}}{(\gq^r - \gq^{-r})}
		\big\{
		(\gq^{-1/2}v)^r\big(
		\frac{q^{nr}_3 - K^r}{\kappa_r}
		\big)
		+ \sum_{k=1}^n (	\gq^{-1/2}q^{k-1}_3v		)^r \chi_{r}^{(k)} (q_1, q_2)
		\big\}
		\bigg)\cdot \, \Xi_{\Lambda}(K;v)
		\notag \\
		&
		\cdot \exp\bigg(
		\sum_{r=1}^{\infty}\frac{(		p^{-r}-1	)H_r}{(\gq^r - \gq^{-r})}
		\big\{
		(\gq^{1/2}v)^{-r}\frac{	q^{-nr}_3 - K^{-r}			}{\kappa_r}
		- \sum_{k=1}^{n}(	\gq^{1/2}q^{k-1}_3v				)^{-r} \chi_{-r}^{(k)} (q_1, q_2)
		\big\}
		\bigg).
	\end{align}
	When $p=\gq^{-2}=q_3^{-1}$, by \eqref{p-shift} and \eqref{6.13} we can show
	$ \displaystyle \frac{	z_{\Lambda}(K;\gq^{-2}v)			}{	z_{\Lambda}(K;v)			} = K^{|\Lambda|} $,
	and hence we arrive at \eqref{4.35}.
	%
	
	Next  we would like to we determine the proportional factor in \eqref{4.62} explicitly 
	From Eq. \eqref{dualMacint} we know that 
	\begin{align}
		\Xi_{\Lambda}^{*}(K;pv) =
		\frac{z_{\Lambda}^*(K;pv)}{z_{\Lambda}^*(K;v)} z_{\Lambda}^*(K;v)\cals{M}^{[n]*}(K)\tilde{\Phi}^{[n]*}_{\Lambda}(pv)\Gamma^{*}_{n}(K;pv).
		\label{4.50}
	\end{align}
	We can write
	\begin{equation}
		\tilde{\Phi}^{[n]*}_{\Lambda}(pv)\Gamma^{*}_{n}(K;pv)
		= C_{\Lambda}^{(-)}(v;p) \tilde{\Phi}^{[n]*}_{\Lambda}(v)C_{\Lambda}^{(+)}(v;p) 
		\cdot
		D_{n}^{(-)}(v;p)\Gamma^*_n(K;v) D_{n}^{(+)}(v;p),
	\end{equation}
	where
	\begin{gather}
		C_{\Lambda}^{(\pm)}(v;p) =
		\exp\bigg[ \pm
		\sum_{m=1}^{n}\sum_{r=1}^{\infty}
		\frac{(p^{\mp r} - 1)H_{\pm r}}{\gq^r - \gq^{-r}}\gq^{r/2}(q^{m-1}_{3}v		)^{\mp r}
		\chi_{\mp r}^{(m)} (q_1, q_2)
		\bigg],
		\notag 
	\end{gather}
	and 
	\begin{gather}
		D_{n}^{(\pm)} (v;p) = \exp\bigg(
		-\sum_{r=1}^{\infty} \frac{(p^{\mp r} - 1) H_{\pm r}}{\gq^r - \gq^{-r}} \frac{q^{\mp nr}_3 - K^{\mp r}}{\kappa_r}
		\gq^{r/2}v^{\mp r}\bigg).
	\end{gather}
	By using \eqref{CBH}, we get that
	\begin{equation}
		C_{\Lambda}^{(+)}(v;p) D_{n}^{(-)} (v;p) 
		= \exp\big(	\bigstar	\big)
		D_{n}^{(-)} (v;p) C_{\Lambda}^{(+)}(v;p)
	\end{equation}
	where
	\begin{gather*}
		\bigstar = -\sum_{r=1}^{\infty}(p^{-r}-1)(p^{r}-1)\frac{q^{nr}_{3} - K^r}{\kappa_r}
		\frac{q^r_3(q^r_1 - 1)(q^r_2 - 1)}{r}
		\cdot\bigg[
		\sum_{	(i,j,k) \in \Lambda				}x^{-r}_{ijk} - \frac{1-q^{-rn}_{3}}{\kappa_r}
		\bigg].
	\end{gather*}
	
	Next we would like to switch between $C_{\Lambda}^{(+)}(v;p)$
	and $ \Gamma^*_n(K;v) $. We know the explicit expression of $ \Gamma^*_n(K;v) $
	from \eqref{4.36}. So by using \eqref{CBH} again, we can show that 
	\begin{gather}
		C_{\Lambda}^{(+)}(v;p) \Gamma^*_n(K;v)
		= \exp\big(		\mho	\big) \Gamma^*_n(K;v) C_{\Lambda}^{(+)}(v;p),
	\end{gather}
	where 
	\begin{gather}
		\mho =
		-\sum_{r=1}^{\infty}(p^{-r} - 1)\frac{q^{nr}_{3} - K^r}{\kappa_r}
		\frac{	q^r_3(q^r_1 - 1)(q^r_2 - 1)			}{r}
		\bigg(
		\sum_{	(i,j,k) \in \Lambda			}x^{-r}_{ijk} - \frac{1-q^{-nr}_{3}}{\kappa_r}
		\bigg). 
	\end{gather}

	Finally we would like to switch between $ \tilde{\Phi}^{[n]*}_{\Lambda}(v) $ and $D_{n}^{(-)} (v;p)$.
	Again since we know the explicit expression of $ \tilde{\Phi}^{[n]*}_{\Lambda}(v) $ from Lemma \ref{lemma4.3}, 
	we can use \eqref{CBH} to show that
	\begin{equation}
		\tilde{\Phi}^{[n]*}_{\Lambda}(v) D_{n}^{(-)} (v;p)
		= \exp\big(		\natural		\big)
		D_{n}^{(-)} (v;p) \tilde{\Phi}^{[n]*}_{\Lambda}(v),
	\end{equation}
	where 
	\begin{gather}
		\natural =
		-\sum_{r=1}^{\infty}(p^r-1)\frac{q^{nr}_{3} - K^r}{k_r}\frac{q^r_3(q^r_1 - 1)(q^r_2 - 1)}{r}\bigg(
		\sum_{	(i,j,k) \in \Lambda			}x^{-r}_{ijk} - \frac{1 - q^{-rn}_{3}}{\kappa_r}
		\bigg).
	\end{gather}
	Consequently, 
	\begin{gather}
		\tilde{\Phi}^{[n]*}_{\Lambda}(pv)\Gamma^{*}_{n}(K;pv)
		= \exp\big(	 \bigstar	\big)\exp\big(	\mho		\big)\exp\big(	\natural		\big)
		\notag \\ 
		C_{\Lambda}^{(-)}(v;p) D_{n}^{(-)} (v;p) \cdot \tilde{\Phi}^{[n]*}_{\Lambda}(v) 
		\Gamma^*_n(K;v) \cdot C_{\Lambda}^{(+)}(v;p) D_{n}^{(+)} (v;p).
		\label{4.59}
	\end{gather}
	Substituting \eqref{4.59} into \eqref{4.50}, we see that the proportional factor of \eqref{4.62} is 
	\begin{align*}
		\frac{z^*_{\Lambda}(K;pv)}{z^*_{\Lambda}(K;v)}
		\exp\big(	\bigstar + \mho + \natural				\big). 
	\end{align*}
	However, it is easy to show that $ \bigstar + \mho + \natural	= 0 $.
	Furthermore, for $p=\gq^{-2}=q_3^{-1}$ \eqref{p-shift} and \eqref{6.33} tell us that
	$ \displaystyle \frac{	z^*_{\Omega}(K^{\prime};\gq^{-2}w)		}{z^*_{\Omega}(K^{\prime};w)	} 
	= \frac{1}{K^{\prime\,|\Omega|}}  $.
	Hence, we finally obtain  \eqref{4.62}.

	\newpage


\begin{thebibliography}{99}
		
		
		\bibitem{AFO}
		M. Aganagic, E. Frenkel, and A. Okounkov,
		{\it Quantum $q$-Langlands Correspondence,}
		Transactions of the Moscow Mathematical Society 79 (2018): 1-83.
		arXiv:1701.03146.
		
		\bibitem{AFS}
		H. Awata, B. Feigin, and J. Shiraishi,
		{\it Quantum algebraic approach to refined topological vertex,}
		Journal of High Energy Physics 2012, no. 3 (2012): 41.
		arXiv:1112.6074.
		
		\bibitem{AHKS}
		M. Aganagic, N. Haouzi, C. Kozcaz and S. Shakirov,
		{\it Gauge/liouville triality,}
		arXiv:1309.1687.
		
		\bibitem{AHS}
		M. Aganagic, N. Haouzi, and S. Shakirov. 
		{\it $ A_n $-Triality,} 
		arXiv:1403.3657.
		
		\bibitem{AGT}
		L. Alday, D. Gaiotto, and Y.Tachikawa,
		{\it Liouville correlation functions from four-dimensional gauge theories,} 
		Letters in Mathematical Physics 91, no. 2 (2010): 167-197.
		arXiv:0906.3219.
		
		\bibitem{AK2005}
		H. Awata, and H. Kanno,
		{\it Instanton counting, Macdonald function and the moduli space of D-branes,}
		Journal of High Energy Physics 2005, no. 05 (2005): 039.
		arXiv:0502061.
		
		\bibitem{AK2008}
		H. Awata, and H. Kanno,
		{\it Refined BPS state counting from Nekrasov's formula and Macdonald functions,}
		International Journal of Modern Physics A 24, no. 12 (2009): 2253-2306.
		arXiv:0805.0191.
		
		
		\bibitem{anomaly}
		H. Awata, H. Kanno, A. Mironov, A. Morozov, A. Morozov, Y. Ohkubo, Y. Zenkevich,
		{\it  Anomaly in RTT relation for DIM algebra and network matrix models.}
		Nuclear Physics B 918 (2017): 358-385.
		arXiv:1611.07304.
		
		
		\bibitem{qtKZdim}
		H. Awata, H. Kanno, A. Mironov, A. Morozov, A. Morozov, Y. Ohkubo, and Y. Zenkevich,
		{\it Generalized Knizhnik-Zamolodchikov equation for Ding-Iohara-Miki algebra},
		Physical Review D 96, no. 2 (2017): 026021.
		arXiv:1703.06084. 
		
		\bibitem{AKMMSZ2018}
		H. Awata, H. Kanno, A. Mironov, A. Morozov, K. Suetake, and Y.Zenkevich,
		{\it $(q, t)$-KZ equations for quantum toroidal algebra and Nekrasov partition functions on ALE spaces},
		Journal of High Energy Physics 2018, no. 3 (2018): 192.
		arXiv:1712.08016.
		
		\bibitem{MacMahon}
		H. Awata, H. Kanno, A. Mironov, A. Morozov, K. Suetake, and Y. Zenkevich,
		{\it The MacMahon R-matrix}, 
		Journal of High Energy Physics 2019, no. 4 (2019): 97.
		arXiv:1810.07676.
		
		\bibitem{APZ}
		F. Aprile, S. Pasquetti, and Y. Zenkevich,
		{\it Flipping the head of $T [SU (N)]$: mirror symmetry, spectral duality and monopoles,}
		Journal of High Energy Physics 2019, no. 4 (2019): 138.
		arXiv:1812.08142.
		
		\bibitem{AS}
		M. Aganagic, and S. Shakirov,
		{\it Gauge/Vortex duality and AGT,}
		In New dualities of supersymmetric gauge theories, pp. 419-448. Springer, Cham, 2016.
		arXiv:1412.7132.
		
		\bibitem{AY}
		H. Awata and Y. Yamada, 
		{\it Five-dimensional AGT conjecture and the deformed Virasoro algebra,}
		Journal of High Energy Physics 2010.1 (2010): 125.
		arXiv:0910.4431.
		
		\bibitem{BFM}
		M. Bershtein, B. Feigin, and G. Merzon,
		{\it Plane partitions with a \say{pit} : generating functions and representation theory,}
		Selecta Mathematica 24, no. 1 (2018): 21-62.
		arXiv:1512.08779.
		
		\bibitem{BKK}
		M. Bullimore, H.-C. Kim, and P. Koroteev,
		{\it Defects and quantum Seiberg-Witten geometry.}
		Journal of High Energy Physics 2015.5 (2015): 95.
		arXiv:1412.6081
		
		\bibitem{BS}
		I. Burban, and O. Schiffmann,
		{\it On the Hall algebra of an elliptic curve I,}
		Duke Mathematical Journal 161, no. 7 (2012): 1171-1231.
		arXiv:0505148.
		
		\bibitem{DI}
		J. Ding, and K. Iohara,
		{\it Generalization and deformation of Drinfeld quantum affine algebras,}
		Letters in Mathematical Physics 41.2 (1997): 181-193.
		arXiv:9608002.
		
		\bibitem{EFK}
		P. Etingof, I. Frenkel, and A. Kirillov,
		{\it Lectures on representation theory and Knizhnik-Zamolodchikov equations,}
		No. 58. American Mathematical Soc., 1998.
		
		\bibitem{FFJMM1}
		B. Feigin, E. Feigin, 
		M. Jimbo, T. Miwa and E. Mukhin,
		{\it Quantum continuous $\mathfrak{gl}_\infty$:
			Semi-infinite construction of representations},
		Kyoto J. Math. {\bf 51} (2011), no. 2, 
		337--364
		arXiv:1002.3100
		
		\bibitem{FHHSY}
		B. Feigin, K. Hashizume, A. Hoshino, J. Shiraishi, and S. Yanagida,
		{\it A commutative algebra on degenerate $CP^1$ and Macdonald polynomials,}
		Journal of mathematical physics 50, no. 9 (2009): 095215.
		arXiv:0904.2291.
		
		\bibitem{FJMM11}
		B. Feigin, M. Jimbo, T. Miwa, and E. Mukhin,
		{\it Quantum toroidal $\mathfrak {gl} _ {1} $-algebra: Plane partitions,} 
		Kyoto Journal of Mathematics 52, no. 3 (2012): 621-659.
		arXiv preprint arXiv:1110.5310.
		
		\bibitem{FJMM}
		B. Feigin, M. Jimbo, T. Miwa, and E. Mukhin,
		{\it Finite type modules and Bethe Ansatz for quantum toroidal $\mathfrak {gl} _ {1} $,} 
		arXiv preprint arXiv:1603.02765.
		
		\bibitem{FOS1}
		M. Fukuda, Y. Ohkubo, and J, Shiraishi, 
		{\it Generalized Macdonald functions on Fock tensor spaces and duality formula for changing preferred direction, }
		arXiv:1903.05905. 
		
		\bibitem{FOS2}
		M. Fukuda, Y. Ohkubo, and J, Shiraishi, 
		{\it Non-stationary Ruijsenaars functions for $\kappa= t^{-1/N} $ and intertwining operators of Ding-Iohara-Miki algebra,}
		arXiv:2002.00243.	
		
		\bibitem{FP}
		R. Flume, and R. Poghossian,
		{\it An algorithm for the microscopic evaluation of the coefficients of the Seiberg–Witten prepotential,}
		International Journal of Modern Physics A 18, no. 14 (2003): 2541-2563.
		arXiv:0208176.
		
		\bibitem{FR}
		I. Frenkel, and N. Reshetikhin,
		{\it Quantum affine algebras and holonomic difference equations,}
		Communications in mathematical physics 146, no. 1 (1992): 1-60.
		
		\bibitem{GKKZ}
		M. Ghoneim, C. Kozcaz, K. Kursun and Y. Zenkevich,
		{\it 4d higgsed network calculus and elliptic DIM algebra,}
		arXiv:2012.15352.	
		
		\bibitem{IKV}
		A. Iqbal, C. Kozcaz, and C. Vafa,
		{\it The refined topological vertex,}
		Journal of High Energy Physics 2009, no. 10 (2009): 069.
		arXiv:0701156.
		
		\bibitem{IKY}
		A. Iqbal, C. Kozcaz, and  S.-T. Yau, 
		{\it Elliptic Virasoro conformal blocks,}
		arXiv:1511.00458.
		
		\bibitem{JM}
		M. Jimbo, and T. Miwa,
		{\it Algebraic analysis of solvable lattice models,}
		Vol. 85. American Mathematical Soc., 1994.
		
		\bibitem{LNP}
		P. Longhi, F. Nieri, and A. Pittelli,
		{\it Localization of 4d $ \cals{N} = 1 $ theories on $ D^2 \times T^2 $,}
		Journal of High Energy Physics 2019, no. 12 (2019): 147.
		arXiv:1906.02051.
		
		\bibitem{MO}
		D. Maulik, and A. Okounkov,
		{\it Quantum groups and quantum cohomology,}
		Vol. 408. Societe Mathematique De France, 2019.
		arXiv:1211.1287.
		
		\bibitem{M}
		K. Miki,
		{\it A (q, $\gamma$) analog of the $W_{1+\infty}$ algebra,}
		Journal of Mathematical Physics 48, no. 12 (2007): 123520.
		
		\bibitem{MM}
		A. Mironov and A. Morozov,
		{\it On AGT relation in the case of U (3),} 
		Nuclear physics B 825, no. 1-2 (2010): 1-37.
		arXiv:0908.2569.
		
		\bibitem{Morozov}
		A. Morozov,
		{\it An analogue of Schur functions for the plane partitions,} 
		Physics Letters B 785, (2018): 175-183.
		arXiv:1808.01059.
		
		
		\bibitem{Nakajima}
		H. Nakajima,
		{\it Lectures on Hilbert schemes of points on surfaces,}
		No. 18. American Mathematical Soc., 1999.
		
		\bibitem{N}
		N. Nekrasov,
		{\it Seiberg-Witten prepotential from instanton counting,}
		Advances in Theoretical and Mathematical Physics 7, no. 5 (2003): 831-864.
		arXiv:0206161.
		
		\bibitem{Negut}
		A. Negut,
		{\it The R-matrix of the quantum toroidal algebra.}
		arXiv:2005.14182.
		
		\bibitem{Ni}
		F. Nieri,
		{\it An elliptic Virasoro symmetry in 6d,}
		Letters in mathematical physics 107.11 (2017): 2147-2187.  
		arxiv:1511.00574.
		
		\bibitem{NPZ}
		A. Nedelin, S. Pasquetti, and Y. Zenkevich,
		{\it $T [SU (N)]$ duality webs: mirror symmetry, spectral duality and gauge/CFT correspondences,}
		Journal of High Energy Physics 2019, no. 2 (2019): 176.
		arXiv:1712.08140
		
		\bibitem{NY}
		H. Nakajima and K. Yoshioka,
		{\it Instanton counting on blowup. I. 4-dimensional pure gauge theory,}
		Inventiones mathematicae 162, no. 2 (2005): 313-355.
		arXiv:0306198.
		
		\bibitem{OS}
		A. Okounkov, and A. Smirnov,
		{\it Quantum difference equation for Nakajima varieties,}
		arXiv:1602.09007.
		
		\bibitem{P}
		A. Pittelli,
		{\it Supersymmetric localization of refined chiral multiplets on topologically twisted $ H^2 \times S^1 $,}
		Physics Letters B 801 (2020): 135154.
		arXiv:1812.11151.
		
		\bibitem{Saito}
		Yosuke Saito, 
		{\it Elliptic Ding-Iohara algebra and the free field realization of the elliptic Macdonald operator,}
		arXiv:1301.4912.
		
		\bibitem{S}
		F. Smirnov,
		{\it Form factors in completely integrable models of quantum field theory,}
		Vol. 14. World Scientific, 1992.
		
		\bibitem{Taki}
		M. Taki, 
		{\it Surface operator, bubbling Calabi-Yau and AGT relation,}
		Journal of High Energy Physics 2011.7 (2011): 47.   
		arXiv:1007.2524.
		
		
		\bibitem{W}
		N. Wyllard,
		{\it $ A_{N-1} $ conformal Toda field theory correlation functions from conformal $ \cals{N} = 2 ~~ SU (N) $ quiver gauge theories,} 
		Journal of High Energy Physics 2009, no. 11 (2009): 002.
		arXiv:0907.2189.
		
		\bibitem{YS}
		Y. Yoshida and K. Sugiyama, 
		{\it Localization of 3d N= 2 supersymmetric theories on $S^1\times D^2$,}
		arXiv:1409.6713.
		
		\bibitem{Z17}
		Y. Zenkevich,
		{\it 3d field theory, plane partitions and triple Macdonald polynomials,}
		Journal of High Energy Physics 2019, no. 6 (2019): 12.
		arXiv:1712.10300
		
		\bibitem{Z18}
		Y. Zenkevich,
		{\it Higgsed network calculus,}
		arXiv:1812.11961.
		
		\bibitem{Z20}
		Y. Zenkevich,
		{\it Mixed network calculus,}
		arXiv:2012.15563.
		
		
		
	\end{thebibliography}
\end{document}